\documentclass{IEEEtran}
\onecolumn
%%%%%%%%%%%%%%%%%%%%%%%%%%%%%%
\usepackage[pdftex]{graphicx}
\usepackage{amsmath}
\usepackage{amssymb}
\usepackage{amsthm}
%%%%%%%%%%%%%%%%%%%%%%%%%%%%%%%

%%%%%%for strikeout on i's in the lower bound
%%http://tex.stackexchange.com/questions/20643/diagonal-strikeout-starting-too-low-and-ending-too-high
\usepackage{cancel}
\usepackage{tikz}
\newcommand{\hc}[1]{%
    \tikz[baseline=(tocancel.base)]{
        \node[inner sep=0pt,outer sep=0pt] (tocancel) {#1};
        \draw[black] (tocancel.south west) -- (tocancel.north east);
    }%
}
%%%%%%%%%%%%%%%%%%%%%%%%%%%%%%%%%%%
%%%%%%%%%%%%%%%%%%%%%%%%%%%%%%%%%%%
\newtheorem{theorem}{Theorem}
\newtheorem{corollary}{Corollary}
\newtheorem{proposition}{Proposition} 
\newtheorem{definition}{Definition}
\newtheorem{remark}{Remark}
\newtheorem{lemma}{Lemma }
\newtheorem{example}{Example}
%%%%%%%%%%%%%%%%%%%%%%%%%%%%%%%%%%%%
\newcommand{\cc}{\mathsf{c}}
\newcommand{\RC}{\ensuremath{R_{\mathrm{CAPM}}}}
%%%%%%%%%%%%%%%%%%%%%%%%%%%%%%%%%%%
\begin{document}

\sloppy

\title{A Rate-Distortion Approach to Index Coding} 
\author{Sinem Unal and Aaron B. Wagner~
\thanks{S.~Unal and A.~B.~Wagner are with  Cornell University, School of Electrical \& Computer Engineering, Ithaca, NY 14853 USA (e-mail: su62@cornell.edu, wagner@ece.cornell.edu). This paper was presented in part at the IEEE Int. Symposium on Information Theory (ISIT), Istanbul, July 2013 and Information Theory and Applications Workshop (ITA), San Diego, February 2014.}
}

\maketitle

\begin{abstract}
We approach index coding as a special case of 
rate-distortion with multiple receivers, each
with some side information about the source.  Specifically,
using techniques developed for the rate-distortion problem, we 
provide two upper bounds and one lower bound on the optimal
index coding rate. The upper bounds involve specific choices
of the auxiliary random variables in the best existing scheme for
the rate-distortion problem.
The lower bound is based on a new lower bound for the 
general rate-distortion problem. The bounds are shown to coincide
for a number of (groupcast) index coding instances, including
all instances for which the number of decoders does not exceed three.
\end{abstract}

\section{Introduction}
  We consider a general version of the \emph{index coding} problem, 
  in which a single encoder observes a vector-valued source, the components  
 of which are i.i.d., uniformly-distributed, binary random variables.    
 There are several decoders, each of which has a subset of the source    
 components as side information, and seeks to losslessly reproduce     
 a disjoint subset of source components. The encoder must broadcast  
 a single message to all of the decoders, which allows all of them   
 to reproduce their desired source components.   
 This problem models media distribution over a rate-constrained  
 downlink in which some of the clients already have some of the  
 media to be distributed, which can happen if they cached    
 prior transmissions from the encoder. We seek to understand     
 what rate is required of the encoder's message when the encoder     
 may code over many i.i.d.\ realizations of the source vector.   
 Specifically, we seek estimates of the minimal rate that are    
 both efficiently computable and provably close to the true  
 minimal rate, at least under some conditions. 
     
The index coding problem has attracted considerable attention   
since it was introduced~(e.g.,~\cite{birk1, birk, shanmugam_local,shanmugam,blasiak,neely,Maleki,arbabjolfaei}), and the formulations
studied vary along at least three different axis. First, one
can impose structure on the demands and the side information.
Birk and Kol's paper~\cite{birk1} introducing the problem
focused on the case in which
each source bit is demanded by exactly one decoder, and if
Decoder $i$ has Decoder $j$'s demand as side information then
Decoder $j$ must have Decoder $i$'s demand as side information.
The problem instance can then be represented as an undirected
graph, with the nodes representing the source bits (or 
equivalently, the decoders) and the edges representing the
side information pattern. Slightly more generally, one
can relax the symmetry assumption to obtain a directed,
instead of undirected, graph. Shanmugam~\emph{et al.}~\cite{shanmugam}
call this \emph{unicast} index coding. We shall consider here 
the more general version in which
each decoder can demand any number of source components
and the demands may be overlapping among the decoders.
Shanmugam~\emph{et al.} call this \emph{groupcast} index coding;
we shall simply call it \emph{index coding}.

A second independent axis along which index coding formulations
vary is whether one allows for coding over time (vector codes)
or whether the code must operate on each time instant separately
(scalar codes). We shall focus on the former here, due to its
intrinsic importance and its connection with rate-distortion
theory. Third, and finally, some works require the decoders to reproduce
their demands with zero error~\cite{birk,shanmugam, blasiak, neely, arbab_14} while others require
that the block-error probability vanish~\cite{ Maleki, arbabjolfaei}. Yet
another possibility is to require that the bit-error probability
vanish. This paper shall focus on the latter two.

Irrespective of the formulation, most work on index coding
views the problem graph-theoretically~\cite{ shanmugam_local, shanmugam, arbab_14}, as in Birk and Kol's original paper.
 One can then lower and upper bound the optimal rate using   
 graph-theoretic quantities such as the independence number,   
 the clique-cover number, fractional clique-cover number (e.g.~\cite{blasiak}),
 the min-rank~\cite{birk} and others~\cite{shanmugam_local,shanmugam, arbab_14}. This approach has  
 proven to be successful for showing the utility of coding over blocks   
 for this problem~\cite{block}, and for showing the utility of 
  nonlinear codes~\cite{nonlinear}.
 Many of these graph-theoretic quantities are known to be NP-hard to     
 compute, however, and for the others there is no apparent polynomial-time   
 algorithm. Thus these bounds are only useful theoretically or when     
 numerically solving small examples. A noteworthy exception is Theorem 2     
 of Blasiak \emph{et al.}, which provides a polynomial-time-computable
 bound
 that is within a nontrivial factor of the optimal rate for arbitrary    
 instances. The factor in question is quite large, however.  
     
 We approach index coding as a special case of the problem of lossy  
 compression with a single encoder and multiple decoders, each with  
 side information. This more general problem was introduced by 
 Heegard-Berger~\cite{berger} (but see Kaspi~\cite{Kaspi:Absent})
 and is sometimes referred to as the \emph{Heegard-Berger problem.}  
 Index coding can be viewed as the special case in which the source,     
 at each time, is a vector of i.i.d.\ uniform bits, the side information     
 of each decoder consists of a subset of the source bits, and the    
 distortion measure for Decoder $i$ is the Hamming distortion    
 between the subset of the source bits that Decoder $i$ seeks    
 to reproduce and Decoder $i$'s reproduction of that subset.     
 We then consider the minimum rate possible so that all of   
 the decoders can achieve zero distortion.   
     
 Viewing the problem in this way allows us to apply tools from   
 network information theory, such as random coding techniques,   
 binning, the use of auxiliary random variables, etc. Using this
 approach, we prove two achievable bounds
 and an impossibility (or ``converse'') bound.
  Both of the achievable bounds
 are built upon the best known achievable bound for the 
   Heegard-Berger problem, which
is due to Timo~\emph{et al.}~\cite{timo}. The Timo~\emph{et al.} scheme
involves an optimization over the joint distribution of a 
large number of auxiliary random variables; we provide
two methods for selecting this distribution, the first
of which is polynomial-time computable but only yields
integer rate bounds, while the second is more complex but
can yield fractional rates.
The achievability
results in this paper are thus unusual in that the emphasis
is on algorithms for selecting the joint distribution of
auxiliary random variables rather than proving new coding
theorems \emph{per se}.
It is worth noting that the Timo~\emph{et al.} 
result is representative of many achievability results
in network information theory that take the form of
optimization problems over the joint distribution of 
auxiliary random variables (e.g.,~\cite{ElGamal}). The task of 
solving these optimization problems has received little 
attention in the literature.\footnote{Indeed, the prospect
that some of these ``single-letter'' optimization problems might 
be intrinsically hard to compute is intriguing and seemingly
unexplored (though see Arikan~\cite{Arikan:packet}).} 

Our impossibility result is related to the 
``degraded-same-marginals'' (DSM) impossibility result for
broadcast channels~\cite{vishwanath, vishwanathP}.
The idea is that the optimal
rate can be computed exactly when the source $S$ and the
side information variables $Y_1, \ldots, Y_m$ can be
coupled in such a way that 
$$
S \leftrightarrow Y_{\sigma(1)} \leftrightarrow Y_{\sigma(2)} \leftrightarrow
  \cdots \leftrightarrow Y_{\sigma(m)},
$$
meaning that the random variables form a Markov chain in this
order, where $\sigma(\cdot)$ is an arbitrary permutation~\cite{timo}.
We call such an instance one with \emph{degraded side information}.
One may then lower bound a given problem by providing (for example),
$Y_1$ to Decoder 2, $Y_1$ and $Y_2$ to Decoder 3, etc., to form
a degraded instance whose optimal rate is only lower than that
of the original problem. We provide a lower bound in this spirit
for the general Heegard-Berger problem
that improves somewhat on that obtained via a direct application
of the above technique. As such, we shall call it the DSM+ lower
bound. When applied to the index coding problem, the DSM+ bound
provides the same conclusion as a lower bound due to 
Blasiak \emph{et al.}, although under slightly weaker
hypotheses.

We use the DSM+ lower bound to show that our low-complexity
achievable bound
equals the optimal rate for any number of source components,
so long as the number of decoders does not exceed 
three.\footnote{Recall that we allow each decoder to demand more
than one source component and each source component to be demanded
by more than one decoder.} In fact, we show the more
general result that the achievable bound equals the optimal
rate for any number of source components and any number of
decoders so long as each source component is present as
side information at all of the decoders, none of the decoders,
all but one of the decoders, or all but two of the decoders.
It is apparent that every problem with three or fewer decoders
must be of this form. We also show that the achievable
bound coincides with the optimal rate when none of the 
source components are ``excess,'' a concept that plays
an important role in our achievable scheme and that shall be defined later.
Our low-complexity achievable scheme bears some resemblance
to the \emph{partition multicast} scheme of Tehrani, Dimakis, and 
Neely~\cite{neely}.
Although our scheme does not subsume \emph{partition
multicast} (which is NP-hard to compute \cite{neely}), we do show
that it is optimal in all explicit instances of the problem 
for which
Tehrani, Dimakis, and Neely show that partition multicast is 
optimal.\footnote{Tehrani~\emph{et al.} also show that partition 
multicast is optimal for the  implicitly-defined class of instances
for which clique cover is optimal.}

Although the paper is focused mainly on index coding, the
results herein also have some significance for the Heegard-Berger
problem. The DSM+ bound, mentioned earlier, is the best general
lower bound for this problem, and our conclusive results for
the index problem represent some of the few nondegraded 
instances of the Heegard-Berger problem for which the 
optimal rate is known (see~\cite{watanabe,timo:complementary,sinem_gauss,sgarro,
timo:gray,timo:conditional} others). This paper 
is also the first work that considers algorithms for
selecting the distribution of the auxiliary random
variables in the Timo~\emph{et al.} scheme.

As noted earlier, this work differs from much of the
literature on index coding by approaching the problem
as one of rate-distortion, or source coding. Some
recent works have also approached the problem as
one of channel coding~\cite{Maleki,arbabjolfaei}, and in particular,
interference alignment. One of the advantages of the
source coding approach espoused here is that it can
readily accommodate richer source models and 
distortion constraints, including sources with
memory, lossy reconstruction of analog sources,
etc. The very formulation of index coding presumes
that the sources have already been compressed down
to i.i.d.\ uniform bits. Thus the index coding is
``separated'' from the underlying compression,
when in fact there might be some advantage to
combining the two, a topic that we shall
consider in subsequent work.

This paper is outlined as follows. Section~\ref{sec:problem_defn} formulates the Heegard-Berger problem and Section~\ref{sec:lower_bound} provides the DSM+ lower bound for it. Section~\ref{sec:problem_defn_index} formulates the index coding problem. Section~\ref{sec:lower_bound_index} and~\ref{sec:upper_bound_index}  provide a lower bound and an upper bound for the problem respectively. Section~\ref{sec:first_scheme}  describes our first scheme of index coding, and Section~\ref{sec:optimality} provides several optimality results for this scheme, including our results for three decoders. Section~\ref{sec:second_scheme} describes our second scheme.

\section{Problem Definition}
\label{sec:problem_defn}
We begin by considering the general form of the Heegard-Berger problem,
as opposed to the index-coding problem in particular.
There is a single encoder with source ${S}$ and there are $m$ decoders.
Decoder $i$ has a side information ${Y_i}$ that in general depends
on ${S}$.
The encoder sends a message at rate $R$ to the decoders, and Decoder $i$ 
wishes to reconstruct the source with a given distortion constraint $D_i$.
The objective is to find the rate distortion tradeoff for this problem setup.
This is made precise via the following definitions.
\begin{definition} 
\label{defn:code}
An $(n,M, D_1, \ldots, D_m)$ \emph{code} consists of mappings
	\begin{align*}
	&f  : \mathcal{S}^{n}  \rightarrow  \{ 1,\ldots, M \} \\
	&g_1 : \{ 1,\ldots, M \}  \times   \mathcal{Y}_1^{n} \rightarrow  								\hat{\mathcal{S}}_1^{n} \\
	&g_2 : \{ 1,\ldots, M \}  \times \mathcal{Y}_2^{n}  \rightarrow  									\hat{\mathcal{S}}_2^{n}\\
	&\vdots \\
	&g_m : \{ 1,\ldots, M \}  \times \mathcal{Y}_m^{n} \rightarrow  								\hat{\mathcal{S}}_m^{n},
	\end{align*}
	\begin{align*}
	 \mathbb{E}\left[\frac{1}{n} 
	\sum_{k = 1}^{n} d({S}_k,{\hat{S}}_{(i)k})\right] \le D_i, \   \forall i \in [m]
	\end{align*}
where $\mathcal{S}$ denotes the source alphabet, $\mathcal{Y}_1,\ldots, \mathcal{Y}_m$ denote the side information alphabets at Decoder $1$ through $m$ and $\hat{\mathcal{S}}_1^{n},\ldots, \hat{\mathcal{S}}_m^{n}$ denote the reconstruction alphabets at Decoder $1$ through $m$ and $d(.,.) \in [0 \ \infty)$ denotes a distortion measure and $[m] = \{1,\ldots,m\}$. Lastly, we call $f$  the \emph{encoding function at the encoder} and $g_i$  the 
\emph{decoding function at Decoder $i$} where $i\in [m]$.
\end{definition}

\begin{definition} 
	A \emph{rate distortion pair} $(R,D)$, where $D=(D_1,\ldots ,D_m)$ is \emph{achievable} if for every $\epsilon > 0$, there exists an 
	$(n,M, D_1 +\epsilon,\ldots ,D_m + \epsilon )$ code such that $n^{-1}\log{M} 	\le R + \epsilon$. 
\end{definition}

\begin{definition} The \emph{rate distortion function} $R(D)$ is defined as
	\begin{align*}
	R(D) =\inf\{R | (R,D) \ \textrm{is} \ \textrm{achievable} \}.
	\end{align*}
\end{definition}

Finding a computable characterization $R(D)$ is a long-standing open 
problem in network information theory. Currently, such a characterization
is only available for a few special cases. Heegard and Berger themselves
\cite{berger} provided one when the side information at the decoders is 
degraded. Watanabe~\cite{watanabe} provided one for the case that the
source consists of two independent components, the distortion constraints
for both decoders are decoupled across the two components, and the side
information at the two decoders is degraded ``in mismatched order'' (see
\cite{watanabe} for the precise setup). Sgarro's result~\cite{sgarro}
implies a characterization for the problem in which two decoders both
wish to reproduce the source losslessly, without any assumption
on their side information (see also~\cite{timo:gray}). 
Timo~\emph{et al.}~\cite{timo:conditional} 
provide a characterization for the two-decoder case when 
one decoder's side information is ``conditionally less noisy''
than the other's and the weaker decoder seeks to losslessly
reproduce a deterministic function of the source. 
Timo~\emph{et al.}~\cite{timo:complementary} solve various two-decoder cases
in which the source consists of two components, say $(X,Y)$,
and one decoder has $X$ as side information and wants to
reconstruct $Y$ while the other has $Y$ as side information
and wants to reconstruct $X$.
The present authors determined the rate distortion region for the two-decoder problem with vector Gaussian sources and side information, subject to a constraint on the error covariance matrices at the two decoders~\cite{sinem_gauss}. Several (nondegraded) special cases in which both decoders
wish to losslessly reproduce a function of the source have been
solved by Laich and Wigger~\cite{laich}. Of course, several instances
of index coding that are not degraded have also been solved.

A general achievable result, i.e.,
an upper bound on $R(D)$, was provided by Heegard and Berger~\cite{berger},
which was corrected and extended by Timo~\emph{et al.}~\cite{timo}. 
We provide a computable lower bound on $R(D)$ for general instances
of the problem in this section. This lower bound will be used later 
in the paper to solve several index coding instances.

\section{Lower Bound for a Rate Distortion Function}
\label{sec:lower_bound}
We start our analysis by providing a lower bound to the general rate distortion problem.
\begin{theorem}[DSM+ Lower Bound]
 \label{theorem_1}
 Let the pmf's $P(S,Y_i)$ for all $i \in [m]$ be given.
$R(D)$ is lower bounded by
 \begin{align}
 \label{lower_general}
 &R_{DSM+}(D) = \max_{\sigma} \sup_{\bar{P}} \bar{R}_{\sigma}(D) 
\end{align}
where 
\begin{align}
\label{eq:Umin}
\bar{R}_{\sigma}(D) =\min_{U_1,\ldots,U_m} & \big{[} I({S}; U_{\sigma(1)}|{Y_{\sigma(1)}})
+ I( {S}; U_{\sigma(2)}| U_{\sigma(1)},{Y_{\sigma(1)}},{Y_{\sigma(2)}}) +\cdots
\\
&+ I( {S}; U_{\sigma(m)}| U_{\sigma(1)},\ldots,U_{\sigma(m-1)},{Y_{\sigma(1)}},\ldots,{Y_{\sigma(m)}}) \big{]} \notag
\end{align}
and
 \\
1) $\sigma(.)$ denotes a permutation on integers $[m]$
\\
2) $\bar{P}=\{ P(S,Y_1, \ldots, Y_m) | \sum_{\\
Y_j : j\neq i  }P(S,Y_1, \ldots, Y_m)$ $= P(S,Y_i), \forall i \in [m]\}$ 
\\
3) $(U_{1},\ldots,U_m)$ is jointly distributed with ${S}, {Y_1},\ldots,{Y_m}$  such that 
%\label{cond:mc}
$$({Y_1},\ldots, {Y_m}) \leftrightarrow {S} \leftrightarrow (U_{1}, \ldots, U_{m})$$
%\label{cond:decode}
and
\\
4) there exist functions $g_1, \ldots, g_m$ such that
\begin{equation}
\label{eq:gfun}
\mathbb{E}[d({S},g_{\sigma(i)}( U_{\sigma(i)}, Y_{\sigma(i)}))] \le D_{\sigma(i)}  \forall i \in [m],
\end{equation}
5) $|U_{\sigma(i)}| \le |\mathcal{S}|\prod^{i-1}_{j=1}|U_{\sigma(j)}| + (m +2- i)$ for all $i \in [m]$.
 \end{theorem}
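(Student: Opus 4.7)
The plan is to establish $R(D) \ge \bar{R}_\sigma(D)$ for an arbitrary fixed permutation $\sigma$ and an arbitrary coupling in the set $\bar{P}$ of joint distributions consistent with the given marginals; the theorem then follows on taking the supremum over $\bar{P}$ and the maximum over $\sigma$. The validity of this reduction rests on the fact that the operational rate distortion function depends on the joint law of $(S, Y_1, \ldots, Y_m)$ only through the pairwise marginals $P(S, Y_i)$, so we are free to impose any consistent joint distribution when proving a lower bound.

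For the core converse, take any $(n, M, D_1+\epsilon, \ldots, D_m+\epsilon)$ code with message $W = f(S^n)$ and reconstructions $\hat{S}_{\sigma(i)}^n = g_{\sigma(i)}(W, Y_{\sigma(i)}^n)$, and start from $nR \ge H(W) \ge I(S^n; W)$. The goal is to telescope this mutual information into $m$ pieces matching the summands of $\bar{R}_\sigma(D)$, building up the enhanced conditioning $(Y_{\sigma(1)}^n, \ldots, Y_{\sigma(i)}^n)$ one block at a time in the order prescribed by $\sigma$. Following the standard Heegard--Berger converse template, for each time index $k \in [n]$ and each $i$, I would define
\begin{equation*}
U_{\sigma(i), k} \;=\; \bigl( W,\; Y_{\sigma(1)}^{n \setminus \{k\}}, \ldots, Y_{\sigma(i-1)}^{n \setminus \{k\}},\; Y_{\sigma(i)}^{n \setminus \{k\}} \bigr),
\end{equation*}
where $Y^{n \setminus \{k\}}$ denotes the length-$(n-1)$ block with the $k$-th coordinate deleted. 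The i.i.d.\ structure of $S^n$ together with the memoryless channels $S \to Y_i$ then yields the per-symbol Markov chain $(Y_{1,k}, \ldots, Y_{m,k}) \leftrightarrow S_k \leftrightarrow (U_{1,k}, \ldots, U_{m,k})$ required by condition~3, and the pair $(U_{\sigma(i),k}, Y_{\sigma(i),k})$ recovers $(W, Y_{\sigma(i)}^n)$ by construction, so the single-letter reconstruction $\hat{S}_{\sigma(i),k}$ is a function of $(U_{\sigma(i),k}, Y_{\sigma(i),k})$ alone. This is precisely the ``$+$'' improvement over the ordinary degraded-side-information argument, in which the reconstruction would be permitted to depend on the full enhanced side information.

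Standard time sharing via an auxiliary $Q$ uniform on $[n]$ and independent of everything else, with $U_{\sigma(i)} := (U_{\sigma(i), Q}, Q)$, turns the per-coordinate bound into the single-letter inequality $R \ge \bar{R}_\sigma(D)$; the expected distortion passes through each $g_{\sigma(i)}$ by averaging the per-symbol distortions, establishing condition~4. The cardinality bound of condition~5 comes from an iterated Carath\'eodory / support-lemma argument: at stage $i$, conditioning on $U_{\sigma(1)}, \ldots, U_{\sigma(i-1)}$, one prunes the alphabet of $U_{\sigma(i)}$ while preserving the conditional joint distribution on $\mathcal{S} \times \prod_{j<i} \mathcal{U}_{\sigma(j)}$ together with the relevant distortion and subsequent conditional-mutual-information values.

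I expect the main obstacle to lie in arranging the chain-rule expansion so that each of the $m$ summands has \emph{exactly} the enhanced conditioning $(Y_{\sigma(1)}, \ldots, Y_{\sigma(i)})$ on the inner (conditioning) side of the mutual information, while the outer reconstruction is tied to the original $Y_{\sigma(i)}$ alone. This asymmetry is what separates DSM$+$ from the plain degraded-side-information bound, and it forces the asymmetric choice of $U_{\sigma(i),k}$ displayed above; a related subtlety is verifying that the cross-terms introduced when invoking the chain rule telescope cleanly across the index $i$ rather than leaving behind residual mutual-information terms that would erode the bound.
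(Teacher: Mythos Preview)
Your proposal is essentially the paper's approach: fix $\sigma$ and a coupling in $\bar P$, start from $H(J)\ge I(S^n,Y_1^n,\ldots,Y_m^n;J)$ (which equals your $I(S^n;W)$ since $W=f(S^n)$), single-letterize with auxiliaries built from $W$ and the ``other-coordinates'' side-information blocks, time-share, and finish with an iterated support lemma for the cardinalities. Two small points are worth flagging.

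First, the paper's auxiliary is slightly leaner than yours: it takes $U_{j,k}=(J,\,Y_j^{\,n\setminus\{k\}})$, i.e.\ it does \emph{not} accumulate the earlier side-information blocks $Y_{\sigma(1)}^{\,n\setminus\{k\}},\ldots,Y_{\sigma(i-1)}^{\,n\setminus\{k\}}$ inside $U_{\sigma(i),k}$. In the $i$-th mutual-information term those earlier blocks are already present in the conditioning via $U_{\sigma(1)},\ldots,U_{\sigma(i-1)}$, so your choice is equivalent for the bound; but the paper's choice makes the $U_j$'s independent of the permutation $\sigma$, which is what underlies their remark that the $\min$ over $U$'s could in principle be pulled outside the $\max$ over $(\sigma,\bar P)$.

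Second, on the obstacle you anticipate: the chain-rule expansion does not ``cancel'' cross-terms. The paper expands $I(S^n,Y_1^n,\ldots,Y_m^n;J)$ by the chain rule into $I(Y_1^n;J)+I(Y_2^n;J\mid Y_1^n)+\cdots+I(S^n;J\mid Y_1^n,\ldots,Y_m^n)$, then repeatedly (i) splits the last surviving $S$-term by the chain rule, (ii) merges it with the adjacent $Y$-term, and (iii) \emph{discards} a nonnegative $I(Y_{j,k};\cdot\mid\cdot)$ piece. So the telescoping proceeds by monotone weakening, not cancellation, and the structure you are looking for falls out after $m-1$ such combine-and-drop steps. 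Finally, the paper also needs a short right-continuity lemma for $R_{\mathrm{DSM+}}(D+\epsilon\mathbf 1)$ to pass from the $(D+\epsilon)$-code to the claimed bound at $D$; this is routine once the cardinality bounds are in place, but it is an ingredient your outline omits.
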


The idea behind the proof was described in the introduction.
Note that since the optimal rate only depends on the source and
side information through the ``marginals''
$$
(S,Y_i) \quad i \in [m],
$$
we may couple the $Y_i$ variables to form a joint distribution
$\bar{P}(S,Y_1,\ldots,Y_m)$ as we please, leading to the inner
optimization in~(\ref{lower_general}). Also note that a
direct application of the DSM idea would yield the weaker
bound in which $\cup_{j \le i} Y_{\sigma(j)}$ appears as
an argument to $g_{\sigma(i)}$ in~(\ref{eq:gfun}).
  
 \begin{proof}[Proof of Theorem \ref{theorem_1}]
 Let $P(S,Y_i)$ for all $i \in [m]$ be given and let permutation $\sigma(i) = i$ for all $i \in [m]$.
Let $(R,D)$ be an achievable rate distortion pair and  $\epsilon > 0$. 
Then there exists a $(n,M,D_1 +\epsilon,\ldots,D_m +\epsilon)$ code for some $n$ such that $\log M \le n(R + \epsilon)$.  

We can write,
\begin{align}
n(R + \epsilon) &\ge H(J) \notag \\
\label{chain_eq1}
&\ge I({S^n}, {Y^n_1},\ldots, {Y^n_m}; J) 
\end{align}
where $J$ is the output of the encoder, ${Y^n_1} =({Y_{11}},\ldots,{Y_{1n}})$ (for the ease of notation we drop the parentheses around the index of the random variable unless it causes ambiguity) and  ${Y_{1\hc{i}}}$ denotes all ${Y^n_1}$
but ${Y_{1i}}$. Then if we apply the chain rule to  $I({S^n}, {Y^n_1},\ldots, {Y^n_m}; J)$,  right hand side of (\ref{chain_eq1}) equals
\begin{align}
\label{chain_eq2}
& I( {Y^n_1}; J) + I( {Y^n_2}; J|{Y^n_1})+ \cdots 
 + I( {Y^n_m}; J|{Y^n_1},\ldots,{Y^n_{m-1}})+I( {S^n}; J|{Y^n_1},\ldots,{Y^n_{m}}) 
 \\
&\ge  I( {Y^n_2}; J|{Y^n_1})+ \cdots + I( {Y^n_m}; J|{Y^n_1},\ldots,{Y^n_{m-1}}) +I( {S^n}; J|{Y^n_1},\ldots,{Y^n_{m}}) 
\notag \\
\label{first_eq}
&\overset{a}{\ge} \sum^{n}_{i=1} \big{[} I( {Y_{2i}}; J,{Y_{1\hc{i}}}|{Y_{1i}})+ \cdots 
+ I( {Y_{mi}}; J,{Y_{1\hc{i}}},\ldots,{Y_{(m-1)\hc{i}}}|{Y_{1i}},\ldots,{Y_{(m-1)i}})
+I( {S_i}; J,{Y_{1\hc{i}}},\ldots,{Y_{m\hc{i}}}|{Y_{1i}},\ldots,{Y_{mi}}) \big{]}
\\
&\overset{b}{=} \sum^{n}_{i=1} \big{[} I( {Y_{2i}}; J,{Y_{1\hc{i}}}|{Y_{1i}})+ \cdots 
+ I( {Y_{mi}}; J,{Y_{1\hc{i}}},\ldots,{Y_{(m-1)\hc{i}}}|{Y_{1i}},\ldots,{Y_{(m-1)i}})
 + I( {S_i}; J,{Y_{1\hc{i}}},\ldots,{Y_{(m-1)\hc{i}}}|{Y_{1i}},\ldots,{Y_{mi}})  
\notag \\
&\quad + I( {S_i}; {Y_{m\hc{i}}}|J,{Y_{1\hc{i}}},\ldots,{Y_{(m-1)\hc{i}}},{Y_{1i}},\ldots,{Y_{mi}}) \big{]}
\notag 
\end{align}
where $a$ is obtained by the chain rule and 
$b$ is due to the chain rule applied to the last term. When we combine the second-to-last and third-to-last term above, we get
\begin{align}
&n(R + \epsilon) 
\notag \\
&\ge \sum^{n}_{i=1} \big{[} I( {Y_{2i}}; J,{Y_{1\hc{i}}}|{Y_{1i}})+ \cdots 
+ I({Y_{(m-1)i}}; J,{Y_{1\hc{i}}},\ldots,{Y_{(m-2)\hc{i}}}|{Y_{1i}},\ldots,{Y_{(m-2)i}})  
\notag \\
&\quad +I( {S_i}, {Y_{mi}}; J,{Y_{1\hc{i}}},\ldots,{Y_{(m-1)\hc{i}}}|{Y_{1i}},\ldots,{Y_{(m-1)i}}) 
+ I( {S_i}; {Y_{m\hc{i}}}|J,{Y_{1\hc{i}}},\ldots,{Y_{(m-1)\hc{i}}},{Y_{1i}},\ldots,{Y_{mi}}) \big{]} 
\notag \\
&\ge \sum^{n}_{i=1} \big{[} I( {Y_{2i}}; J,{Y_{1\hc{i}}}|{Y_{1i}})+ \cdots 
+ I({Y_{(m-1)i}}; J,{Y_{1\hc{i}}},\ldots,{Y_{(m-2)\hc{i}}}|{Y_{1i}},\ldots,{Y_{(m-2)i}}) 
\notag \\
\label{second_eq}
&\quad + I( {S_i}; J,{Y_{1\hc{i}}},\ldots,{Y_{(m-1)\hc{i}}}|{Y_{1i}},\ldots,{Y_{(m-1)i}}) 
 + I( {S_i}; {Y_{m\hc{i}}}|J,{Y_{1\hc{i}}},\ldots,{Y_{(m-1)\hc{i}}},{Y_{1i}},\ldots,{Y_{mi}}) \big{]}.
\end{align}

Now we apply the chain rule on the second-to-last term, giving
\begin{align}
&n(R + \epsilon) 
\notag \\
&\ge \sum^{n}_{i=1} \big{[} I( {Y_{2i}}; J,{Y_{1\hc{i}}}|{Y_{1i}})+ \cdots 
+ I({Y_{(m-1)i}}; J,{Y_{1\hc{i}}},\ldots,{Y_{(m-2)\hc{i}}}|{Y_{1i}},\ldots,{Y_{(m-2)i}}) 
 \notag \\
& \quad +I( {S_i}; J,{Y_{1\hc{i}}},\ldots,{Y_{(m-2)\hc{i}}}|{Y_{1i}},\ldots,{Y_{(m-1)i}}) 
+I( {S_i}; {Y_{(m-1)\hc{i}}}|J,{Y_{1\hc{i}}},\ldots,{Y_{(m-2)\hc{i}}},{Y_{1i}},\ldots,{Y_{(m-1)i}}) 
 \notag \\
 \label{ineq:eqn}
& \quad + I( {S_i}; {Y_{m\hc{i}}}|J,{Y_{1\hc{i}}},\ldots,{Y_{(m-1)\hc{i}}},{Y_{1i}},\ldots,{Y_{mi}}) \big{]} 
\end{align}
\begin{align}
 &\overset{c}{=}\sum^{n}_{i=1} \big{[} I( {Y_{2i}}; J,{Y_{1\hc{i}}}|{Y_{1i}})+ \cdots 
+ I({Y_{(m-2)i}}; J,{Y_{1\hc{i}}},\ldots,{Y_{(m-3)\hc{i}}}|{Y_{1i}},\ldots,{Y_{(m-3)i}})  
\notag \\
& \phantom{\overset{d}{=}}+I( {S_i},{Y_{(m-1)i}}; J,{Y_{1\hc{i}}},\ldots,{Y_{(m-2)\hc{i}}}|{Y_{1i}},\ldots,{Y_{(m-2)i}}) 
+I( {S_i}; {Y_{(m-1)\hc{i}}}|J,{Y_{1\hc{i}}},\ldots,{Y_{(m-2)\hc{i}}},{Y_{1i}},\ldots,{Y_{(m-1)i}}) 
 \notag \\
&\phantom{\overset{d}{=}}+ I( {S_i}; {Y_{m\hc{i}}}|J,{Y_{1\hc{i}}},\ldots,{Y_{(m-1)\hc{i}}},{Y_{1i}},\ldots,{Y_{mi}}) \big{]} 
\notag \\
&\ge \sum^{n}_{i=1} \big{[} I( {Y_{2i}}; J,{Y_{1\hc{i}}}|{Y_{1i}})+ \cdots 
+ I({Y_{(m-2)i}}; J,{Y_{1\hc{i}}},\ldots,{Y_{(m-3)\hc{i}}}|{Y_{1i}},\ldots,{Y_{(m-3)i}}) 
\notag \\
&\phantom{\overset{d}{=}}+I( {S_i}; J,{Y_{1\hc{i}}},\ldots,{Y_{(m-2)\hc{i}}}|{Y_{1i}},\ldots,{Y_{(m-2)i}}) 
+I( {S_i}; {Y_{(m-1)\hc{i}}}|J,{Y_{1\hc{i}}},\ldots,{Y_{(m-2)\hc{i}}},{Y_{1i}},\ldots,{Y_{(m-1)i}}) 
\notag \\
 \label{third_eq}
&\phantom{\overset{d}{=}}+ I( {S_i}; {Y_{m\hc{i}}}|J,{Y_{1\hc{i}}},\ldots,{Y_{(m-1)\hc{i}}},{Y_{1i}},\ldots,{Y_{mi}}) \big{]} 
\end{align}
where, $c$ is obtained by  combining third-to-last and fourth-to-last terms in (\ref{ineq:eqn}).
Note that (\ref{second_eq}) is obtained from (\ref{first_eq}) by applying a series of chain rules and term combinations.  If we continue this procedure as we did while obtaining (\ref{third_eq}) from  (\ref{second_eq}), we get

\begin{align}
&R + \epsilon  \notag \\
&\ge \frac{1}{n}\sum^{n}_{i=1} \big{[} I( {S_{i}}; J,{Y_{1\hc{i}}}|{Y_{1i}})+ 
I( {S_i}; {Y_{2\hc{i}}}|J,{Y_{1\hc{i}}},{Y_{1i}},{Y_{2i}}) +\cdots
\label{fourth_eq}
+ I( {S_i}; {Y_{m\hc{i}}}|J,{Y_{1\hc{i}}},\ldots,{Y_{(m-1)\hc{i}}},{Y_{1i}},\ldots,{Y_{mi}}) \big{]} 
 \\
&\overset{a}{=} \frac{1}{n}\sum^{n}_{i=1} \big{[} I( {S_{i}}; U_{1i}|{Y_{1i}})+ 
I( {S_i}; U_{2i}| U_{1i},{Y_{1i}},{Y_{2i}}) +
 \cdots+ I( {S_i}; U_{mi}| U_{1i},\ldots,U_{(m-1)i},{Y_{1i}},\ldots,{Y_{mi}}) \big{]}
\notag \\
&\overset{b}{=} \frac{1}{n}\sum^{n}_{i=1} \big{[} I( {S_{i}}; U_{1i}|{Y_{1i}}, T=i)
 + I( {S_i}; U_{2i}| U_{1i},{Y_{1i}},{Y_{2i}}, T=i) +\cdots 
+ I( {S_i}; U_{mi}| U_{1i},\ldots,U_{(m-1)i},{Y_{1i}},\ldots,{Y_{mi}}, T=i) \big{]} 
\notag \\
\label{eq:lower1}
&\overset{c}{=}  I( {S}; U_{1}|{Y_{1}})+ 
I( {S}; U_{2}| U_{1},{Y_{1}},{Y_{2}}) +\cdots
+ I( {S}; U_{m}| U_{1},\ldots,U_{(m-1)},{Y_{1}},\ldots,{Y_{m}})
\end{align}
 where,
 \\ 
a : $U_{ji} = (J, Y_{j\hc{i}})$, $\forall j \in [m]$ 
\\
b : $T$ is a random variable uniformly distributed on $[m]$ and independent of all source and side information variables and the $U_{ji}$s.  
\\
c: Relabel $(U_i, T)$ as $U_i$ for all $i \in [m]$.
\\

Note that $(U_1, \ldots, U_m)$ satisfies 
%(\ref{cond:mc}) and (\ref{cond:decode}). 
conditions 3) and 4). By Lemma \ref{lemma:cardinality} in the Appendix, we can obtain the cardinality bounds on $(U_1,\ldots, U_m)$ as in condition 5).
Then we minimize the right hand side of (\ref{eq:lower1}) over $(U_1,\ldots, U_m)$.  Since the problem can be described  by specifying only the marginal $P(S,Y_i)$'s, we optimize it over the set of joint distributions such that  the marginal $P(S,Y_i)$'s are the same. This gives us a  lower bound to $R(D)$.
Lastly note that, we fixed the permutation as $\sigma(i) =i$ for all $i \in [m]$ and to get (\ref{chain_eq2}), we applied the chain rule to (\ref{chain_eq1})
in the following order. We started with ${Y^n_{1}}$ then continued
with ${Y^n_{2}},\ldots,{Y^n_{m}}$ and lastly we had ${S^n}$.  Since we have $m$ decoders with side information, we can get $m!$ different permutations.  Hence, applying a similar procedure to all permutations, we get $m!$ lower bounds. By taking their maximum, we obtain a lower bound, $R_{DSM+}(D+\epsilon\mathbf{1})$, where $\mathbf{1}$ denotes the $m\times 1$ vector with all components $1$.

Hence we have,
\begin{equation}
\label{eq:lower_epsilon}
R(D) \geq  R_{DSM+}(D+\epsilon\mathbf{1}) -\epsilon. 
\end{equation}
\begin{lemma}
\label{lemma:continuity}
$R_{DSM+}(D + \epsilon\mathbf{1})$ of Theorem \ref{theorem_1} is continuous in $\epsilon$ from the right at $\epsilon = 0$.
\end{lemma}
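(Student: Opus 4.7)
The plan combines a monotonicity observation with a compactness argument based on the cardinality bound in condition 5). For monotonicity, enlarging any $D_i$ enlarges the feasible set of $(U_1,\ldots,U_m,g_1,\ldots,g_m)$ in the minimum of~(\ref{eq:Umin}), and this is preserved by $\sup_{\bar P}$ and $\max_\sigma$. Hence $R_{DSM+}(D+\epsilon\mathbf 1)$ is non-increasing in $\epsilon\ge 0$, giving $\limsup_{\epsilon\downarrow 0}R_{DSM+}(D+\epsilon\mathbf 1)\le R_{DSM+}(D)$ immediately. For the matching $\liminf$ inequality, since the outer maximum is over finitely many permutations it suffices to treat each $\sigma$ separately, i.e.\ to show $\liminf_{\epsilon\downarrow 0}F_\sigma(\epsilon)\ge F_\sigma(0)$ where $F_\sigma(\epsilon):=\sup_{\bar P}\bar R_\sigma(D+\epsilon\mathbf 1)$. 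Given $\delta>0$, I would pick $\bar P^*$ witnessing the sup at $\epsilon=0$ within $\delta$; then $F_\sigma(\epsilon)\ge \bar R_\sigma(D+\epsilon\mathbf 1)$ when the inner min is evaluated at $\bar P^*$, reducing the task to $\liminf_{\epsilon\downarrow 0}\bar R_\sigma(D+\epsilon\mathbf 1)\ge \bar R_\sigma(D)$ with $\bar P^*$ held fixed.

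Here I invoke compactness. With $\bar P^*$ fixed, condition 5) bounds the alphabets of $U_1,\ldots,U_m$ uniformly in $\epsilon$, so the kernels $P(U_1,\ldots,U_m\mid S)$ live in a compact subset of a finite-dimensional simplex; the reconstruction functions $g_i$ live in a finite set, since the reconstruction alphabets are finite in the setting of interest. For any $\epsilon_n\downarrow 0$, pick a minimizer $(U^n,g^n)$ for $\bar R_\sigma(D+\epsilon_n\mathbf 1)$; attainment follows from compactness of the feasible set (the Markov constraint 3) is closed, the distortion constraint 4) is closed for each $g$) and continuity of the objective. Passing to a subsequence, the joint laws of $(S,Y_1,\ldots,Y_m,U^n_1,\ldots,U^n_m)$ converge to some $(S,Y_1,\ldots,Y_m,U^*_1,\ldots,U^*_m)$ and $g^n$ is eventually constant equal to some $g^*$. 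Continuity of the distortion expectation in the joint distribution makes $(U^*,g^*)$ feasible at $D$, and continuity of the conditional mutual informations in the joint distribution makes the sum in~(\ref{eq:Umin}) converge. Since $(U^*,g^*)$ is feasible at $D$, the limit is at least $\bar R_\sigma(D)$, which is the needed $\liminf$ inequality.

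The main obstacle is ensuring attainment and continuity of the inner minimum, which is precisely where condition 5) does its work: without its uniform cardinality bound the search space for $(U_1,\ldots,U_m)$ would be unbounded in the dimension of the underlying simplex and a separate truncation argument would be needed to restore compactness. Modulo this, the proof is a routine upper-hemicontinuity exercise. It is nonetheless worth writing out carefully because Theorem~\ref{theorem_1} is subsequently applied with $D=0$ in the index-coding setting, and the zero-distortion boundary is exactly where a discontinuity in $D\mapsto R_{DSM+}(D)$ could in principle arise.
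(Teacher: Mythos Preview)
Your approach is essentially the same as the paper's: monotonicity handles the $\limsup$ direction, the finite $\max_\sigma$ is dealt with componentwise, the $\sup_{\bar P}$ is handled by fixing a near-optimal $\bar P^*$, and the inner minimum is shown to be right-continuous via the compactness afforded by the cardinality bounds in condition~5) together with a subsequence-extraction argument. The paper organizes the steps in a slightly different order (first proving right-continuity of $\bar R_\sigma$ at fixed $\bar P$, then lifting to the supremum), but the logical content is identical; if anything, you are a bit more explicit than the paper about why the reconstruction functions $g_i$ can be handled in the compactness argument.
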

\begin{proof}[Proof of Lemma \ref{lemma:continuity}]
Since finite maximum of  functions that are continuous from the right is also continuous from the right, it is enough to prove that $\sup_{\bar{P}}\bar{R}_{\sigma}(D+\epsilon\mathbf{1})$ is continuous in $\epsilon$ from right.

First we show that for a given joint distribution $P(S,Y_1,\ldots,Y_m)$, $\bar{R}_{\sigma}(D+\epsilon\mathbf{1})$ is continuous in $\epsilon$ from right for a given permutation $\sigma(.)$.
\\
Let  $\epsilon_k$ be a monotonically decreasing sequence converging to $0$
and let $ U_1(D+\epsilon_k\mathbf{1}),\ldots, U_m(D+\epsilon_k\mathbf{1})$ denote an optimal $(U_1,\ldots,U_m)$ which gives $\bar{R}_{\sigma}(D+\epsilon_k \mathbf{1})$. 
Since the cardinalities of $(U_1,\ldots,U_m)$  are finite, together with the conditions 3) and 4),
%in (\ref{cond:mc}) and (\ref{cond:decode})
 we have an optimization over a compact set. Then,  we can find a convergent subsequence $\epsilon_{s_k}$ such that 
$U_1(D+\epsilon_{s_k} \mathbf{1}),\ldots, U_m(D+\epsilon_{s_k} \mathbf{1})$ converges to a $(U_1,\ldots,U_m)$ which is feasible for the distortion $D$. Hence we have
\begin{align*}
\liminf_{\epsilon \rightarrow 0} \bar{R}_{\sigma}(D + \epsilon\mathbf{1}) \ge \bar{R}_{\sigma}(D).
\end{align*}

Also, since $\bar{R}_{\sigma}(D )$ is non increasing function of $D$ we can write 
\begin{align*}
\limsup_{\epsilon \rightarrow 0} \bar{R}_{\sigma}(D + \epsilon\mathbf{1}) \le \bar{R}_{\sigma}(D),
\end{align*}
concluding that $\bar{R}_{\sigma}(D+\epsilon\mathbf{1})$ is continuous in $\epsilon$ from the right for a given permutation $\sigma(.)$.

Now we show that $\sup_{\bar{P}}\bar{R}_{\sigma}(D+\epsilon\mathbf{1})$ is continuous in $\epsilon$ from the right.
Let us temporarily write $\bar{R}_{\sigma}(D)$ as $\bar{R}_{\sigma}(\bar{P},D)$ to indicate the dependence on $\bar{P}$.
Then for any $\epsilon > 0$ and any $\bar{P}$, we have
$\bar{R}_{\sigma}(\bar{P},D + \epsilon\mathbf{1}) \le \bar{R}_{\sigma}(\bar{P},D)$. Hence we have
\begin{align}
\label{eq:max_1}
\limsup_{\epsilon \rightarrow 0} \sup_{\bar{P}} \bar{R}_{\sigma}(\bar{P},D + \epsilon\mathbf{1}) \le \sup_{\bar{P}} \bar{R}_{\sigma}(\bar{P},D).
\end{align}

Now, we fix any $\delta >0$. Then there exists $P'$ such that
$\bar{R}_{\sigma}(P',D)\ge \sup_{\bar{P}} \bar{R}_{\sigma}(\bar{P},D) - \frac{\delta}{2}$. Let $\epsilon >0$ be such that  
$\bar{R}_{\sigma}(P',D + \epsilon\mathbf{1} )\ge \bar{R}_{\sigma}(P',D ) - \frac{\delta}{2}$. Then for any $0 < \epsilon'< \epsilon$ we can write
\begin{align*}
 \sup_{\bar{P}} \bar{R}_{\sigma}(\bar{P},D)& \le \bar{R}_{\sigma}(P',D) +\frac{\delta}{2}
 \\
 &\le \bar{R}_{\sigma}(P',D + \epsilon'\mathbf{1} ) + \delta
 \\
  &\le \sup_{\bar{P}}\bar{R}_{\sigma}(\bar{P},D + \epsilon'\mathbf{1} ) + \delta,
\end{align*}
implying that 
\begin{align}
\label{eq:max_2}
 \sup_{\bar{P}} \bar{R}_{\sigma}(\bar{P},D) - \delta  \le \liminf_{\epsilon' \rightarrow 0}\sup_{\bar{P}}\bar{R}_{\sigma}(\bar{P},D + \epsilon'\mathbf{1} ).
\end{align}
Since $\delta > 0$ was arbitrary, (\ref{eq:max_1}) and (\ref{eq:max_2}) give the result.
\end{proof}
By Lemma \ref{lemma:continuity}, when $\epsilon$ goes to $0$, (\ref{eq:lower_epsilon}) becomes

\begin{align}
\label{lowerb_1}
R(D) \geq  R_{DSM+}(D).
\end{align}
\end{proof}

%%%%%%%%%%%%%%%%%%%%%%%%%%%%%%%%%%%%%%%
\begin{remark}
\label{remark:minmax_lb}

Since the proof constructs the $U_i$ variables in a way that
does not depend on the permutation $\sigma(\cdot)$ or the
coupling $\bar{P}$, one could state the bound with the
minimum over $U_1,\ldots,U_m$ outside the maximum over
$\sigma(\cdot)$ and $\bar{P}$. This complicates the
proof of the cardinality bounds in 5), however, and the
maxmin form of the bound is sufficient for the purposes
of this paper, so we shall defer consideration of this
potential strengthening to a later work.
\end{remark}
%%%%%%%%%%%%%%%%%%%%%%%%%%%%%%%%%%%%%%%
Next, we  turn our focus to the \textit{index coding} problem which can be viewed as a special case of the Heegard-Berger problem.
%%%%%%%%%%%%%%%%%%%%%%%%
\section{index coding : Problem Formulation}
\label{sec:problem_defn_index}
 For the $m$ user index coding problem, each decoder $\alpha$ wants to reconstruct 
 $\mathbf{f_\alpha}$, which is an arbitrary subset of the source $\mathbf{S}$, that is, a collection of i.i.d. Bernoulli($\frac{1}{2}$) bits at the encoder. There may be \textit{overlapping demands}, i.e., more than one decoder may demand the same bit.  Also, each Decoder $\alpha$ has side information $\mathbf{Y_\alpha}$ consisting of an arbitrary subset of the source.  We assume that decoders do not demand a component of their own side information since they already have it, and we assume that $\mathbf{Y_\alpha} \neq \mathbf{Y_\beta}$, for all $\alpha \neq \beta$ since we can combine two decoders if they have the same side information. We may also assume that every source bit is demanded by at least one decoder, for otherwise that bit may be completely purged from the system.

Let $\mathbf{S_{J}}$ denote the part of the source which each decoder in a subset $J$ of $[m]$ does not have 
and all decoders in $[m]\backslash J$ have as side information. 
If $J=\{ \alpha\}$, i.e., a singleton, then for  ease of notation we use $\mathbf{S_\alpha}$ instead of $\mathbf{S_{\{\alpha\}}}$. Since there are $m$ decoders, we group the elements of $\mathbf{S}$ into $2^m$ disjoint sets such that $\mathbf{S}=\cup_{J\subseteq[m]}\mathbf{S_J}$.
Note that  each $\mathbf{S_J}$ may be empty, may consist of a single bit, or may consist of multiple bits. 

 Let $G_0=\mathbf{S_{[m]}}$ denote the elements of the source that none of the decoders have, $G_m=\mathbf{S_\emptyset}$ denote the elements all decoders have, $G_{m-1}=\cup_{\alpha \in [m]}\mathbf{S_\alpha}$ denote elements that $m-1$ of the decoders have, $G_{m-2}=\cup_{
  \begin{tiny}
 \begin{aligned}
 &\{\alpha,\beta\} \subseteq [m]\\
 &\alpha \neq \beta
 \end{aligned}
  \end{tiny}
 }\mathbf{S_{\{\alpha,\beta\}}}$ denote elements that $m-2$ of the decoders have and so on. To ease the notation for the rest of the paper,  whenever we write a set $\{\alpha,\beta \}$, we assume $\alpha\neq \beta$ unless otherwise stated. Then $\mathbf{S}$ can be represented as $\mathbf{S}=\{G_0,G_m,G_{m-1},\ldots,G_1\}$, as shown in Fig.~\ref{fig:setup}.  

\begin{figure}[t]
  \begin{center}
 \includegraphics[scale=0.35]{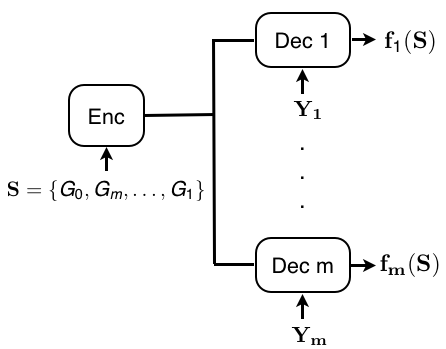}
\caption{Index coding with $m$ users.}
   \label{fig:setup}
 \end{center}
\end{figure}

The demand $\mathbf{f_\alpha}$ at Decoder $\alpha$ can be written in terms of components $\mathbf{S_J}$ of $\mathbf{S}$. For this, we introduce the following notation.

 Let $f_{IJ}$ denote the demand that is a subset of source $\mathbf{S_J}$ and is required by each decoder in a subset $I$ of $[m]$ and by no decoders in $[m]\backslash I$. If $I=\{\alpha\}$, then for ease of notation we use $f_{\alpha J}$ instead of $f_{\{\alpha\} J}$.
We will generally assume that $I \subseteq J$ since only decoders in $J$ may have a demand about $\mathbf{S_J}$ and decoders in $[m] \backslash J$ already have $\mathbf{S_J}$ as side information. If $I \not\subseteq J$, $f_{IJ}$ is empty.  Also, $f_{IJ}$ and $f_{KJ}$ are independent (i.e., $f_{IJ} \perp f_{KJ}$) for all possible choices of $I$, $K$ and $J$ with $I \neq K$ since $f_{IJ} \cap f_{KJ}= \emptyset$ unless $I=K$. Lastly, each $f_{IJ}$ may be empty, a single bit or may consist of multiple bits.

We have written the source as 
$\mathbf{S} =\{G_0,G_m,G_{m-1}$,
$\ldots,G_1\}$ and the demands in terms of $\mathbf{S_J}$'s. From now on, we consider an ordered set structure on $\mathbf{S}$ which naturally induces orders on $\mathbf{S_J}$'s. Then each demand $f_{IJ}$ is also an ordered set that can also be viewed as a vector. In fact, we shall find it convenient to view $f_{IJ}$, $\mathbf{S}$, and other similar quantities at times as sets and at times as vectors.

Since this problem can be considered as a special case of the Heegard-Berger problem, we use a similar definition for the code except for the distortion.  Specifically, we consider block error probabilities instead of the distortion constraints stated in Definition \ref{defn:code}. Hence, we use the following definitions for the code, error and optimal rate.

\begin{definition} Let $\mathcal{S}$ denote the alphabet of $\mathbf{S}$. An $(n,M)$ \emph{code} consists of mappings
\begin{align*}
&f  : \mathcal{S}^{n}  \rightarrow  \{ 1,\ldots, M \} \\
&g_1 : \{ 1,\ldots, M \}  \times  \mathcal{Y}_1^{n} \rightarrow  \mathcal{F}_1^{n}  \\
&g_2 : \{ 1,\ldots, M \}  \times  \mathcal{Y}_2^{n} \rightarrow  \mathcal{F}_2^{n} \\
&\vdots \\
&g_m : \{ 1,\ldots, M \}  \times \mathcal{Y}_m^{n} \rightarrow  \mathcal{F}_m^{n} ,
\end{align*}
where we  $f$  denotes the encoding function at the encoder and $g_\alpha$  denotes the 
decoding function at Decoder $\alpha$ where $\alpha \in [m]$.
\end{definition}

 \begin{definition} The \emph{probability of error} for a given code is defined as
\begin{align*}
P_e= Pr\{& g_1(f(\mathbf{S^n}),\mathbf{Y_1^n}){\neq}\mathbf{f_1^n}(\mathbf{S^n}) 
\cup {g_2}(f(\mathbf{S^n}),\mathbf{Y_2^n}){\neq}\mathbf{f_2^n}(\mathbf{S^n})
,\ldots,
\cup  g_m(f(\mathbf{S^n}),\mathbf{Y_m^n}){\neq}\mathbf{f_m^n}(\mathbf{S^n})\}.
\end{align*}
\end{definition}

Then achievability and optimal rate can be defined as follows.
\begin{definition} 
The rate $R$ is \emph{achievable} if 
there exists a sequence of  
$(n,M)$ codes with rate $n^{-1}\log M \le R$ such that the probability of 
error, $P_e$, tends to zero as $n$ tends to infinity.
\end{definition}

 \begin{definition} The \emph{optimal rate} $R_{opt}$ is defined as
\begin{align*}
R_{opt} =\inf\{R | R \ \textrm{is} \ \textrm{achievable} \}.
\end{align*}
\end{definition}

We shall call the problem defined in this section \emph{index coding},
although most existing work on index coding requires the code to 
achieve zero, as opposed to vanishing, block error~\cite{shanmugam,  blasiak,neely}. In support
of the definitions adopted here, see~\cite{Maleki,arbabjolfaei} for works that
use vanishing block error probability and~\cite{langberg} for
results connecting the two formulations.

%%%%%%%%%%%%%%%%%%%%%%%%%%%%%%%%%%%%
%%%%%%%%%%%%%%%%%%%%%%%%%%%%%%%%%%%%%%
\section{Lower Bound for Index Coding}
\label{sec:lower_bound_index}
The next theorem gives a lower bound to the index coding problem using the DSM+ lower bound from Section \ref{sec:lower_bound}.
\begin{theorem}
\label{thm:lower_opt}
The optimal rate of the index coding problem is lower bounded by
\begin{align}
R_{DSM+} & = \max_{\sigma}\big[ H(\mathbf{f_{\sigma(1)}}|\mathbf{Y_{\sigma(1)}})
+ H( \mathbf{f_{\sigma(2)}}|\mathbf{f_{\sigma(1)}},\mathbf{Y_{\sigma(1)}},\mathbf{Y_{\sigma(2)}}) 
+\cdots+
\label{eqn:index_low}
H( \mathbf{f_{\sigma(m)}}|\mathbf{f_{\sigma(1)}},\ldots,\mathbf{f_{\sigma(m-1)}},\mathbf{Y_{\sigma(1)}},\ldots,\mathbf{Y_{\sigma(m)}}) \big]
\end{align}
where $\sigma(.)$ denotes a permutation on integers $[m]$.
\end{theorem}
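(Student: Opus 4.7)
The plan is to specialize the DSM+ lower bound of Theorem~\ref{theorem_1} to index coding and then collapse the resulting mutual informations into conditional entropies of the demands. Two structural features help. First, each side information $\mathbf{Y_i}$ is a deterministic subset of $\mathbf{S}$, so the joint distribution $\bar{P}(\mathbf{S}, \mathbf{Y_1},\ldots,\mathbf{Y_m})$ is uniquely determined and the supremum over $\bar{P}$ in (\ref{lower_general}) is trivial. Second, each demand $\mathbf{f_{\sigma(i)}}$ is itself a deterministic function of $\mathbf{S}$, so it is a natural candidate for the auxiliary variable $U_{\sigma(i)}$.

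Setting $U_{\sigma(i)} = \mathbf{f_{\sigma(i)}}$, the Markov chain $(\mathbf{Y_1},\ldots,\mathbf{Y_m}) \leftrightarrow \mathbf{S} \leftrightarrow (U_1,\ldots,U_m)$ holds automatically, and the decoding constraint (\ref{eq:gfun}) is met by letting $g_{\sigma(i)}(U_{\sigma(i)}, \mathbf{Y_{\sigma(i)}}) = U_{\sigma(i)}$, which reproduces the demand exactly. Each mutual-information term then collapses to an entropy: since $\mathbf{f_{\sigma(i)}}$ is a function of $\mathbf{S}$,
\begin{align*}
I(\mathbf{S}; \mathbf{f_{\sigma(i)}} \mid \mathbf{f_{\sigma(1)}},\ldots,\mathbf{f_{\sigma(i-1)}}, \mathbf{Y_{\sigma(1)}},\ldots,\mathbf{Y_{\sigma(i)}}) = H(\mathbf{f_{\sigma(i)}} \mid \mathbf{f_{\sigma(1)}},\ldots,\mathbf{f_{\sigma(i-1)}}, \mathbf{Y_{\sigma(1)}},\ldots,\mathbf{Y_{\sigma(i)}}),
\end{align*}
which is exactly the $i$th summand in (\ref{eqn:index_low}).

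The main obstacle is that the inner optimization in Theorem~\ref{theorem_1} is a \emph{minimization} over $(U_1,\ldots,U_m)$, so the substitution $U_{\sigma(i)} = \mathbf{f_{\sigma(i)}}$ only shows $\bar{R}_\sigma(\mathbf{0}) \le \sum_i H(\mathbf{f_{\sigma(i)}}\mid\cdots)$, not the equality that Theorem~\ref{thm:lower_opt} requires. To close this gap I would re-run the proof of Theorem~\ref{theorem_1} up to (\ref{fourth_eq}) and, instead of introducing the generic $U_{ji}$'s in step $a$, apply Fano's inequality directly to the demands: since the code achieves vanishing block-error probability, $H(\mathbf{f_{\sigma(i)}^n} \mid J, \mathbf{Y_{\sigma(i)}^n}) \le n\epsilon_n$ for each $i \in [m]$. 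The same inductive chain-rule procedure used in the proof of Theorem~\ref{theorem_1}---alternately introducing $\mathbf{Y_{\sigma(i)}^n}$ and then $\mathbf{f_{\sigma(i)}^n}$ into the conditioning---peels off the desired conditional-entropy term at each stage and yields
\begin{align*}
n(R+\epsilon) \ge \sum_{i=1}^m H(\mathbf{f_{\sigma(i)}^n} \mid \mathbf{f_{\sigma(1)}^n},\ldots,\mathbf{f_{\sigma(i-1)}^n}, \mathbf{Y_{\sigma(1)}^n},\ldots,\mathbf{Y_{\sigma(i)}^n}) - m n \epsilon_n.
\end{align*}
The i.i.d.\ structure of $\mathbf{S}$ reduces the right-hand side to $n \sum_i H(\mathbf{f_{\sigma(i)}}\mid\cdots) - m n \epsilon_n$; letting $n \to \infty$ and maximizing over $\sigma$ gives the claimed bound.

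The hardest step is the chain-rule bookkeeping: at each of the $m$ stages one must peel off exactly the right conditional-entropy term while keeping the residual $H(J\mid\cdots)\ge 0$ available for the next stage, and the Fano slack must be tracked carefully across all $m$ rounds. Relative to the general Heegard-Berger setting of Theorem~\ref{theorem_1}, however, the argument is considerably cleaner because lossless recovery makes Fano immediately applicable to the demands, and because each $\mathbf{Y_{\sigma(i)}^n}$ being a deterministic function of $\mathbf{S^n}$ keeps all of the $Y$-related chain-rule manipulations transparent.
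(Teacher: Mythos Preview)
Your proposal is correct, and the direct Fano-based converse you sketch does work: starting from $H(J)\ge H(J\mid \mathbf{Y_{\sigma(1)}^n})$, writing this as $I(\mathbf{f_{\sigma(1)}^n};J\mid \mathbf{Y_{\sigma(1)}^n})+H(J\mid \mathbf{f_{\sigma(1)}^n},\mathbf{Y_{\sigma(1)}^n})$, applying Fano to the first term, then conditioning the residual on $\mathbf{Y_{\sigma(2)}^n}$ and iterating, gives exactly the displayed inequality, and the i.i.d.\ structure single-letterizes it.

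This is, however, a genuinely different route from the paper's. The paper does \emph{not} re-open the code-level argument; it invokes Theorem~\ref{theorem_1} as a black box, obtaining a lower bound that still carries a minimization over auxiliary variables $(U_1,\ldots,U_m)$ subject to $H(\mathbf{f_i}\mid U_i,\mathbf{Y_i})=0$. The work is then to show that \emph{every} feasible $(U_1,\ldots,U_m)$ yields at least $\sum_i H(\mathbf{f_i}\mid\mathbf{f_1^{i-1}},\mathbf{Y_1^i})$. This is done via a separate single-letter lemma (Lemma~\ref{lemma:lb_explicit}) establishing a telescoping chain $K_1\ge K_2\ge\cdots\ge K_m$ of hybrid expressions, after which the decodability constraint kills the residual conditional entropies. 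Your approach is more elementary---no auxiliary $U$'s, no Lemma~\ref{lemma:lb_explicit}, just Fano and chain rule---while the paper's approach makes good on its programmatic claim that the index-coding bound is literally a specialization of the general Heegard--Berger DSM+ bound. One small inaccuracy in your write-up: the reference to ``re-running up to (\ref{fourth_eq})'' is misleading, since (\ref{fourth_eq}) involves the full source $S_i$ rather than the demands; your actual argument (alternating $\mathbf{Y_{\sigma(i)}^n}$ and $\mathbf{f_{\sigma(i)}^n}$ in the conditioning) does not use anything from that part of the proof of Theorem~\ref{theorem_1}.
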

%%%%%%%%%%%%%%CONVERSE%%%%%%%%%%%%%%%%%%%%%%%%%%%%%%%%%%%%%%%

\begin{proof}[Proof of Theorem \ref{thm:lower_opt}]

%%%%%%%%%%%%%%%%%%%%%%%%%%%%%%%%%%%%%%%%%%%%
We will use the lower bound in Theorem \ref{lower_general} to prove the theorem.
Note that this lower bound is for per-letter distortion constraints but it can be adapted to handle block error probabilities in the following way.  Vanishing error probability, $P_e$, for index coding problem implies vanishing block error probability for each Decoder $i$, i.e., $Pr(g_i(f(\mathbf{S^n}),\mathbf{Y_i^n}){\neq}\mathbf{f_i^n}(\mathbf{S^n}) )$, which implies vanishing distortion with respect to Hamming distortion measure for Decoder $i$. Also, note that lower bound in Theorem \ref{lower_general} is continuous from right by Lemma \ref{lemma:continuity}. Hence, the optimal rate for the index coding problem, $R_{opt}$, is lower bounded by
%%%%%%%%%%%%%%%%%%%%%%%%%%%%%%%%%%%%%%
\begin{align*}
R_{opt} \ge \max_{\sigma} \min_{U_1,\ldots,U_m}  &\big{[} I(\mathbf{S}; U_{\sigma(1)}|\mathbf{Y_{\sigma(1)}})
+ I( \mathbf{S}; U_{\sigma(2)}| U_{\sigma(1)},\mathbf{Y_{\sigma(1)}},\mathbf{Y_{\sigma(2)}}) +\cdots
\notag \\
&+ I( \mathbf{S}; U_{\sigma(m)}| U_{\sigma(1)},\ldots,U_{\sigma(m-1)},\mathbf{Y_{\sigma(1)}},\ldots,\mathbf{Y_{\sigma(m)}}) \big{]}
\end{align*}
such that
 \\
1) $\sigma(.)$ denotes a permutation on integers $[m]$
\\
2) $(U_{1},\ldots,U_m)$ jointly distributed with $\mathbf{S}, \mathbf{Y_1},\ldots,\mathbf{Y_m}$,  and 
\begin{align}
\label{cond:decode_index}
&H(\mathbf{f_{\sigma(i)}}| U_{\sigma(i)}, \mathbf{Y_{\sigma(i)}}) = 0, \forall i \in [m].
\end{align}

Note that the maximum over $\bar{P}$ in (\ref{lower_general}) is degenerate 
for index coding. Without loss of generality let $\sigma(i) =i$ for all $i \in [m]$. Then we have, 
\begin{align}
\label{ineq:lb_index}
&R_{opt} \ge \min_{U_1,\ldots,U_m}  \big{[} I(\mathbf{S}; U_{1}|\mathbf{Y_{1}})
+ I( \mathbf{S}; U_{2}| U_{1},\mathbf{Y_{1}},\mathbf{Y_{2}}) +\cdots
+ I( \mathbf{S}; U_{m}| U_{1},\ldots,U_{m-1},\mathbf{Y_{1}},\ldots,\mathbf{Y_{m}}) \big{]}
\end{align}

To find an explicit expression for (\ref{ineq:lb_index}), we use the following lemma.
\begin{lemma}
\label{lemma:lb_explicit}
For $j \in [m]$ we define 
\begin{align*}
K_j =& \sum^{j-1}_{i=1} I(\mathbf{f_i}; U_1^i | \mathbf{Y_1^i}, \mathbf{f_1^{i-1}}) + I(\mathbf{S}; U_1^j | \mathbf{Y_1^j}, \mathbf{f_1^{j-1}}) 
 +\sum^{m}_{i=j+1} I(\mathbf{S}; U_i | \mathbf{Y_1^i}, U_1^{i-1}),
\end{align*}
where $\mathbf{f^{i}_1} = (\mathbf{f_1},\ldots,\mathbf{f_i})$ and likewise for $U^i_1$ etc.
Then $K_1 \ge K_2 \ge \ldots \ge K_m$.
\end{lemma}
\begin{proof}[Proof of Lemma \ref{lemma:lb_explicit}]
We fix any $j \in [m-1]$ and write,
\begin{align*}
&K_j - K_{j+1} 
\\
&= - I(\mathbf{f_j}; U_1^j | \mathbf{Y_1^j}, \mathbf{f_1^{j-1}}) + I(\mathbf{S}; U_1^j | \mathbf{Y_1^j}, \mathbf{f_1^{j-1}}) 
 -I(\mathbf{S}; U_1^{j+1} | \mathbf{Y_1^{j+1}}, \mathbf{f_1^{j}}) + I(\mathbf{S}; U_{j+1} | \mathbf{Y_1^{j+1}}, U_1^{j})
\\
&\overset{a}{=}  I(\mathbf{S}; U_1^j | \mathbf{Y_1^j}, \mathbf{f_1^{j}})  -I(\mathbf{S}; U_1^{j+1} | \mathbf{Y_1^{j+1}}, \mathbf{f_1^{j}})
 + I(\mathbf{S}; U_{j+1} | \mathbf{Y_1^{j+1}}, U_1^{j})
\\
&{=}  I(\mathbf{S}; U_1^j | \mathbf{Y_1^j}, \mathbf{f_1^{j}})  -I(\mathbf{S}; U_1^{j} | \mathbf{Y_1^{j+1}}, \mathbf{f_1^{j}})
-I(\mathbf{S}; U_{j+1} | \mathbf{Y_1^{j+1}}, \mathbf{f_1^{j}},U_1^{j})
 + I(\mathbf{S}; U_{j+1} | \mathbf{Y_1^{j+1}}, U_1^{j})
\\
&\overset{b}{\ge} 0,
\end{align*}
where
\\
a: is due to the chain rule.
\\
b: is due to the  side information and reconstructions being subsets of the source, $\mathbf{S}$.
\end{proof}
Then (\ref{ineq:lb_index}) becomes,
\begin{align}
&R_{opt}
\notag \\
&\ge K_1
\notag \\
&\ge K_m, \mbox{ by Lemma \ref{lemma:lb_explicit}}
\notag \\ 
&= \sum^{m-1}_{i=1} I(\mathbf{f_i}; U_1^i | \mathbf{Y_1^i}, \mathbf{f_1^{i-1}}) + I(\mathbf{S}; U_1^m | \mathbf{Y_1^m}, \mathbf{f_1^{m-1}})
\notag \\
&\ge \sum^{m}_{i=1} I(\mathbf{f_i}; U_1^i | \mathbf{Y_1^i}, \mathbf{f_1^{i-1}})
\notag \\
&= \sum^{m}_{i=1} H(\mathbf{f_i} | \mathbf{Y_1^i}, \mathbf{f_1^{i-1}}) - H(\mathbf{f_i}| U_1^i , \mathbf{Y_1^i}, \mathbf{f_1^{i-1}})
\notag \\
\label{rlo1}
&= \sum^{m}_{i=1} H(\mathbf{f_i} | \mathbf{Y_1^i}, \mathbf{f_1^{i-1}}),  \mbox{ from (\ref{cond:decode_index})}.
\end{align}
Applying the same procedure to all $m!$ permutations gives the result.
\end{proof}

\begin{remark}
Evidently the proof shows that the conclusion holds even if one 
 only requires that the bit-error probability, as opposed to the
block-error probability, vanish.
\end{remark}
%%%%%%%%%%%%%%%%%%%%%%%%%%%

\begin{remark}
Let us consider one of the $m!$ expressions of the lower bound in Theorem \ref{thm:lower_opt}, say the one in (\ref{rlo1}). We can rewrite it as
\begin{align}
\label{eqn:lb_index}
& H(\mathbf{f_1}\setminus \mathbf{Y_1}) + H(\mathbf{f_2}\setminus\{\mathbf{f_1}\cup \mathbf{Y_1} \cup \mathbf{Y_2} \}) + \cdots +
H(\mathbf{f_m} \setminus \{ \mathbf{Y_m}, \cup^{m-1}_{i =1} \{\mathbf{Y_i}, \mathbf{f_i}\}\}) 
  \\
  &= |\mathbf{f_1}\setminus \mathbf{Y_1}| + |\mathbf{f_2}\setminus\{\mathbf{f_1}\cup \mathbf{Y_1} \cup \mathbf{Y_2} \}| + \cdots +
 |\mathbf{f_m} \setminus \{ \mathbf{Y_m}, \cup^{m-1}_{i =1} \{\mathbf{Y_i}, \mathbf{f_i}\}\}|.
  \notag
\end{align}
Blasiak~\emph{et al.}~\cite{blasiak} define an \emph{expanding sequence}
of decoders as one for which each decoder in the sequence demands
a bit that is not contained in the union of the demands and the
side information of the decoders that appear earlier in the sequence.
Blasiak~\emph{et al.} prove that the size of a largest expanding
sequence is a lower bound on the optimal rate. Writing the above
bound as in (\ref{eqn:lb_index}) shows that it coincides with the 
Blasiak~\emph{et al.} bound when each decoder demands a single
bit. Of course, the more general case in which a decoder may
demand multiple bits can be obtained from the Blasiak~\emph{et al.}
result by replacing each such decoder with multiple decoders that
each demand a single bit.  The Blasiak~\emph{et al.} result 
does not quite imply Theorem~\ref{thm:lower_opt}, however, 
since the former assumes a zero-error formulation (though one
could appeal to a result of Langberg and Effros~\cite{langberg} 
to relate the two formulations).
\end{remark}
%%%%%%%%%%%%%%%%%%%%%%
\section{Achievable Scheme for Index Coding}
\label{sec:upper_bound_index}

For our achievable scheme for index coding, we rely on an achievability
result of Timo~\emph{et al.}~\cite{timo}, mentioned earlier, for the general 
Heegard-Berger problem (see also Heegard and Berger~\cite{berger}). Since
the Timo~\emph{et al.} scheme is rather complicated, we shall state
it in a substantially weakened form that will be sufficient for our
purposes.

\begin{proposition}{(cf.~\cite[Theorem 2]{timo})}
\footnote{The weakened form of this result that we stated in the conference version of this work~\cite{sinemITA} was invalid in that it omitted one of the necessary conditions on the auxiliary random variables. The result has been corrected here.}
\label{ach_general}
The optimal rate $R_{opt}$ of an index coding problem is upper bounded by
\begin{align}
\label{eqn:ach_index1}
\min \sum_{I\subseteq [m]}  \left[ \max_{
i\in I}  H(U_{I} |\mathbf{{Y_i}}) \right]
\end{align}
where the minimization is over the set of all random variables $U_{I}$ jointly distributed with $\mathbf{S}$ such that
\\
1) There exist functions 

$g_1(\cup_{1 \in I}U_I,\mathbf{Y_1})$,$\ldots$, $g_m(\cup_{m \in I}U_I,\mathbf{Y_m})$ 
  such that
\begin{align*}
& g_i(\cup_{i \in I}U_I,\mathbf{Y_i}) = \mathbf{f_i}(\mathbf{S}), \mbox{ for all } i \in [m].
\end{align*}
2) The auxiliary random variables $U_I$, $I \subseteq [m]$ are independent,
and for all collections of subsets $J_1, \ldots J_j$, $K_1, \ldots K_k$,
$L_1, \ldots, L_l$, and all subsets $\{i_1, \ldots, i_p\} \subseteq [m]$,
we have that
$(U_{J_1}, \ldots, U_{J_j})$ and $(U_{K_1}, \ldots, U_{K_k})$ are
conditionally independent given $((U_{L_1}, \ldots, U_{L_l}), 
(\mathbf{Y_{i_1},\ldots, Y_{i_p}})$, provided that the collections
 $J_1, \ldots J_j$ and  $K_1, \ldots K_k$ are disjoint.
 \\
3) Each $U_{I}$ is a (possibly empty) vector of bits, each of which 
is the mod-2 sum of a set (possibly singleton) of source components.
\end{proposition}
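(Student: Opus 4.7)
The plan is to derive the bound as a specialization of the Timo \emph{et al.} achievability for the Heegard-Berger problem to the lossless index coding setting, viewed as Heegard-Berger with Hamming distortion equal to zero. First I would recall the shape of \cite[Theorem 2]{timo}: it introduces one auxiliary random variable $U_I$ per subset $I \subseteq [m]$, interpreted as a layer of the codebook intended for the decoders indexed by $I$. The encoder uses a nested, binned superposition codebook for the $U_I$'s, and each layer is transmitted at a rate sufficient for every decoder $i \in I$ to decode its bin index from $\mathbf{Y_i}$; the total rate takes the form $\sum_I \max_{i \in I} I(U_I; \mathbf{S} \mid \text{(other layers, }\mathbf{Y_i}\text{)})$, subject to a collection of Markov/independence conditions on the joint distribution of $(\mathbf{S}, \mathbf{Y_1}, \ldots, \mathbf{Y_m}, \{U_I\})$.

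Next I would specialize this scheme by invoking conditions 2 and 3. Condition 3 forces each $U_I$ to be a deterministic function of $\mathbf{S}$ (a vector of XORs of source components), and condition 2 imposes strong (conditional) mutual independence among the $U_I$'s. Under these restrictions, three simplifications occur. First, the Markov conditions required by Timo \emph{et al.} are satisfied automatically, because the independence structure of condition 2 is strong enough to imply the relevant chains. Second, because $U_I$ is a deterministic function of $\mathbf{S}$, one has $I(U_I; \mathbf{S} \mid \cdot, \mathbf{Y_i}) = H(U_I \mid \cdot, \mathbf{Y_i})$. Third, the (conditional) independence of $U_I$ from the other $U_J$'s given $\mathbf{Y_i}$ lets us strip the other $U_J$'s out of the conditioning, leaving $H(U_I \mid \mathbf{Y_i})$. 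Summing over $I$ and taking the worst-case decoder within each layer produces exactly the bound \eqref{eqn:ach_index1}.

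For the error-probability requirement, condition 1 says that each decoder $i$ can compute $\mathbf{f_i}$ deterministically from the layers $\{U_I : i \in I\}$ it is responsible for, together with its side information $\mathbf{Y_i}$. Hence whenever decoder $i$ correctly recovers all of these $U_I$ from the bin indices broadcast by the encoder, it recovers $\mathbf{f_i}$ exactly. The Timo \emph{et al.} scheme guarantees that these layer-decoding steps succeed with probability tending to one as the block length grows, so the overall block error probability vanishes, meeting the index coding achievability definition.

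The main obstacle is the bookkeeping required to verify that conditions 2 and 3 really do imply every one of the (rather intricate) independence conditions in \cite[Theorem 2]{timo}, and that the layer-rate expression collapses exactly to $H(U_I \mid \mathbf{Y_i})$ under this specialization. In light of the footnote flagging a prior omission of a necessary condition on the auxiliaries, I would carry out this verification carefully, listing the full set of Markov constraints of Timo \emph{et al.} alongside condition 2 and checking each in turn before applying the three simplifications above to each mutual-information term. Beyond this verification, the argument is a direct substitution.
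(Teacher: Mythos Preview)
Your proposal is correct and mirrors the paper's own treatment: the paper does not provide a formal proof of this proposition but instead states, immediately after it, that the full Timo \emph{et al.} result has a more complex rate expression which, under the added restrictions of conditions 2) and 3), reduces to the displayed sum of $\max_{i\in I} H(U_I\mid \mathbf{Y_i})$ terms, and that the block-error formulation can be handled with minimal modification of their proof technique. Your three simplifications (Markov conditions automatic, mutual information collapsing to entropy because $U_I$ is a deterministic function of $\mathbf{S}$, and stripping other layers from the conditioning via the independence in condition 2) are exactly the mechanism behind this reduction, so your plan is a faithful and somewhat more explicit rendering of the paper's argument.
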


The full-strength version of Timo~\emph{et al.}'s result omits
conditions 2) and 3) but replaces the rate expression in (\ref{eqn:ach_index1})
with one that is more complex. Under conditions 2) and
3), however, their expression reduces to (\ref{eqn:ach_index1}).
Also Timo~\emph{et al.} state their result as an upper bound on
$R(D)$ defined in Section~\ref{sec:problem_defn}, as opposed to $R_{opt}$ as
defined in Section~\ref{sec:problem_defn_index}. That is, they provide a guarantee
on the expected time-average distortion, instead of on the block
error probability that we use to define index coding. Their proof
technique can be used to bound the block error probability with
minimal modification, however.

One way of interpreting $U_I$ is that it is a ``message'' that
is ``sent'' to all Decoders $i$ such that $i \in I$. That is,
$U_I$ includes some information about the source that is 
decoded by all of the decoders in $I$ but is not available
to any of the decoders in $I^c$. The contribution of $U_I$
to the overall rate in (\ref{eqn:ach_index1}) is simply the rate needed
to send $U_I$ to all of the decoders in $I$ using standard
binning arguments (and relying on the fact that $U_I$ is
a deterministic function of the source $\mathbf{S}$).

Evaluating this upper bound requires finding the optimal
joint distribution of the $U_I$ auxiliary random variables.
Since each $U_I$ is a deterministic function of $\mathbf{S}$,
this is equivalent to finding the optimal such functions.
Such an optimization problem is evidently quite complicated.
We shall provide a polynomial-time heuristic for finding
a feasible choice of the $U_I$s. Of the many different
index coding schemes that have been proposed (e.g.,~\cite{ shanmugam,birk, neely, Maleki,arbabjolfaei} ),
ours most closely resembles the \emph{partition multicast}
 of Tehrani, Dimakis, and 
Neely~\cite{neely}. In the language of our setup, their scheme
amounts to finding the optimal choice of the $U_I$ subject
to the constraint that each $U_I$ must be a vector 
consisting of a (possibly empty) subset of the source 
components. Tehrani~\emph{et al.} show that finding this
optimal choice is NP-hard~\cite{neely}. Our scheme, in
contrast, consists of three steps, the first two of which
amount to a polynomial-time heuristic for finding 
a reasonable and feasible (but not necessarily optimal)
choice of auxiliary random variables subject to the
constraint that each $U_I$ must be a vector 
consisting of a subset of the source components. 
Thus the output of the second step of our heuristic
is a feasible solution to the optimization problem
for which partition multicast is optimal. Our third
step, however, replaces some of the $U_I$ variables
with ones that are more general functions of source,
i.e., not just subsets of the source variables. 
Due to the similarity between our heuristic and 
partition multicast, we call our heuristic
\emph{coded approximate partition multicast (CAPM)}.
Although CAPM is not guaranteed to be
never worse than partition multicast, we shall
show that it is optimal for all of the
explicit scenarios for which Tehrani~\emph{et al.} show
that partition multicast is optimal as well as
some other, more general scenarios.
%%%%%%%%%%%%%%%%%%%%%%%%%%%%%%%%%%%%%%%%%%%%%%%%%%%
\section{CAPM: Selection of $U_{I}$'s in the Achievable Scheme for Index Coding}
\label{sec:first_scheme}
CAPM is a method for choosing a feasible choice of 
the auxiliary random variables $U_{I}$ for $I \subseteq [m]$.  
Note that the number
of auxiliary random variables is exponential in the number of
decoders, although in typical instances most of these random
variables will be null. To minimize the worst-case complexity
of CAPM, therefore, we shall work with a linked list
of the auxiliary random variables that are not null, which
shall begin empty.
We shall call all $U_{I}$ auxiliary random variables for which
$|I| = i$ ``\textit{level $i$ messages}." 

\textbf{Step 1} :  
Beginning with an empty linked list of auxiliary random variables, we 
sequence through the vector of source bits. Any given bit must be in
$f_{KJ}$ for some $K \subseteq J \subseteq [m]$. So long as $J \ne [m]$,
we seek to include
this bit in $U_{K \cup J^\cc}$: if $U_{K \cup J^\cc}$ does not
exist in our linked list of auxiliary random variables, then we
add it to the list and set it equal to the source bit in question.
If it already exists in the list, then we locate it, and 
we set $U_{K \cup J^\cc}$ to be a vector of bits consisting
of all source bits that were included previously along with this
newly included source bit. For a source bit in $f_{KJ}$ where
$J = [m]$, we include the bit in the auxiliary random variable
$U_{[m]}$, i.e., the auxiliary random variable that is decoded
by all of the decoders. This process is repeated until all of
the source bits have been included in an auxiliary random variable.
Note that each nonvoid auxiliary random variable is then simply a vector
of source bits. Also note that each source bit will be included
in exactly one auxiliary random variable. 

We now sort the linked list so that all level-2 messages appear
first, followed by all level-3 messages, etc. Note that all level-1
messages are necessarily empty (assuming there is more than 
one decoder), by virtue of the fact that every
source bit is assumed to be demanded by at least one decoder,
and
source bits that no decoder has as side information
are placed in $U_{[m]}$.
The complexity of Step 1 is at most $O(s^2 \cdot m)$, where
$s = |\mathbf{S}|$. 
\begin{remark}
See Proposition~\ref{theorem:index} to follow for a justification of this 
particular approach to allocating the source components among the
different auxiliary random variables.
\end{remark}

\textbf{Step 2} : 
Let $U_I$ denote the first auxiliary random variable in the linked list.
If $I = [m]$, i.e., this first auxiliary random variable is decoded by
all of the decoders, then this $U_I$ must be the only nonnull auxiliary
random variable (since they are sorted by level), in which case we
skip Step 2 and proceed to Step 3. Suppose instead that $|I| < m$.
Note that $U_I$'s contribution to the overall rate is
$$
\max_{i \in I} H(U_I|\mathbf{Y_i}).
$$

In many cases $H(U_I|\mathbf{Y_i})$ will not be constant over $i \in I$. That is, some decoders in $I$ will require a higher rate to decode
$U_I$ than others. When this happens we move some of the source
bits in $U_I$ to a higher-level message. Define the two decoder indices
\begin{equation}
\label{eq:step2min}
i^* = \min\{i: H(U_I|\mathbf{Y_i}) = \min_{l \in I} H(U_I|\mathbf{Y_l})\}
\end{equation}
and
\begin{equation}
\label{eq:step2max}
j^* = \min\{j: H(U_I|\mathbf{Y_j}) = \max_{l \in I} H(U_I|\mathbf{Y_l})\}.
\end{equation}

If $H(U_I|\mathbf{Y_{i^*}}) < H(U_I|\mathbf{Y_{j^*}})$, then there must exist a source
bit in $U_I$ that is contained in $\mathbf{Y_{i^*}}$ but not
in $\mathbf{Y_{j^*}}$. We select the lowest-index source bit with this property
and move it from $U_I$ to some $U_J$ such that $I \subset J$
and $|J| = |I| + 1$. If $|I| < m-1$, then there are many such
choices of $J$; $J$ can be chosen arbitrarily, but for concreteness
we shall assume the following. First we look for nonempty $U_J$'s such that $I \subset J$
and $|J| = |I| + 1$. If we can find such a message or messages, we select the $J$ with the lowest index that is not already in $I$. If that is not the case, $J$ is obtained by adding to $I$ the lowest index that is not already in $I$.  We call the bit
that is moved \emph{leftover} or \emph{excess.} We then recompute
$i^*$ and $j^*$ according to (\ref{eq:step2min}) and 
(\ref{eq:step2max}), respectively, and move an additional
bit to a higher-level message if necessary, repeating this
process until $U_I$ is such that $H(U_I|\mathbf{Y_{i^*}}) = H(U_I|\mathbf{Y_{j^*}})$.
Note that this condition must eventually be satisfied, since
after sufficiently many iterations, $U_I$ will become null.
Once this condition is satisfied for $U_I$, we apply the
same procedure to the next auxiliary random variable in the
linked list, and so on until this procedure has been applied
to every variable in the linked list. It is possible that
some auxiliary random variables in the linked list are made
null through this procedure, in which case they are removed
from the linked list. The complexity of Step 2 is $O(m^2 \cdot s^3)$.

\begin{remark}
The rationale for moving source bits up to higher-level messages
is as follows. A bit that is excess contributes to the maximum
\begin{align}
\label{eqn:max_entropy}
\max_{l \in I} H(U_I|\mathbf{Y_l}),
\end{align}
which is $U_I$'s contribution to the overall rate. Thus removing
this bit from $U_I$ has the potential to reduce $U_I$'s contribution
to the rate (although it will not necessarily do so, if there
are multiple $l$ that achieve the maximum in (\ref{eqn:max_entropy}); see the
next remark). Of course, including this bit in a higher-level
message, $U_J$, will tend to increase $U_J$'s contribution to
the rate. But it will only do so the source bit in question
is not in the side information of one of the decoders $l$ that
achieve the maximum in
$$
\max_{l \in J} H(U_I|\mathbf{Y_J}).
$$

Thus moving the bit up one level often yields a rate reduction, and 
even if it does not, it may yield a rate reduction upon being
elevated again during a later iteration.
\end{remark}

\begin{remark}
If there exists a unique $j \in I$ such that $H(U_I|\mathbf{Y_j}) = \max_l H(U_I|\mathbf{Y_l})$,
then moving an excess bit to a higher-level message cannot increase
the overall rate, and in some cases it may strictly decrease the rate.
If the decoder with maximum rate is not unique, then moving an excess
bit to a higher-level message can increase the rate, as in Example~\ref{example:CAPM}
to follow, although this increase is sometimes offset during later
movements of excess bits, or during Step 3 (again as in Example~\ref{example:CAPM}).
For this reason we move excess bits according to the procedure outlined
in Step 2 even when such movements have the immediate effect of
increasing the overall rate.
\end{remark}

\begin{remark}
Finding the feasible allocation of source components among
the various $U_I$ variables that minimizes the rate in (\ref{eqn:ach_index1})
is NP-hard, as shown by Tehrani~\emph{et al.}~\cite{neely}.
\end{remark}

\textbf{Step 3} : In the final step, we exclusive-OR (XOR) some of the
bits included in the auxiliary random variables. Let $U_I$ denote
the first auxiliary random variable in the linked list, and suppose
that $V_1, \ldots, V_l$ denote the excess source bits that are included
in $U_I$. Recall that bits placed in $U_{[m]}$ during Step 1 are
not considered excess.
For each $i$, let $N_i$ denote the set of decoders that
need (i.e., demand) $V_i$ and let $H_i$ denote the set of decoders
that have $V_i$ has side information. We search for a pair of
components $V_i$ and $V_j$ such that $N_i \subset H_j$,
$N_j \subset H_i$, and $V_i$ and $V_j$ were included in the
same auxiliary random variable in Step 1 (that is, $N_i \cup H_i
= N_j \cup H_j$). If there are no such $V_i$ and $V_j$ then we
proceed to the next $U$ variable in the linked list. Otherwise,
we delete $V_j$ from $U_I$,
we replace $V_i$ in $U_I$ with $V_i \oplus V_j$, we replace
$N_i$ with $N_i \cup N_j$ and $H_i$ with $H_i \cap H_j$. Since
both $V_i$ and $V_j$ were placed in the same auxiliary random
variable in Step 1, we view the new $V_i$ as also being placed 
in that variable in Step 1, although of course
the auxiliary random variables constructed in Step 1 did not involve taking
the XOR of any of the source components.  We then repeat this 
process, again looking for $V_i$ and $V_j$ such that 
$N_i \subset H_j$, $N_j \subset H_i$, and $V_i$ and $V_j$ 
were included in the same auxiliary random variable in Step 1.
If we find such a pair, we replace them with their exclusive-OR.
We repeat this process until there are no such pairs remaining.
We then apply this procedure to all of the other auxiliary random
variables in the linked list. The complexity of Step 3 is
$O(m\cdot s^3)$.

\begin{remark}
Evidently Step 3 will never increase the rate.
\end{remark}

\begin{remark}
One could certainly exclusive-OR bits $V_i$ and $V_j$ 
satisfying $N_i \subset H_j$ and $N_j \subset H_i$ but
for which $V_i$ and $V_j$ are not included in the same
auxiliary random variable in Step 1 or for which either
$V_i$ or $V_j$ are not excess bits. Choosing to exclusive-OR
certain pairs of bits can foreclose other such choices,
however, and the latter choices may ultimately lead to
lower rates. The restriction that we only exclusive-OR
bits that are excess and that originated in the same 
auxiliary random variable in Step 1 is intended to 
guide the process toward the most productive exclusive-OR
choices. Of course, once the above process exhausts all
of its exclusive-OR possibilities, one could look for
exclusive-OR opportunities among bits that are not excess
or that did not originate in the same auxiliary random
variable. We shall not include this step in the heuristic,
however, since it is not necessary in any of our optimality
results or any of our examples.

\end{remark}

One can verify that this selection procedure provides a feasible 
choice of the $U_I$ variables as follows. First note that
the choice will be feasible
after each step 1. This is because each source component is included
in a $U_I$ variable that is decoded by all of the decoders that
demand it. Thus condition 1) in 
Proposition~\ref{ach_general} is satisfied. Conditions 2) and 3)
are satisfied because each $U_I$ consists of a subset of
the source components and these subsets are disjoint across
the $U_I$s. Step 2 only moves source components
from a $U_I$ to a $U_J$ for which $I \subseteq J$, so it
is evident that conditions 1)-3) continue to hold.
Finally, the exclusive-OR operation applied in Step 3
evidently never violates conditions 2) or 3), and the
specific conditions under which the exclusive-OR operation
is applied ensures that condition 1) continues to hold.

\begin{definition}
The achievable rate provided by CAPM is denoted by $\RC$.
\end{definition}

To illustrate CAPM, we provide two examples.

%%%%%%%%%%%%%%%%%%%%%%%%%%%%%%%%
\begin{example}
\label{example:CAPM}
Consider the $4$-decoder index coding problem instance with demands $f_{12^\cc},f_{32^\cc},f_{3\{1,2\}^\cc},f_{1\{2,3\}^\cc},f_{4[4]},f_{2[4]}$ where  each demand is one bit  and $a^\cc = [m] \setminus \{a\}$.
Now we show each step of CAPM.

\textit{Step $1$}:  At the end of this step we have
\\
$U_{12} = f_{12^\cc},  U_{23} = f_{32^\cc}$, 
\\
$U_{123} = f_{3\{1,2\}^\cc},f_{1\{2,3\}^\cc}$,
\\
$ U_{1234} = f_{4[4]},f_{2[4]}$.

\textit{Step $2$}: We start with level-$2$ messages. The first level-$2$ message is $U_{12}$.  $f_{12^\cc}$ in $U_{12}$ is an excess bit and since we already have level-$3$ message $U_{123}$ which $f_{12^\cc}$ can be placed
we move $f_{12^\cc}$ to $U_{123}$. The next message is $U_{23}$. $f_{32^\cc}$ in $U_{23}$ is an excess bit and it is also placed to $U_{123}$.
The messages at this point are
\\
 $U_{123} = f_{3\{1,2\}^\cc},f_{1\{2,3\}^\cc}, f_{12^\cc}, f_{32^\cc}$,
 \\
$ U_{1234} = f_{4[4]},f_{2[4]}$
\\
and  we move on to level-$3$ messages.
Note that there is only one level-$3$ message, $U_{123}$. 
All demands in it are excess bits since $H(U_{123}| \mathbf{Y_2})=0$. We move all excess bits to $U_{1234}$, which is the only one level-$4$ message. This completes the Step $2$ and we have 
\\
$U_{1234} =  f_{4[4]}, f_{2[4]} ,f_{3\{1,2\}^\cc},f_{1\{2,3\}^\cc}, f_{12^\cc}, f_{32^\cc}$
\\
at the end of this step.

\textit{Step $3$}: 
Note that $(f_{3\{1,2\}^\cc},f_{1\{2,3\}^\cc})$ are the only excess bits that were in the same message at Step $1$ and $f_{3\{1,2\}^\cc} \oplus f_{1\{2,3\}^\cc}$ is decodable at the respective decoders. Hence at the end of Step $3$, selection of the messages is the following:
\\
$U_{1234} = f_{4[4]}, f_{2[4]} ,f_{3\{1,2\}^\cc}\oplus f_{1\{2,3\}^\cc}, f_{12^\cc}, f_{32^\cc}$ 
\\and all others are empty.

Note that after Step $2$ the total rate is $6$ bits, whereas after Step $3$ the rate is $5$ bits. The lower bound in 
Theorem \ref{thm:lower_opt} also gives $5$ bits, showing that CAPM achieves the optimal rate for this example. 
\end{example}
%%%%%%%%%%%%%%%%%%%%%%%%%%%%%%%%

\begin{remark}
\label{remark2:heuristic}
Let $R_{a}$ be a rate obtained by placing  $f_{IJ}$'s in messages ($U_{K}$'s)  by applying  Step $1$. Let  $R^*_{a}$ be a rate obtained such that $f_{IJ}$'s are placed in messages ($U^*_{K}$'s) by following the Step $1$, and applying the Step $2$ only for level- $2$ messages.

Note that $U_{K} = U^*_{K}$, for all level-$i$ messages where $i >3$. Also, each possible excess bit (or bits) which we will denote by $f^*_{IJ} \subseteq f_{IJ}$, coming from a level-$2$ message $U^*_{K}$ is such that either $I = \{\alpha\}$ or $I=\{\beta \}$ where $K =\{\alpha, \beta\}$. 
Then, when level-$2$ messages $U_{K}$ and $U^*_{K}$, where $K = \{\alpha, \beta \}$, are not the same, we can write $U^*_{K} \cup f^*_{IJ} = U_{K}$ and
\begin{align}
&\max_{i \in K} \{ H(U_{K}| \mathbf{Y_i})\}=\max_{i \in K} \{ H(U^*_{K} \cup f^*_{IJ}| \mathbf{Y_i})\} \notag \\
&= \max_{i \in K} \{ H(U^*_{K}| \mathbf{Y_i}) + H (f^*_{IJ}| \mathbf{Y_i})\} \notag \\
&\overset{a}{=}  H(U^*_{K}| \mathbf{Y_i}) +\max_{i \in K} \{ H (f^*_{IJ}| \mathbf{Y_i})\} \notag \\
\label{eq:remark2}
&\overset{b}{=}  H(U^*_{K}| \mathbf{Y_i}) + H(f^*_{IJ}) 
\end{align}
a: Since all $H(U^*_{K}| \mathbf{Y_i})$ for $i \in K$ are the same.
\\
b: Since $f^*_{IJ}$ is such that either $I = \{\alpha\}$ or $I=\{\beta \}$.

Hence, we can write $R_a - R^*_a$ as
\begin{align*}
&\left(\sum_{|I|=3}\max_{i \in I} \{ H(U_{I}| \mathbf{Y_i})\}-
\sum_{|I|=3}\max_{i \in I} \{ H(U^*_{I}| \mathbf{Y_i})\}
\right) +
\left(\sum_{|K|=2}\max_{i \in K} \{ H(U_{K}| \mathbf{Y_i})\}-\sum_{|K|=2}\max_{i \in K} \{ H(U^*_{K}| \mathbf{Y_i})\}\right)
\\
=&\left(\sum_{|I|=3}\max_{i \in I} \{ H(U_{I}| \mathbf{Y_i})\}-
\sum_{|I|=3}\max_{i \in I} \{ H(U^*_{I}| \mathbf{Y_i})\}
\right) 
+\sum_{f^*_{IJ}}H(f^*_{IJ}), \quad \mbox{ from (\ref{eq:remark2})}.
\end{align*}

Since $U^*_{I} = U_{I} \cup f^{*}_{I}$ where $f^{*}_{I}$ denotes all of the excess bits in $U^*_{I}$ and 
\begin{align*}
\max_{i \in I} \{ H(U_{I} \cup f^*_{I}| \mathbf{Y_i})\}
\le \max_{i \in I} \{ H(U_{I} | \mathbf{Y_i})\} +H( f^*_{I}),
\end{align*}
we can write
\begin{align*}
&R_a - R^*_a \ge -\sum_{|I|=3}H(f^*_{I})+\sum_{f^*_{IJ}}H(f^*_{IJ}) =0.
\end{align*}
\end{remark}
%%%%%%%%%%%%%%%%%%%%%%%%%%%%%%%%%%%%%%%%%%%%%%%%%
Note that since we apply Step $2$ once to the level-$2$ messages, the leftover bits that we get are unique, i.e., independent of the  
 sorted demand sequence given at the beginning of Step $1$ and 
 different leftover bits coming from previous levels (since there is no leftover bit coming to level-$2$ messages).
However, if we apply CAPM for an arbitrary instance of an index coding problem,  this may not be the case. In other words, at Step $2$ of the CAPM, we may get different excess bits due to differently sorted demand sequence given at the beginning of Step $1$ or different leftover bits coming from previous levels and this may affect the resulting rate. Also, when there are multiple options for leftover bits to be moved,  one may get different rates due to the selection of different next level messages to move the leftover bits. Lastly, there may be instances of index coding problem where not moving the bits to the next level gives  a lower rate. 
 To illustrate some of these issues, we provide the following examples. For the examples,  $f^a_{IJ} \backslash U_{K}$ denotes the leftover bits ($a$ bits) of $f_{IJ}$ from $U_{K}$. If all $f_{IJ}$ are leftover bits then we remove the superscript $a$.
 \begin{example}
\label{ex:heuristic1}
Consider the $4$-decoder index coding problem instance with demands $f_{12^\cc},f_{23^\cc},f_{31^\cc},f_{3\{1,2\}^\cc}$, $f_{4[4]}$ where $f_{12^\cc},f_{23^\cc}$ are two bits and the rest are one bit.
Now, we explain each step of the CAPM for this example.

\textit{Step 1:} At the end of this step we have
\\
 $ U_{12} = f_{12^\cc}$, 
$U_{23} = f_{23^\cc}$, $ U_{13} = f_{31^\cc}$, 
\\
$U_{123}=f_{3\{1,2\}^\cc}$, $U_{1234}=f_{4[4]}$.

\textit{Step 2:} We begin with level-$2$ messages. Note that all demands at level-$2$ messages are excess bits and can be moved to  level-$3$ message, $U_{123}$. Then we have,
\\
$U_{123}=f_{3\{1,2\}^\cc},  f_{12^\cc}, f_{23^\cc}, f_{31^\cc}$,
\\
$U_{1234}=f_{4[4]}$.

We move on to level-$3$ messages. There is only one level-$3$ message, $U_{123}$, and one bit of  $f_{12^\cc}$, denoted as $f^1_{12^\cc}$,  is an excess bit. Then we move it to $U_{1234}$, concluding Step $2$. Hence, messages at the end of this step are
\\
$U_{123}=f_{3\{1,2\}^\cc}, f^1_{12^\cc}, f_{23^\cc}, f_{31^\cc}$, 
\\
$U_{1234}=f_{4[4]}, f^1_{12^\cc}$.

\textit{Step 3:} Since there is no $\oplus$ opportunity as described in Step $3$, the messages at the end of Step $2$ remains the same, 
giving a total rate of $5$ bits.

Note that without loss of generality we can label Decoder $3$ as $1$ and Decoder $1$ as $3$. Then, if we apply CAPM with this relabeling we get the following messages at each step.

\textit{Step 1}: At the end of this step we have
\\
 $U_{32} = f_{32^\cc}$, 
$U_{12} = f_{21^\cc}$, $ U_{13} = f_{13^\cc}$, 
\\
$U_{123}=f_{1\{2,3\}^\cc}$, $U_{1234}=f_{4[4]}$.

\textit{Step 2:} We begin with level-$2$ messages. Similar to previous case all demands at level-$2$ messages are excess bits and moved to $U_{123}$. Then we have,
\\
$U_{123}=f_{1\{2,3\}^\cc},  f_{32^\cc}, f_{21^\cc}, f_{13^\cc}$,
\\
$U_{1234}=f_{4[4]}$.

We move on to level-$3$ messages. As in the previous case, there is only one level-$3$ message, $U_{123}$. However, now the excess bits of $U_{123}$ are 
$f_{1\{2,3\}^\cc}, f^1_{21^\cc}, f^1_{32^\cc}$. Then we move these to $U_{1234}$, concluding Step $2$. Hence, messages at the end of this step are
\\
$U_{123}= f^1_{32^\cc}, f^1_{21^\cc}, f_{13^\cc}$, 
\\
$U_{1234}=f_{4[4]}, f_{1\{2,3\}^\cc}, f^1_{21^\cc}, f^1_{32^\cc},$

\textit{Step 3:} Since there is no $\oplus$ opportunity as described in Step $3$, the messages at the end of Step $2$ remains the same, 
giving a total rate of $6$ bits. Thus rate achieved by
the heuristic depends on the indexing of the decoders.
\end{example}
%%%%%%%%%%%%%%%%%%%%%%%%%%%%%%%%%%%%%
\begin{example} 
\label{example:CAPM_n}
Consider the $5$-decoder index coding problem instance with demands $f_{1[5]},f_{5[5]},f_{2\{1,4\}^\cc},f_{3\{1,2\}^\cc},f_{4\{1,3\}^\cc}$ where  each demand is one bit.
Now we show each step of the CAPM.

\textit{Step 1:} At the end of this step we have,
\\
$U_{124} = f_{2\{1,4\}^\cc}$, $U_{123} = f_{3\{1,2\}^\cc}$,
$U_{134}= f_{4\{1,3\}^\cc}$,
\\
$U_{12345} = f_{1[5]},f_{5[5]}$.

\textit{Step 2:} We begin with the lowest level, i.e., level-$3$ for this example. Note that all of the demands in level-$3$ messages are excess bits and they are moved to $U_{1234}$. Then we have
\\
$U_{1234} =f_{2\{1,4\}^\cc},f_{3\{1,2\}^\cc},f_{4\{1,3\}^\cc}$,
\\
$U_{12345} = f_{1[5]},f_{5[5]}$.
\\
Note that total rate is $4$ bits  at this state.
 We move on to level-$4$ messages. There is only one level-$4$ message, $U_{1234}$. Since $H(U_{1234} |\mathbf{Y_1}) = 0$, all demands in $U_{1234}$ are excess bits and they are moved to $U_{12345}$. Then we have 
 \\
 $U_{12345} = f_{1[5]},f_{5[5]}, f_{2\{1,4\}^\cc},f_{3\{1,2\}^\cc}, f_{4\{1,3\}^\cc}$,
 where the total rate is $5$ bits.

\textit{Step 3:}
Since there is no $\oplus$ opportunity as described in Step $3$, the messages at the end of Step $2$ remains the same and total rate is $5$ bits. Note that if we did not move the excess bits at level-$4$ message, the total rate would be $4$ bits.
\end{example}
%%%%%%%%%%%%%%%%%%%%%%%%%%%%%%%%
In the next section, we show that applying CAPM gives us the optimal rate for several specific cases of the index coding problem. 
%%%%%%%%%%%%%%%%%%%%%%%%%%%%%%%%%%%%%%
\section{Optimality Results for Index Coding}
\label{sec:optimality}
We shall show that CAPM yields the optimal rate for
several scenarios. Since the partition multicast scheme of
Tehrani~\emph{et al.}~\cite{neely} is the most direct antecedent of 
CAPM, we begin by showing that CAPM coincides
with the DSM+ lower bound, and is thus optimal, for all
of the explicit scenarios for which Tehrani~\emph{et al.} show that
partition multicast is optimal.

\begin{figure}[t]
  \begin{center}
 \includegraphics[scale=0.3]{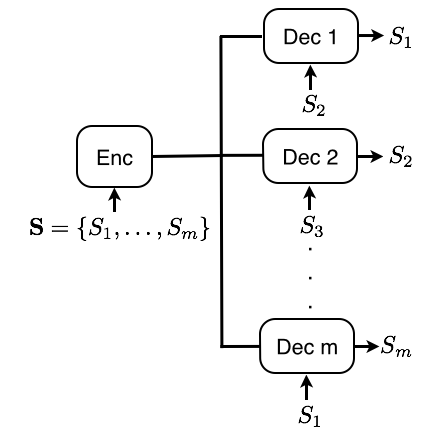}
\caption{Index coding instance which is also called "directed cycle"}
   \label{fig:ex_neely}
 \end{center}
\end{figure}
First, consider the case depicted in Fig.~\ref{fig:ex_neely},
in which there are $m$ decoders, $m$ source bits, and Decoder $k$
demands source bit $k$ and has source bit $k + 1$ as side information,
for $k \in \{1,\ldots,m-1\}$. Decoder $m$ demands source bit $m$ and
has the first source bit as side information. Such an instance is
typically called a ``directed cycle'' after its graph-theoretic
description.

\begin{proposition}
\label{prop:directedcycle}
For the instance depicted in Fig.~\ref{fig:ex_neely}, the achievable
rate provided by CAPM and the lower bound provided
by the DSM+ bound coincide. In fact
$$
\RC = R_{DSM+} = m-1.
$$
\end{proposition}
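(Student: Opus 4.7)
My plan is to establish the two bounds $R_{DSM+} \geq m-1$ and $\RC \leq m-1$; combined with the sandwich $R_{DSM+} \leq R_{opt} \leq \RC$, the three quantities must all equal $m-1$.

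For the lower bound, I would apply Theorem~\ref{thm:lower_opt} with the reversing permutation $\sigma(i) = m-i+1$. Writing $Y_m = b_1$ and $Y_k = b_{k+1}$ for $k < m$, a direct bookkeeping shows that the conditioning variables in the $i$-th summand of (\ref{eqn:index_low}) together reveal exactly the bits $\{b_1, b_m, b_{m-1}, \ldots, b_{m-i+2}\}$: for each $i \in \{1,\ldots,m-1\}$ the target $b_{\sigma(i)} = b_{m-i+1}$ lies outside this set and so contributes one bit, while for $i = m$ every source bit appears in the conditioning and the term vanishes. Summing yields $m-1$.

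For the upper bound, I would trace CAPM. Step~1 places each source bit $b_k$ into $U_{\{k-1,k\}}$ (with indices taken cyclically), because Decoder $k$ is the unique decoder that demands $b_k$ and Decoder $k-1$ is the unique one that holds $b_k$ as side information; the output is $m$ singleton level-two messages. The key invariant I would carry through Step~2 is that every bit $b_k$ currently residing in some $U_I$ satisfies $\{k-1, k\} \subseteq I$. This invariant holds after Step~1 and is preserved by every upward promotion $U_I \to U_J$ with $I \subset J$, since no index is ever removed. Because each $Y_i$ consists of a single bit, $H(U_I \mid Y_i)$ takes only the two values $|S|$ and $|S|-1$, depending on whether $b_{i+1} \in U_I$, where $S$ is the current bit-set of $U_I$. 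The invariant, together with $|I| < m$ and the feasibility condition $S \subseteq I$, forces both that some $i \in I$ satisfies $b_{i+1} \in U_I$ and that some $i \in I$ satisfies $b_{i+1} \notin U_I$; hence excess is always present and Step~2 evacuates $U_I$ one bit at a time until it is empty. Proceeding level by level shows that by the end of Step~2 all source bits reside in $U_{[m]}$ and every other auxiliary variable is null. Step~3 then cannot fire because each bit originated in a distinct Step~1 message, so no pair of excess bits shares an origin. The resulting rate is $\max_{i} H(U_{[m]} \mid Y_i) = m-1$.

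The main subtle point will be verifying that the invariant survives when a promoted bit lands in a pre-existing $U_J$ that already contains other bits; this follows because promotions only enlarge the index subscript, so previously stored bits still satisfy the invariant for the larger set $J$, while the arriving bit $b_k$ satisfies $\{k-1, k\} \subseteq I \subset J$. Combining the two bounds completes the proof.
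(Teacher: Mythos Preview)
Your proposal is correct and follows essentially the same route as the paper. Both arguments use the permutation $\sigma(i)=m-i+1$ for the lower bound, and for the upper bound both rely on the invariant that any bit $b_k$ residing in $U_I$ forces $\{k-1,k\}\subseteq I$; the paper phrases the consequence as ``balanced $\Rightarrow |I|=m$'' while you phrase it as the contrapositive ``$|I|<m \Rightarrow$ excess present,'' but the content is identical. Your explicit verification that Step~3 cannot fire is an extra detail the paper omits (it is not strictly needed once the lower bound pins $\RC$ from below), but it is correct.
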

\begin{proof}
First we show that $\RC =m-1$. After Step $1$ of CAPM, the messages 
are $U_{\{i,i+1\}}= S_{i+1}$ for all $i \in [m-1]$ and 
$U_{\{1,m\}} = S_1$.
Observe that at any point of the algorithm, for any non-empty message $U_I$ where $|U_I| = k$, there exists $\mathbf{Y_j}\in I$ such that $H(U_I|\mathbf{Y_j})= k-1$ and $k \ge H(U_I|\mathbf{Y_i})\ge k-1$ for all $i \in I$. Hence after Step $2$ of CAPM, 
$H(U_I|\mathbf{Y_j})$ 
will be equal to  $k-1$ for all $j \in I$ for any nonempty message $U_I$.
Now we show that for any non-empty message $U_I$ with $|U_I| = k$,
$H(U_I|\mathbf{Y_j}) = k-1$ for all $j \in I$ if and only if 
$|U_I| = m$. This will imply that $\RC \le m-1$. Consider a $U_I$
such that $|U_I| = k$ and $H(U_I|\mathbf{Y_j}) = k - 1$ for
all $j \in I$. Let $i = \min \{ j: S_j \in U_I \}$.
Then by virtue of Steps 1) and 2) of CAPM, we must have $i \in I$.
Since $H(U_I|\mathbf{Y_i}) = k-1$, we must have $S_{(i+1)} \in U_I$
as well, where $(j) = ((j - 1) \mod m) + 1$. Likewise, $(i+1) \in I$,
which implies that $(i+2) \in I$, etc. It follows, then, that
$|I| = |U_I| = m$.

Conversely, selecting the permutation $\sigma(i) = m - i + 1$
in Theorem~\ref{thm:lower_opt} shows that $R_{DSM+} \le m-1$.
\end{proof}
%%%%%%%%%%%%%%%%%%%%%%%%%%%%%%%%%%%%%%%
\begin{figure}[t]
  \begin{center}
 \includegraphics[scale=0.45]{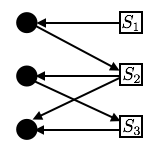}
\caption{Bipartite graph representation of index coding example with $m=3$ and $\mathbf{S} = \{S_1,S_2,S_3\}$. Circle nodes represent users while square nodes denote source bits.}
   \label{fig:bipartite}
 \end{center}
\end{figure}
%%%%%%%%%%%%%%%%%%%%%%%%%%%%%%%%%%%%%%%
 We can represent any (groupcast) index coding problem as a bipartite graph 
$G= (\mathbf{M}, \mathbf{S}, E)$ where $\mathbf{M}$, $\mathbf{S}$, and $E$ 
denote the set of user nodes, the set of source bit nodes and the set of 
edges respectively~\cite{neely}. 
There is a directed edge $(m_i, S_i)$, $m_i \in \mathbf{M}$, $S_i \in \mathbf{S}$ if and only if  $m_i$ has  $S_i$ as side information (i.e., $S_i \in \mathbf{Y_{m_i}}$) and there is a directed edge  $(S_i, m_i)$, $S_i \in \mathbf{S}$, $m_i \in \mathbf{M}$  if and only if  $m_i$ demands  $S_i$ (i.e., $S_i \in \mathbf{f_{m_i}}$ and see Figure \ref{fig:bipartite} for an example).
 
 Second, we consider an index coding instance represented by a directed acyclic graph (DAG). Tehrani~\emph{et al.}~\cite{neely} show that partition multicast achieves the optimal rate for DAGs and the optimal rate equals to total number of demanded bits, $s = |\mathbf{S}|$.
Note that for any given instance of an index coding problem, the rate achieved by CAPM cannot be more than the number of demanded bits. Thus it suffices
to show that the rate $s$ is optimal. Tehrani~\emph{et al.} show this
under the zero-error formulation. Using the DSM+ lower bound, one
can show that the optimal rate is also $s$ under the vanishing block
error probability assumption.
\begin{lemma}
\label{lemma:DAG}
For an instance of the  index coding problem represented by a DAG, there exists a permutation, $\sigma(\cdot)$, on $[m]$ such that
\begin{align}
\label{eq:lb_DAG}
\mathbf{Y_{\sigma(i)}} \subseteq \cup^{i-1}_{j=1} \mathbf{f_{\sigma(j)}}, \mbox{ for all } i \in [m].
\end{align} 
\end{lemma}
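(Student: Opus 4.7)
The plan is to introduce an auxiliary directed graph $H$ on the decoder set $[m]$, with an arc $a \to b$ whenever there exists a source component that decoder $a$ demands and decoder $b$ has as side information. A topological ordering of $H$ will supply the permutation $\sigma$.

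First I would show that $H$ is acyclic by lifting any cycle in $H$ to a cycle in the bipartite graph $G$. Suppose $a_1 \to a_2 \to \cdots \to a_k \to a_1$ is a cycle in $H$. By definition, there are source bits $S_1,\ldots,S_k$ such that $a_i$ demands $S_i$ and $a_{i+1}$ has $S_i$ as side information (indices mod $k$). Recall that in $G$ the relation ``$a_i$ demands $S_i$'' corresponds to the arc $S_i \to a_i$, while ``$a_{i+1}$ has $S_i$'' corresponds to $a_{i+1} \to S_i$. Concatenating these in reverse order produces the directed cycle
\begin{equation*}
a_1 \to S_k \to a_k \to S_{k-1} \to a_{k-1} \to \cdots \to a_2 \to S_1 \to a_1
\end{equation*}
in $G$, contradicting the hypothesis that $G$ is a DAG. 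Hence $H$ is acyclic.

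Next, let $\sigma$ be a topological ordering of $H$, so that if $\sigma(j) \to \sigma(i)$ in $H$ then $j < i$. Fix any $i \in [m]$ and any $S \in \mathbf{Y_{\sigma(i)}}$. Under the standing assumption that every source bit is demanded by at least one decoder, we may pick some decoder $a$ with $S \in \mathbf{f_a}$. Then $a \to \sigma(i)$ in $H$, and by the topological property $a = \sigma(j)$ for some $j < i$. Consequently $S \in \mathbf{f_{\sigma(j)}} \subseteq \bigcup_{j' < i} \mathbf{f_{\sigma(j')}}$, establishing (\ref{eq:lb_DAG}).

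The only delicate step is the cycle-lifting argument, where one must track the orientations of the bipartite edges so that the alternation ``demand, then side information'' traverses $G$ consistently in a single direction; the rest of the proof is a routine application of topological sorting. Note also that $a \ne \sigma(i)$ in the above argument (so $j \ne i$) because a decoder is assumed not to demand any bit of its own side information, ensuring the arc $a \to \sigma(i)$ is not a self-loop.
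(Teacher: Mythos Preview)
Your proof is correct, but it takes a genuinely different route from the paper's. The paper argues directly on the bipartite graph $G$: it observes that a DAG must contain a node with no outgoing edges, that such a node must be a decoder (since every source bit is demanded and hence has an outgoing edge), and that this decoder therefore has empty side information; it then deletes this decoder and iterates. Your approach instead passes to an auxiliary decoder-only digraph $H$, proves $H$ is acyclic by lifting any putative cycle back to a closed walk in $G$, and then invokes a topological sort of $H$ in one stroke.

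The two arguments are of course close cousins---repeatedly peeling off sinks is one way to produce a topological order---but your version separates the structural fact (acyclicity of $H$) from the ordering step, which makes the logic a bit more modular. The paper's version, by working directly in $G$, avoids introducing $H$ altogether and is slightly more self-contained. One minor point worth making explicit in your write-up: the $S_i$ you choose in the cycle-lifting step need not be distinct, so what you actually produce in $G$ is a closed directed walk rather than a simple cycle; but any closed directed walk contains a directed cycle, so the contradiction still goes through.
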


\begin{proof} 
We follow Neely~\emph{et al.}~\cite{neely_dynamic}.
Observe that every index coding problem represented
by a DAG must have a node in the graph with no outgoing edges.
This node must represent some decoder $\ell$ since every source bit
is assumed to be demanded by at least one decoder
(see Section~\ref{sec:problem_defn_index}). Decoder $\ell$
must then have no side information. Let $\sigma(1) = \ell$.

We then proceed by induction. Suppose the containment in 
(\ref{eq:lb_DAG}) holds for all $i$ in $[k]$ with $k < m$. 
Consider the modified index coding instance in which we 
delete Decoders $\sigma(1), \ldots, \sigma(k)$ and all of
their incoming and outgoing edges in the graph. We also
delete any source components that are left with no edges.
This instance must again be a DAG, and since $k < m$ it must
have at least one decoder node, so there must be a decoder
$\nu$ that has no side information. It follows that in the
original instance, $\mathbf{Y}_{\nu} \subset 
\cup_{j = 1}^{k} \mathbf{f}_{\sigma(j)}.$ We then set $\sigma(k+1) = \nu$.
\end{proof}
By Lemma \ref{lemma:DAG}, we have that the DSM+ lower bound $R_{DSM+}$ is greater than or equal to
\begin{align*}
&H(\mathbf{f_{\sigma(1)}}|\mathbf{Y_{\sigma(1)}})
+ H( \mathbf{f_{\sigma(2)}}|\mathbf{f_{\sigma(1)}},\mathbf{Y_{\sigma(1)}},\mathbf{Y_{\sigma(2)}}) 
+\cdots+
H( \mathbf{f_{\sigma(m)}}|\mathbf{f_{\sigma(1)}},\ldots,\mathbf{f_{\sigma(m-1)}},\mathbf{Y_{\sigma(1)}},\ldots,\mathbf{Y_{\sigma(m)}}) 
\\
&= H(\mathbf{f_{\sigma(1)}})
+ H( \mathbf{f_{\sigma(2)}}|\mathbf{f_{\sigma(1)}}) 
+\cdots+
H( \mathbf{f_{\sigma(m)}}|\mathbf{f_{\sigma(1)}},\ldots,\mathbf{f_{\sigma(m-1)}}), 
\end{align*}
giving $R_{DSM+} \ge s$.
Hence  CAPM gives the optimal rate for DAGs.
%%%%%%%%%%%%%%%%%%%%%%%%%%%%%%%%%%%%%%%

Finally, Tehrani~\emph{et al.}~\cite{neely} show that partition
multicast achieves the optimal rate when each decoder demands
a single bit and has as side information all of the other 
source bits. Note that under these assumptions one may,
without loss of generality, assume that each source bit
is demanded by at most one decoder; two decoders
that demand the same source bit must have the same side
information and therefore one of the two can be deleted
without affecting the rate. Then each source component
must be present as side information at either all or
all but one of the decoders.

We shall prove that CAPM is optimal for the
more general scenario in which each source bit is
present at none of the decoders, all of the decoders, all 
but one, or all but two. That is, $\mathbf{S}$ consists of 
$\{ G_0, G_m, G_{m-1}, G_{m-2} \}$. We do not assume that
each decoder demands a single bit or that each bit is
demanded by at most one decoder.

\begin{theorem}
\label{theorem:index}
The optimal rate, $R_{opt}$, for the index coding problem where $\mathbf{S} = \{ G_0, G_m, G_{m-1}, G_{m-2} \}$ is
	\begin{align}
	\label{optrate}
	R_{opt} &= \max\{R_1, \ldots, R_m\}
	\end{align}
	where 
	\begin{align}
	\label{optrate_i}
	R_i =  & H(\mathbf{f_1}\backslash \{\cup_{\{1,\beta\} \subseteq [m]}f_{1\{1,\beta\} }\},\ldots, 
\mathbf{f_{i-1}}\backslash \{\cup_{\{i-1,\beta\} \subseteq [m]}f_{i-1\{i-1,\beta\} }\}, 
	\notag \\
	&\mathbf{f_i},\mathbf{f_{i+1}}\backslash \{\cup_{\{i+1,\beta\} \subseteq [m]}f_{i+1\{i+1,\beta\} }\},\ldots, 
\mathbf{f_m} \backslash \{\cup_{\{m,\beta\} \subseteq [m]}f_{m\{m,\beta\} }\}|\mathbf{Y_i}) +\max_{j, j\in [m]\backslash i}f_{j\{i,j\}},
	\end{align}
and is achieved by CAPM.
\end{theorem}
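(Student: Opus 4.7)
The plan is to prove $R_{opt} = \max_i R_i$ by matching bounds: the DSM+ lower bound of Theorem~\ref{thm:lower_opt} will yield $R_{opt} \ge \max_i R_i$, and a direct computation of $\RC$ will yield $\RC = \max_i R_i$.

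For the lower bound, I would fix $i \in [m]$, choose $j^* \in \arg\max_{j \neq i}|f_{j\{i,j\}}|$, and invoke Theorem~\ref{thm:lower_opt} with the permutation $\sigma$ having $\sigma(1) = i$, $\sigma(2) = j^*$, and $\sigma(3),\ldots,\sigma(m)$ an arbitrary ordering of $[m]\setminus\{i,j^*\}$. Because $\mathbf{S} \subseteq G_0 \cup G_m \cup G_{m-1} \cup G_{m-2}$, every $\mathbf{S_J}$ with $|J| \ge 3$ is empty, so as soon as three or more side-information variables appear in the conditioning every $\mathbf{S_J}$ other than $G_0 = \mathbf{S_{[m]}}$ is revealed. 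Working term-by-term: $H(\mathbf{f_i}|\mathbf{Y_i})$ collects all the $\mathbf{S_J}$-parts of $\mathbf{f_i}$ with $J \ni i$; $H(\mathbf{f_{j^*}}|\mathbf{f_i},\mathbf{Y_i},\mathbf{Y_{j^*}})$ isolates $|f_{j^*\{j^*,i\}}|$ (the singleton-$\{j^*\}$ demands in $\mathbf{S_{\{i,j^*\}}}$) plus a $G_0$-residual; and for $k \ge 3$ the $k$-th term collapses to its $G_0$-residual alone. The $G_0$-residuals telescope to $|G_0|$ using the standing assumption that every source bit is demanded, and summing yields exactly $R_i$. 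Maximizing over $i$ gives $R_{opt} \ge \max_i R_i$.

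For the upper bound, I would trace CAPM under the structural hypothesis. In Step~1, $G_0$, every $f_{i\{i\}}$, and every $f_{\{j,k\}\{j,k\}}$ land in $U_{[m]}$, while each $f_{\alpha\{\alpha,\beta\}}$ lands in the level-$(m-1)$ message $U_{[m]\setminus\{\beta\}}$. Because $H(U_{[m]\setminus\{\beta\}}|\mathbf{Y_i}) = |f_{i\{i,\beta\}}|$ for $i \neq \beta$, the Step~2 equalization moves excess upward until each type retains $\mu_\beta := \min_{\alpha' \neq \beta}|f_{\alpha'\{\alpha',\beta\}}|$ bits in $U_{[m]\setminus\{\beta\}}$, contributing $\mu_\beta$ to the rate. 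In Step~3, a matching/type-counting argument shows that, for each group $\beta$, the $e_\alpha^\beta := |f_{\alpha\{\alpha,\beta\}}| - \mu_\beta$ singleton-type copies can be combined via disjoint-$N$ XORs into exactly $M_\beta - \mu_\beta$ surviving bits, where $M_\beta := \max_{\alpha \neq \beta}|f_{\alpha\{\alpha,\beta\}}|$. Computing $H(U_{[m]}|\mathbf{Y_i})$ then becomes straightforward: the Step~1 bits contribute $|G_0| + |f_{i\{i\}}| + \sum_{k \neq i}|f_{\{i,k\}\{i,k\}}|$, and the XOR-ed excess from group $\beta$ contributes $|f_{i\{i,\beta\}}|-\mu_\beta$ whenever $i \neq \beta$ (an invariant of the XOR order, since each $\{i\}$-singleton is used by exactly one XORed bit) and $M_i - \mu_i$ when $i = \beta$. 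Adding $\sum_\beta \mu_\beta$ and taking $\max_i$ cancels the $\mu_\beta$ terms and produces $\RC = \max_i R_i$.

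The main obstacle will be the Step~3 analysis. The algorithm does not prescribe which valid pair to XOR, and a poor choice can stall above the optimum: pairing $\{1\}\oplus\{2\}$, $\{2\}\oplus\{3\}$, and $\{1\}\oplus\{3\}$ produces three pairwise-intersecting bits when the optimum is only two. I would address this by exhibiting an explicit admissible XOR schedule that achieves $M_\beta - \mu_\beta$---for instance, arranging the $e_\alpha^\beta$ singleton copies in an $M_\beta \times (m-1)$ matrix with column $\alpha$ filled in rows $1,\ldots,e_\alpha^\beta$ and XOR-ing row-by-row---and verifying that each XOR in this schedule satisfies the Step~3 side-information conditions. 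The remaining work (the $G_0$-telescoping in the lower bound, and the entropy bookkeeping for $H(U_{[m]}|\mathbf{Y_i})$) is routine.
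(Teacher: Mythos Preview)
Your proposal is correct and follows essentially the same route as the paper: the converse invokes the DSM+ bound (Theorem~\ref{thm:lower_opt}) with permutations having $\sigma(1)=i$ and $\sigma(2)=j^*\in\arg\max_{j\neq i}|f_{j\{i,j\}}|$, and the achievability traces CAPM with an explicit XOR schedule to control Step~3. The paper's Definition~\ref{def5} (the $\oplus$ operation applied to each size-sorted row $A_\alpha$) is exactly your row-by-row matrix XOR, introduced for precisely the reason you flag---an arbitrary Step-3 order can stall above $M_\beta-\mu_\beta$---so your ``main obstacle'' and its resolution coincide with the paper's.
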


\begin{corollary}
\label{corr:3decoder}
For any index coding problem with three or fewer decoders, the 
optimal rate is given by (\ref{optrate}), and is achieved
by CAPM.
\end{corollary}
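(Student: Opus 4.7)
The plan is to obtain Corollary~\ref{corr:3decoder} as an immediate consequence of Theorem~\ref{theorem:index}, by verifying that the structural hypothesis $\mathbf{S} = \{G_0, G_m, G_{m-1}, G_{m-2}\}$ of that theorem is satisfied automatically whenever the number of decoders satisfies $m \le 3$. Recall that in the partition of $\mathbf{S}$ introduced in Section~\ref{sec:problem_defn_index}, the set $G_k$ collects those source bits that are known to exactly $k$ of the decoders as side information, so $\mathbf{S} = G_0 \cup G_1 \cup \cdots \cup G_m$ partitions the source according to how many decoders hold each bit. The hypothesis of Theorem~\ref{theorem:index} is simply the requirement that all $G_k$ with $k \notin \{0, m-2, m-1, m\}$ be empty.

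First I would check case-by-case that no such ``forbidden'' indices $k$ arise when $m \le 3$. For $m = 3$ the possible indices are $k \in \{0,1,2,3\}$, and these equal $\{0, m-2, m-1, m\}$ exactly, so every bit of $\mathbf{S}$ lies in one of $G_0, G_1 = G_{m-2}, G_2 = G_{m-1}, G_3 = G_m$. For $m = 2$ the possible indices are $\{0,1,2\}$, which are covered (with repetition) by $\{0, m-2, m-1, m\} = \{0, 0, 1, 2\}$. For $m = 1$ only $k \in \{0,1\}$ can arise, and these are again covered by $\{0, m\}$ (with $G_{m-1}, G_{m-2}$ vacuous or coinciding with $G_0$). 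Thus for every instance with at most three decoders, the hypothesis of Theorem~\ref{theorem:index} holds trivially.

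Once this reduction is in hand, invoking Theorem~\ref{theorem:index} directly produces the expression in (\ref{optrate}) for the optimal rate and the optimality of CAPM. The plan involves no real obstacle: the content of the corollary lies entirely in Theorem~\ref{theorem:index}, and the role of Corollary~\ref{corr:3decoder} is merely to highlight that the hypothesis of that theorem, though formulated in a way that at first appears restrictive, is vacuous in the three-decoder regime.
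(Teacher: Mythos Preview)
Your proposal is correct and follows exactly the same approach as the paper's proof: you observe that with at most three decoders every source bit is necessarily present as side information at all, none, all but one, or all but two decoders, so the hypothesis of Theorem~\ref{theorem:index} is automatically satisfied and the corollary follows immediately. The paper states this observation in a single sentence; your case-by-case verification for $m=1,2,3$ is more explicit but adds nothing essential.
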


\begin{proof}
Any index coding problem with three or fewer decoders must have
the property that each source component is present as side 
information at either all of the decoders, none of the decoders,
all but one, or all but two.
\end{proof}

\begin{remark}
Since CAPM and the DSM+ lower bound only give integer-valued
bounds, it follows that the optimal rate is integer-valued
for the scenario described in Theorem~\ref{theorem:index},
and in particular, in Corollary~\ref{corr:3decoder}.
\end{remark}

\begin{remark}
Corollary~\ref{corr:3decoder} solves the index coding problem with 
three decoders and any number of source components. In contrast,
Arbabjolfaei~\emph{et al.}~\cite{arbabjolfaei} solve the index coding
problem with up to five source components and any number of decoders.
Evidently neither of these results implies the other, even if one
ignores slight differences in the problem formulation between
the two papers. It is also
worth noting that the Arbabjolfaei~\emph{et al.} result is numerical
while Corollary~\ref{corr:3decoder} is analytical.
\end{remark}

%%%%%%%%%%%%%%%%%%%%%%%%%%%%%%%%%%%%%%%%%%%%%%%%%%%%%%%%%%%%%%
\begin{proof}[Proof of Theorem \ref{theorem:index}]
\textit{\\1. Achievability:}

We utilize the achievable scheme of Proposition \ref{ach_general} and apply the CAPM for selection of $U_{I}$ to get an explicit expression. 
\\
\textit{Step 1}:
 Note that there is no demand related to $G_m$ since all decoders have it as side information.  We place all demands of all decoders related to $G_{m-1}$ and $G_{0}$ into $U_{[m]}$. 
The remaining demands are subsets of $G_{m-2}$, i.e., components that $m-2$ of decoders have. 
For any $\mathbf{S_J} \in G_{m-2}$, where $J=\{\alpha,\beta\}$, $\alpha \neq \beta$ Decoder $\alpha$ and Decoder $\beta$ are the two decoders that do not have  $\mathbf{S_J}$ as side information. Then  we place $f_{\{\alpha,\beta\}J}$ to $U_{[m]}$. 
Also,  we place  $f_{\alpha J}$ and $f_{\beta J}$ to level $m-1$ messages. 
Since there is no demand related to $G_{i}, \forall i \in [m-3]$, all messages $U_{I}, |I| \leq m-2$ will be empty.  This completes the Step $1$.

\textit{Step $2$ and $3$}: To determine the leftover bits in Step $2$ and bits to be XORed in Step $3$ we write the demands in the following way:
Note that there are $m(m-1)$ different non overlapping pairs of demands related to $G_{m-2}$ since $|G_{m-2}| = \frac{m(m-1)}{2}$ and there are two demands $f_{\alpha J}$ and $f_{\beta J}$ for each $\mathbf{S_{\{\alpha,\beta\}}} \in G_{m-2}$. Also, note that for each Decoder $\alpha$ there are $m-1$  non overlapping demands, $f_{\alpha J}$ related to $G_{m-2}$. Therefore, we can put all those non overlapping demands into a matrix $A$ with $m$ rows and $m-1$ columns in the following way. $\alpha^{th}$ row, denoted by $A_\alpha$, consists of demands $f_{\beta\{\alpha,\beta\}}$ where $\beta$ runs over the set $[m]\backslash \{\alpha\}$.
Note that  $A_{\alpha}$ does not contain any demand from the Decoder $\alpha$. Also, for each $f_{\beta\{\alpha,\beta\}}$, all entries of $A_\alpha$ other than $f_{\beta\{\alpha,\beta\}}$ exist as side information at Decoder 
$\beta$.
Hence, we observe that all non overlapping demands which are related to $G_{m-2}$ and placed in $U_{\alpha^\cc}$ at Step $1$  are in $A_{\alpha}$. Also, there is no other type of demand in level $m-1$ messages.

Lastly, as the size of each demand  in $A_\alpha$ can be different, we arrange the entries in $A_\alpha$ in an increasing order with respect to their sizes. If two demands are in equal size, which one is put first does not matter. This completes the construction of the matrix $A$.  

For each $A_\alpha$ in $A$ we apply the following $\oplus$ operation. 
 \begin{definition} 
\label{def5}
Let $a_i, i\in \{1,\ldots,m-1\}$ be  vectors. Assume without loss of generality that $l_1 \le l_2\le,\ldots,\le l_{m-1}$ where $l_i = |a_i|$ denotes the number of elements in $a_i$. Then, 
\begin{align*}
(a_1,a_2,\ldots,a_{m-1})^{\oplus} = (a^\oplus_1,\ldots,a^\oplus_{m-1}) 
\end{align*}
where 
\begin{align*}
(a^\oplus_1,\ldots,a^\oplus_{m-1})  =( &a_1\oplus(a_2)_{l_1}\oplus\cdots\oplus(a_{m-1})_{l_1}, \\
&(a_2)_{l_2-l_1}\oplus\cdots\oplus(a_{m-1})_{l_2-l_1}, \\
& \cdots, \\
&(a_{m-1})_{l_{m-1}-l_{m-2}}), \\
\end{align*} 
and where $(a_i)_{l_j-l_k}$ denotes the vector consisting of components of $a_i$  from $(l_k +1)^{th}$ to $l_j^{th}$ component.
 \end{definition}
Note that all the components in $A^\oplus_{\alpha2},\ldots,A^\oplus_{\alpha(m-1)}$ are leftover bits and moved to $U_{[m]}$.
Then for each $A_\alpha^{\oplus}=(A^\oplus_{\alpha1},\ldots,A^\oplus_{\alpha(m-1)})$, all the components in $A^\oplus_{\alpha1}$ (\textit{i.e.,} $ A_{\alpha1},(A_{\alpha2})_{|A_{\alpha1}|},\ldots,(A_{\alpha(m-1)})_{|A_{\alpha1}|}$) remain in $U_{\alpha^\cc}$.
This concludes Step 2.

Lastly by Step $3$, we have $U_{\alpha^\cc}= A^\oplus_{\alpha1}$, for all $\alpha \in [m]$, 
 and $A^\oplus_{\alpha2},\ldots,A^\oplus_{\alpha(m-1)}$ are in $U_{[m]}$. This concludes CAPM.
Also for the ease of notation, when we write $A^\oplus\backslash \{\cup_{\alpha \in [m]}A^\oplus_{\alpha1} \}$, we will mean the vector $A^\oplus$ with the components $A^\oplus_{\alpha1}$ for all $\alpha \in [m]$ removed.

 Hence,
\begin{align*}
&U_{[m]} = \mathbf{f_1}\backslash \{\cup_{\{1,\beta\} \subseteq [m]}f_{1\{1,\beta\}} \}, \mathbf{f_2}\backslash \{\cup_{\{2,\beta\} \subseteq [m]}f_{2\{2,\beta\} }\},
 \ldots,\mathbf{f_m}\backslash \{\cup_{\{m,\beta\} \subseteq [m]}f_{m\{m,\beta\} } \}, A^\oplus\backslash \{\cup_{\alpha \in [m]}A^\oplus_{\alpha1} \}
 \\
& U_{\alpha^\cc}  = A^\oplus_{\alpha1} \quad \forall \alpha \in [m],
\end{align*}
and  we can write the achievable rate $R_{CAPM}$ as
\begin{align}
&  \max_{k \in [m]} \{ H(\mathbf{f_1}\backslash \{\cup_{\{1,\beta\} \subseteq [m]}f_{1\{1,\beta\}} \}, \mathbf{f_2}\backslash \{\cup_{\{2,\beta\} \subseteq [m]}f_{2\{2,\beta\} }\}
 ,\ldots,\mathbf{f_m}\backslash \{\cup_{\{m,\beta\} \subseteq [m]}f_{m\{m,\beta\} } \}, A^\oplus\backslash \{\cup_{\alpha \in [m]}A^\oplus_{\alpha1} \} |\mathbf{Y_k})\}  
\notag \\
\label{ach rate}
&\quad + \max_{k\in \{1\}^\cc}\{H( A^\oplus_{11}|\mathbf{Y_k})\} + \cdots 
+ \max_{k\in \{m\}^\cc}\{H( A^\oplus_{m1}|\mathbf{Y_k})\}. 
\end{align}

Note that 
\begin{align}
\label{exor}
&H(A_{\alpha1},(A_{\alpha2})_{|A_{\alpha1}|},\ldots,(A_{\alpha(m-1)})_{|A_{\alpha1}|} |\mathbf{Y_k}) \\
\label{noexor}
&\overset{a}{=} H(A_{\alpha1}\oplus(A_{\alpha2})_{|A_{\alpha1}|}\oplus,\cdots,\oplus(A_{\alpha(m-1)})_{|A_{\alpha1}|} |\mathbf{Y_k}) \\
&=H( A^\oplus_{\alpha1}|\mathbf{Y_k}).
\end{align}
Before rearranging the terms in $R_{CAPM}$ further, we would like to make the following remarks.
\begin{remark}
\label{remark1}
From the Definition~\ref{def5}, we know that for all $[m] \setminus \{\alpha\}$ the conditional entropy
\begin{align}
& H( A^\oplus_{\alpha1}|\mathbf{Y_k})
\notag \\
&=H((A_{\alpha j})_{|A_{\alpha1}|} |\mathbf{Y_k}) \notag \\
&= H((A_{\alpha j})_{|A_{\alpha1}|}) = |A_{\alpha1}| =\min_{l\in [m-1]}|A_{\alpha l}| \mbox{ bits,} \notag
\end{align}
 where $A_{\alpha j}$  is the demand at Decoder $k$ related to $G_{m-2}$.
\end{remark}
Hence we get,
\begin{align*}
H( U_{\alpha^\cc}|\mathbf{Y_k}) &=H( U_{\alpha^\cc}|\mathbf{Y_j})
\\
& =\min_{l\in [m-1]}|A_{\alpha l}| \mbox{ bits}, \forall k,j \in \{\alpha\}^\cc
\end{align*}
 
\begin{remark}
\label{remark2}
By Definition~\ref{def5}, for all $\alpha \in [m]$,
\begin{align*}
H(A^\oplus\backslash A^\oplus_{\alpha1} |\mathbf{Y_\alpha}) &=H(A^\oplus\backslash A^\oplus_{\alpha},(A^\oplus_{\alpha2},\ldots,A^\oplus_{\alpha(m-1)}) |\mathbf{Y_\alpha}) 
\\
&=H(A^\oplus\backslash A^\oplus_{\alpha}|\mathbf{Y_\alpha}) 
 +H(A^\oplus_{\alpha2},\ldots,A^\oplus_{\alpha(m-1)} |\mathbf{Y_\alpha}) 
\\
&=H(\cup_{\{\alpha,\beta\} \subseteq [m]}f_{\alpha \{\alpha,\beta\}} |\mathbf{Y_\alpha}) 
+ H(A^\oplus_{\alpha2},\ldots,A^\oplus_{\alpha(m-1)})  \\
&=H(\cup_{\{\alpha,\beta\} \subseteq [m]}f_{\alpha \{\alpha,\beta\}} |\mathbf{Y_\alpha}) 
+ (\max_{j}|A_{\alpha j}|- \min_{j}|A_{\alpha j}|).
\end{align*}
\end{remark}

%Using the remarks above, we can write 
%
%\begin{align*}
%R_{CAPM}&=  \max_{k \in [m]} \Big\{ H(\mathbf{f_1}\backslash \{\cup_{\{1,\beta\} \subseteq [m]}f_{1\{1,\beta\}} \},\ldots, 
% \mathbf{f_m}\backslash \{\cup_{\{m,\beta\} \subseteq [m]}f_{m\{m,\beta\} } \}, 
%A^\oplus\backslash \{\cup_{\alpha \in [m]}A^\oplus_{\alpha1} \} |\mathbf{Y_k}) \Big\}  \\
%&\quad + \max_{k\in \{1\}^\cc}\{H( U_{1^\cc}|\mathbf{Y_k})\} + \cdots 
%+ \max_{k\in \{m\}^\cc}\{H(U_{m^\cc}|\mathbf{Y_k})\} .
%\end{align*}

By Remark~\ref{remark1}, $H( U_{\alpha^\cc}|\mathbf{Y_k}) =H( U_{\alpha^\cc}|\mathbf{Y_j}) $ for all $k,j \in \{\alpha\}^\cc$. When we expand the terms inside $\max_{k\in [m]}$, we can write
\begin{align*}
R_{CAPM}=&\max \Big\{ H(\mathbf{f_1}\backslash \{\cup_{\{1,\beta\} \subseteq [m]}f_{1\{1,\beta\}} \},\ldots,
 \mathbf{f_m}\backslash \{\cup_{\{m,\beta\} \subseteq [m]}f_{m\{m,\beta\} } \},
A^\oplus\backslash \{\cup_{\alpha \in [m]}A^\oplus_{\alpha1} \}|\mathbf{Y_1}) 
 \\
&\quad \quad \quad+ \sum_{\alpha\in \{1\}^\cc} H( U_{\alpha^\cc}|\mathbf{Y_1}) + 
\max_{k\in \{1\}^\cc}\{H(U_{1^\cc}|\mathbf{Y_k})\}, 
 \\
& \quad \quad H(\mathbf{f_1}\backslash \{\cup_{\{1,\beta\} \subseteq [m]}f_{1\{1,\beta\}} \},\ldots,
\mathbf{f_m}\backslash \{\cup_{\{m,\beta\} \subseteq [m]}f_{m\{m,\beta\} } \}, 
A^\oplus\backslash \{\cup_{\alpha \in [m]}A^\oplus_{\alpha1} \}|\mathbf{Y_2})  
\\
&\quad \quad+ \sum_{\alpha\in \{2\}^\cc} H( U_{\alpha^\cc}|\mathbf{Y_2}) + 
\max_{k\in \{2\}^\cc}\{H(U_{2^\cc}|\mathbf{Y_k})\}
\\
&\quad \quad,\ldots, 
\\
&\quad \quad  H(\mathbf{f_1}\backslash \{\cup_{\{1,\beta\} \subseteq [m]}f_{1\{1,\beta\}} \},  \ldots,
 \mathbf{f_m}\backslash \{\cup_{\{m,\beta\} \subseteq [m]}f_{m\{m,\beta\} } \}, 
A^\oplus\backslash \{\cup_{\alpha \in [m]}A^\oplus_{\alpha1} \}|\mathbf{Y_m}) 
\\
&\quad \quad + \sum_{\alpha\in \{m\}^\cc} H(U_{\alpha^\cc}|\mathbf{Y_m}) + 
\max_{k\in \{m\}^\cc}\{H(U_{m^\cc}|\mathbf{Y_k})\} \Big\}.
\end{align*}

Since $H( U_{\alpha^\cc}|\mathbf{Y_k}) =H( A^\oplus_{\alpha1}|\mathbf{Y_k})$  and $H( U_{\alpha^\cc}|\mathbf{Y_k}) = |A^\oplus_{\alpha 1}|$  for all $k \in \{\alpha\}^\cc$ by Remark ~\ref{remark1}, we can further write $R_{CAPM}$ as
\begin{align*}
&\max \Big\{ H(\mathbf{f_1}\backslash \{\cup_{\{1,\beta\} \subseteq [m]}f_{1\{1,\beta\}} \},\ldots,
\mathbf{f_m}\backslash \{\cup_{\{m,\beta\} \subseteq [m]}f_{m\{m,\beta\} } \}, 
A^\oplus\backslash \{\cup_{\alpha \in [m]}A^\oplus_{\alpha1} \}|\mathbf{Y_1})  
\\
&\quad \quad \quad+ \sum_{\alpha\in \{1\}^\cc} H( A^\oplus_{\alpha1}|\mathbf{Y_1}) + 
|A^\oplus_{11} |, 
 \\
&\quad \quad H(\mathbf{f_1}\backslash \{\cup_{\{1,\beta\} \subseteq [m]}f_{1\{1,\beta\}} \},\ldots, 
 \mathbf{f_m}\backslash \{\cup_{\{m,\beta\} \subseteq [m]}f_{m\{m,\beta\} } \},
A^\oplus\backslash \{\cup_{\alpha \in [m]}A^\oplus_{\alpha1} \}|\mathbf{Y_2}) 
 \\
&\quad \quad + \sum_{\alpha\in \{2\}^\cc} H( A^\oplus_{\alpha1}|\mathbf{Y_2}) + 
|A^\oplus_{21} | 
\\
&\quad \quad ,\ldots, \\
&\quad \quad H(\mathbf{f_1}\backslash \{\cup_{\{1,\beta\} \subseteq [m]}f_{1\{1,\beta\}} \},\ldots, 
\mathbf{f_m}\backslash \{\cup_{\{m,\beta\} \subseteq [m]}f_{m\{m,\beta\} } \}, 
A^\oplus\backslash \{\cup_{\alpha \in [m]}A^\oplus_{\alpha1} \}|\mathbf{Y_m}) 
\\
&\quad \quad + \sum_{\alpha\in \{m\}^\cc} H(A^\oplus_{\alpha1}|\mathbf{Y_m}) + 
|A^\oplus_{m1} | \Big\}.
\end{align*}

By applying chain rule to the expression above and since all 
$U_{i^\cc} = A^\oplus_{i1}$ satisfy the condition $2$ in Proposition \ref{ach_general}, we have
\begin{align*}
R_{CAPM}=
& \max \Big\{ H(\mathbf{f_1}\backslash \{\cup_{\{1,\beta\} \subseteq [m]}f_{1\{1,\beta\}} \}, 
\ldots, \mathbf{f_m}\backslash \{\cup_{\{m,\beta\} \subseteq [m]}f_{m\{m,\beta\} } \}, 
A^\oplus\backslash \{\cup_{\alpha \in [m]}A^\oplus_{\alpha1} \}|\mathbf{Y_1})  
\\
&\quad \quad \quad+  H( \cup_{\alpha\in \{1\}^\cc} A^\oplus_{\alpha1}|\mathbf{Y_1}) + 
|A^\oplus_{11} |, 
 \\
&\quad \quad H(\mathbf{f_1}\backslash \{\cup_{\{1,\beta\} \subseteq [m]}f_{1\{1,\beta\}} \}, 
\ldots, \mathbf{f_m}\backslash \{\cup_{\{m,\beta\} \subseteq [m]}f_{m\{m,\beta\} } \}, A^\oplus\backslash \{\cup_{\alpha \in [m]}A^\oplus_{\alpha1} \}|\mathbf{Y_2})  
\\
&\quad \quad +  H( \cup_{\alpha\in \{2\}^\cc}A^\oplus_{\alpha1}|\mathbf{Y_2}) + 
|A^\oplus_{21} | 
\\
&\quad \quad ,\ldots, \\
&\quad \quad  H(\mathbf{f_1}\backslash \{\cup_{\{1,\beta\} \subseteq [m]}f_{1\{1,\beta\}} \},\ldots,
 \mathbf{f_m}\backslash \{\cup_{\{m,\beta\} \subseteq [m]}f_{m\{m,\beta\} } \},
A^\oplus\backslash \{\cup_{\alpha \in [m]}A^\oplus_{\alpha1} \}|\mathbf{Y_m}) 
\\
 &\quad \quad + H(\cup_{\alpha\in \{m\}^\cc} A^\oplus_{\alpha1}|\mathbf{Y_m}) + 
|A^\oplus_{m1} | \Big\}.
\end{align*}

Applying chain rule  once more and since all 
$U_{i^\cc} = A^\oplus_{i1}$ satisfy the condition $2$ in Proposition \ref{ach_general}, we get 
\begin{align*}
R_{CAPM} =
&\max \Big\{ H(\mathbf{f_1}\backslash \{\cup_{\{1,\beta\} \subseteq [m]}f_{1\{1,\beta\}} \}, 
\ldots, \mathbf{f_m}\backslash \{\cup_{\{m,\beta\} \subseteq [m]}f_{m\{m,\beta\} } \}|\mathbf{Y_1}) 
+ H(A^\oplus\backslash A^\oplus_{11} |\mathbf{Y_1})  
+ |A^\oplus_{11} |,  
\\
&\quad \quad  H(\mathbf{f_1}\backslash \{\cup_{\{1,\beta\} \subseteq [m]}f_{1\{1,\beta\}} \},\ldots,
 \mathbf{f_m}\backslash \{\cup_{\{m,\beta\} \subseteq [m]}f_{m\{m,\beta\} } \} |\mathbf{Y_2}) +H(A^\oplus\backslash A^\oplus_{21} |\mathbf{Y_2})  
+ |A^\oplus_{21} |  \\
&\quad \quad,\ldots, \\
&\quad \quad H(\mathbf{f_1}\backslash \{\cup_{\{1,\beta\} \subseteq [m]}f_{1\{1,\beta\}} \},\ldots,
 \mathbf{f_m}\backslash \{\cup_{\{m,\beta\} \subseteq [m]}f_{m\{m,\beta\} } \} |\mathbf{Y_m}) 
+ H(A^\oplus\backslash A^\oplus_{m1} |\mathbf{Y_m}) 
+ |A^\oplus_{m1} | \Big\}.
\end{align*}

Finally, using Remark \ref{remark2}, that $|A_{\alpha1}|=\min_{j}|A_{\alpha j}|$ and chain rule, we have the following expression for $R_{CAPM}$.
\begin{align*}
R_{CAPM} =
&\max \Big\{ H(\mathbf{f_1}, \mathbf{f_2}\backslash \{\cup_{\{2,\beta\} \subseteq [m]}f_{2\{2,\beta\} }\},\ldots,
 \mathbf{f_m}\backslash \{\cup_{\{m,\beta\} \subseteq [m]}f_{m\{m,\beta\} } \} |\mathbf{Y_1})  +\max_{j}|A_{1j}|,  
\\
&\quad H(\mathbf{f_1}\backslash \{\cup_{\{1,\beta\} \subseteq [m]}f_{1\{1,\beta\} }\}, \mathbf{f_2},\mathbf{f_3}\backslash \{\cup_{\{3,\beta\} \subseteq [m]}f_{3\{3,\beta\} }\},
\ldots,
 \mathbf{f_m}\backslash \{\cup_{\{m,\beta\} \subseteq [m]}f_{m\{m,\beta\} } \} |\mathbf{Y_2})  +\max_{j}|A_{2j}| 
 \\
&\quad,\ldots, 
\\
&\quad H(\mathbf{f_1}\backslash \{\cup_{\{1,\beta\} \subseteq [m]}f_{1\{1,\beta\} }\},\ldots,
 \mathbf{f_{m-1}}\backslash \{\cup_{\{m-1,\beta\} \subseteq [m]}f_{m-1\{m-1,\beta\} }\} ,\mathbf{f_m} |\mathbf{Y_m}) 
+\max_{j}|A_{mj}| \Big\}.
\end{align*} 

Then we can write this achievable rate for the problem, $R_{CAPM}$ in (\ref{ach rate}) as
\begin{equation}
R_{CAPM} = \max \{{R_{1},\ldots,R_{m}} \}. 
\end{equation}

Next, we find a lower bound which matches $R_{CAPM}$  above by utilizing the converse result in section \ref{sec:lower_bound_index}.
Let us focus on (\ref{rlo1}).
Here, we can write 
\begin{align}
&H(\mathbf{f_2}|\mathbf{Y_2},\mathbf{f_1},\mathbf{Y_1})
\notag \\
 &= H(\mathbf{f_2}|\mathbf{f_1},\mathbf{Y_1}) \notag \\
&= H(\mathbf{f_2}\backslash\{\cup_{\{2,\beta\} \in [m]}f_{2\{2,\beta\}} \},\{\cup_{\{2,\beta\} \in [m]}f_{2\{2,\beta\}} \}|\mathbf{f_1},\mathbf{Y_1}) \notag \\
&= H(\mathbf{f_2}\backslash\{\cup_{\{2,\beta\} \in [m]}f_{2\{2,\beta\}} \},f_{2\{2,1\}} \}|\mathbf{f_1},\mathbf{Y_1}) \notag \\
&=  H(\mathbf{f_2}\backslash\{\cup_{\{2,\beta\} \in [m]}f_{2\{2,\beta\}} \}, A_{1j}|\mathbf{f_1},\mathbf{Y_1}), \notag \\
 &\text{ where } A_{1j}=f_{2\{2,1\}}\notag \\
&=  H(\mathbf{f_2}\backslash\{\cup_{\{2,\beta\} \in [m]}f_{2\{2,\beta\}} \}|\mathbf{f_1},\mathbf{Y_1}) +H(A_{1j}) \notag \\
\label{entropyf2}
&=  H(\mathbf{f_2}\backslash\{\cup_{\{2,\beta\} \in [m]}f_{2\{2,\beta\}} \}|\mathbf{f_1},\mathbf{Y_1}) +|A_{1j}|
\end{align}
and 
\begin{align}
&H(\mathbf{f_3}|\mathbf{Y_3},\mathbf{f_2},\mathbf{Y_2},\mathbf{f_1},\mathbf{Y_1})
\notag \\
  &= H(\mathbf{f_3}|\mathbf{f_2},\mathbf{Y_2},\mathbf{f_1},\mathbf{Y_1}) \notag \\
&= H(\mathbf{f_3}\backslash\{\cup_{\{3,\beta\} \in [m]}f_{3\{3,\beta\}} \},\{\cup_{\{3,\beta\} \in [m]}f_{3\{3,\beta\}} \} |\mathbf{f_2},\mathbf{Y_2},\mathbf{f_1},\mathbf{Y_1})  \\
&=  H(\mathbf{f_3}\backslash\{\cup_{\{3,\beta\} \in [m]}f_{3\{3,\beta\}} \}|\mathbf{f_2},\mathbf{Y_2},\mathbf{f_1},\mathbf{Y_1}), \notag \\
&\text{since } \{\cup_{\{3,\beta\} \in [m]}f_{3\{3,\beta\}} \} \subset  \{\mathbf{Y_1},\mathbf{Y_2}\}  \notag \\
\label{entropyf3}
&=  H(\mathbf{f_3}\backslash\{\cup_{\{3,\beta\} \in [m]}f_{3\{3,\beta\}} \}|\mathbf{f_2}\backslash\{\cup_{\{2,\beta\} \in [m]}f_{2\{2,\beta\}} \}, \mathbf{Y_2},\mathbf{f_1},\mathbf{Y_1}), \\ 
&\text{since } \{\cup_{\{2,\beta\} \in [m]}f_{2\{2,\beta\}} \} \perp  \mathbf{f_3}\backslash\{\cup_{\{3,\beta\} \in [m]}f_{3\{3,\beta\}} \} | \mathbf{Y_i},\mathbf{f_j} \mbox{ for all } i,j \in [m]. \notag
\end{align}

Note that $\mathbf{f_3}$ can be written as 
\begin{align*}
\{f_{3[m]},f_{33},\cup_{\{3,\beta\} \in [m]}f_{3\{3,\beta\}},\cup_{\{3,\beta\} \in [m]}f_{\{3,\beta\}\{3,\beta\}} \}.
\end{align*}

Then we get the following equality:
\begin{align*}
  &\mathbf{f_3}\backslash \{\cup_{\{3,\beta\} \in [m]}f_{3\{3,\beta\}} \}
=\{f_{3[m]} ,f_{33},\cup_{\{3,\beta\} \in [m]}f_{\{3,\beta\}\{3,\beta\}}\}.
\end{align*}

Note that $f_{\{3,1\}\{3,1\}} \subseteq \mathbf{f_1}$ and $f_{\{3,\beta\}\{3,\beta\}}
\subseteq \mathbf{Y_1}$, for all $\beta \in [m]\backslash\{3\}$ and $\beta \neq 1$.
 Also, $f_{3[m]} \notin \mathbf{Y_2}$, $f_{33} \subseteq \mathbf{Y_1}$,
$f_{33} \subseteq \mathbf{Y_2}$. As a result, (\ref{entropyf3}) can be written as
\begin{align*}
&H(\mathbf{f_3}\backslash\{\cup_{\{3,\beta\} \in [m]}f_{3\{3,\beta\}} \}|\mathbf{f_2}\backslash\{\cup_{\{2,\beta\} \in [m]}f_{2\{2,\beta\}} \},\mathbf{f_1},\mathbf{Y_1}).
\end{align*}

By similar arguments as above, for $p > 2$ we can write
\begin{align}
&H(\mathbf{f_p}|\mathbf{Y_p},\mathbf{f_{p-1}},\mathbf{Y_{p-1}},\ldots,\mathbf{f_1},\mathbf{Y_1})
\notag \\
  & = H(\mathbf{f_p}|\mathbf{f_{p-1}},\mathbf{Y_{p-1}},\ldots,\mathbf{f_1},\mathbf{Y_1}) \notag \\
&= H(\mathbf{f_p}\backslash\{\cup_{\{p,\beta\} \in [m]}f_{p\{p,\beta\}} \},\cup_{\{p,\beta\} \in [m]}f_{p\{p,\beta\}} |\mathbf{f_{p-1}},
\mathbf{Y_{p-1}},\ldots,\mathbf{f_1},\mathbf{Y_1}) \notag \\
&\overset{a}{=}  H(\mathbf{f_p}\backslash\{\cup_{\{p,\beta\} \in [m]}f_{p\{p,\beta\}} \}|\mathbf{f_{p-1}},\mathbf{Y_{p-1}},\dots,\mathbf{f_1},\mathbf{Y_1}) \notag \\
&\overset{b}{=}  H(\mathbf{f_p}\backslash\{\cup_{\{p,\beta\} \in [m]}f_{p\{p,\beta\}} \} 
|\mathbf{f_{p-1}}\backslash\{\cup_{\{p-1,\beta\} \in [m]}f_{p-1\{p-1,\beta\}} \},\mathbf{Y_{p-1}},\ldots,
\mathbf{f_{2}} \backslash\{\cup_{\{2,\beta\} \in [m]}f_{2\{2,\beta\}} \},\mathbf{Y_{2}},\mathbf{f_1},\mathbf{Y_1}) \notag \\
&\overset{c}{=}  H(\mathbf{f_p}\backslash\{\cup_{\{p,\beta\} \in [m]}f_{p\{p,\beta\}} \} 
|\mathbf{f_{p-1}}\backslash\{\cup_{\{p-1,\beta\} \in [m]}f_{p-1\{p-1,\beta\}} \},\ldots,
\label{entropyfp}
\mathbf{f_{2}} \backslash\{\cup_{\{2,\beta\} \in [m]}f_{2\{2,\beta\}} \},\mathbf{f_1},\mathbf{Y_1})  
\end{align} 

a: Since  $\{\cup_{\{p,\beta\} \in [m]}f_{p\{p,\beta\}} \} \subset  \{\mathbf{Y_i},\mathbf{Y_j}\}$, $ \forall i,j \in [m], p\neq i, j \text{ and } i\neq j.$ 

b: Since $\{\cup_{\{\alpha,\beta\} \in [m]}f_{\alpha\{\alpha,\beta\}} \} \perp   \mathbf{f_\gamma}\backslash\{\cup_{\{\gamma,\beta\} \in [m]}f_{\gamma\{\gamma,\beta\}} \}$ $\ \ \ \ | \mathbf{Y_i},\mathbf{f_j}$, $\forall i,j,\alpha,\gamma \in [m]. $

c: Since $\mathbf{f_p}=\{f_{p[m]} ,f_{pp} ,\{\cup_{\{p,\beta\} \in [m]}f_{p\{p,\beta\}} \}$,
$\{\cup_{\{p,\beta\} \in [m]}f_{\{p,\beta\}\{p,\beta\}} \},\}$ and
 $\mathbf{f_p}\backslash \{\cup_{\{p,\beta\} \in [m]}f_{p\{p,\beta\}} \}$
 \\
 equals
$\{f_{p[m]} ,f_{pp} ,\{\cup_{\{p,\beta\} \in [m]}f_{\{p,\beta\}\{p,\beta\}} \} \}$,
 where
 \\
 $f_{p[m]} \notin \mathbf{Y_i}, \forall i\in[m]$ and $f_{pp} \subseteq \mathbf{Y_i}$  $\forall i \in [m], i \neq q$. Also, $f_{\{p,1\}\{p,1\}} \subseteq \mathbf{f_1}$ and 
 $f_{\{p,\beta\}\{p,\beta\}} \subseteq \mathbf{Y_1}$, $\forall \beta \in [m]\backslash \{p\}$, $\beta \neq 1$.

Hence, from $(\ref{entropyf2})$ and $(\ref{entropyfp})$, $(\ref{rlo1})$ can be written as
\begin{align}
R\ge & H(\mathbf{f_1},\mathbf{f_2}\backslash \{\cup_{\{2,\beta\} \in [m]}f_{2\{2,\beta\}} \},\ldots, 
\label{Rlo1}
\mathbf{f_m}\backslash \{\cup_{\{m,\beta\} \in [m]}f_{m\{m,\beta\}} \}|\mathbf{Y_1}) +|A_{1j}|, 
\end{align} 
 where $ A_{1j}=f_{2\{2,1\}}$.

 By similar arguments used to obtain $(\ref{Rlo1})$, we can write 
\begin{align}
R\ge &  H(\mathbf{f_1},\mathbf{f_2}\backslash \{\cup_{\{2,\beta\} \in [m]}f_{2\{2,\beta\}} \},\ldots, 
\label{Rlo1final}
\mathbf{f_m}\backslash \{\cup_{\{m,\beta\} \in [m]}f_{m\{m,\beta\}} \}|\mathbf{Y_1})  + \max_{j}|A_{1j}|.
\end{align} 

Note that the right hand side of $(\ref{Rlo1final})$ is $R_{1}$. Since the problem is symmetric, similarly we can get all $R_{i}$'s. Then,
 \begin{align}
R\ge & \max_i{R_{i}} =R_{CAPM}
\end{align} 
proving that $R_{CAPM}$ is optimal. This concludes the proof of the Theorem \ref{theorem:index}.
%%%%%%%%%%%%%%%%%%%%%%%%%%%%%%%%%%%%%%%%%%%%%%%%%%%%%%%%%%%%%

\end{proof}

The following result illustrates the importance of excess bits.

\begin{proposition}
\label{ach_step1}
If the demands of the $m$-user index coding problem are such that there are no excess bits after Step $1$ of CAPM then the rate obtained by following only Step $1$ is optimal. The optimal rate $R_*$ can be written as
%%%%%%%%%%%%%%%%%%%%%%%%%%%%%%%%
\begin{align}
R_* &= \max_{\pi} \{  H(\mathbf{f_{\pi(1)}}|\mathbf{Y_{\pi(1)}}) + H(\mathbf{f_{\pi(2)}}|\mathbf{f_{\pi(1)}},\mathbf{Y_{\pi(1)}}, \mathbf{Y_{\pi(2)}}) 
\notag \\
&\quad + \cdots+
\notag \\
\label{eq: opt_step1}
 & \quad H(\mathbf{f_{\pi(m)}}|\mathbf{f_{\pi(1)}},\mathbf{Y_{\pi(1)}},\ldots,\mathbf{f_{\pi(m-1)}},\mathbf{Y_{\pi(m-1)}},\mathbf{Y_{\pi(m)}}) \}
\end{align}
%%%%%%%%%%%%%%%%%%%%%%%%%%%%%%%%%%%
where $\pi(.)$ denotes the following $m$ permutations on $[m]$:
\begin{align*}
(1,2,\ldots,m), (2,1,3, \ldots,m), \ldots, (m,1,\ldots, m-1).
\end{align*}
\end{proposition}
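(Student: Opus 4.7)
The plan is to combine the DSM+ lower bound of Theorem~\ref{thm:lower_opt} with an explicit evaluation of the Step~1 output of CAPM. The converse $R_{opt} \geq R_{*}$ is immediate: Theorem~\ref{thm:lower_opt} maximizes over all $m!$ permutations, and $R_{*}$ is the maximum over the $m$ permutations listed in the proposition, which are a subset. All the real work is in the matching achievability $R_1 \leq R_{*}$, where $R_1$ denotes the rate delivered by Step~1 alone.

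The output of Step~1 is already a feasible selection for Proposition~\ref{ach_general} (as discussed in the text following the description of CAPM). Under the no-excess hypothesis, for every non-empty $U_I$ produced by Step~1 the quantity $H(U_I \mid \mathbf{Y_i})$ is constant over $i \in I$---this is precisely the condition that prevents Step~2 from moving any bit---so
\[
R_1 \;=\; \sum_I H(U_I \mid \mathbf{Y_{i_I}})
\]
for any choice $i_I \in I$ per index $I$. I would exploit this freedom by fixing an arbitrary $k \in [m]$ and selecting $i_I = k$ whenever $k \in I$ and $i_I = \min(I)$ otherwise; the claim is that this choice yields $R_1 = R_{\pi_k}$.

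The core computation is a bit-by-bit comparison. Classify each source component $f_{KJ}$ into three cases: (a) $J = [m]$; (b) $K = J \neq [m]$; (c) $K \neq J$ and $J \neq [m]$. Bits of type (a) sit in $U_{[m]}$ by the special rule and contribute to $H(U_{[m]} \mid \mathbf{Y_i})$ for every $i$; bits of type (b) sit in $U_{[m]}$ by the natural rule $K \cup J^\cc = [m]$ and contribute iff $i \in K$; bits of type (c) sit in $U_I$ with $I = K \cup J^\cc \neq [m]$ and contribute to $H(U_I \mid \mathbf{Y_i})$ iff $i \in K$. Under the above choice of $i_I$, a type-(c) bit therefore contributes to $R_1$ iff $k \in K$ when $k \in I$, or iff $\min(K \cup J^\cc) \in K$ when $k \notin I$ (in which case necessarily $k \in J$ and $k \notin K$). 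The key combinatorial identity $\min(K \cup J^\cc) \in K \Longleftrightarrow \{1, \ldots, \min(K)\} \subseteq J$ (which uses $K \cap J^\cc = \emptyset$) then matches these contributions exactly to the chain-rule expansion of $R_{\pi_k}$, in which $f_{KJ}$ contributes iff either $k \in K$, or $k \notin K$ and $\{1, \ldots, \min(K), k\} \subseteq J$. Combining, $R_{opt} \leq R_1 = R_{\pi_k} \leq R_{*}$ for every $k$, which together with the converse gives $R_{opt} = R_1 = R_{*}$. The main obstacle I anticipate is the bookkeeping just outlined---particularly tracking the two distinct paths by which bits arrive in $U_{[m]}$ (the special $J=[m]$ override versus the natural $K=J$ collapse of $K \cup J^\cc$), and verifying the combinatorial identity that links Step~1's counting rule to the chain-rule form of $R_{\pi_k}$.
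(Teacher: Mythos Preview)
Your overall plan is sound and close to the paper's, but there is one genuine slip. You assert that ``for every non-empty $U_I$ produced by Step~1 the quantity $H(U_I\mid\mathbf{Y_i})$ is constant over $i\in I$'' and hence that $R_1=\sum_I H(U_I\mid\mathbf{Y_{i_I}})$ for \emph{any} choice of $i_I\in I$. This fails for $I=[m]$: bits placed in $U_{[m]}$ during Step~1 are never declared excess (Step~2 never processes $U_{[m]}$, and the paper states this explicitly in its proof), so $H(U_{[m]}\mid\mathbf{Y_i})$ can genuinely vary with $i$. Consequently your claimed equality $R_1=R_{\pi_k}$ for every $k$ is false in general. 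The fix is immediate: for $I\subsetneq[m]$ the choice of $i_I$ is indeed free, while for $I=[m]$ one must keep the max, so the Step~1 rate is $\max_{k}\bigl[H(U_{[m]}\mid\mathbf{Y_k})+\sum_{I\subsetneq[m]}H(U_I\mid\mathbf{Y_{i_I}})\bigr]$. Your bit-counting then shows that the bracketed quantity (with $i_{[m]}=k$ and $i_I=\min(I)$ for $k\notin I$) equals $R_{\pi_k}$, whence $R_1=\max_k R_{\pi_k}=R_*$, and the sandwich with the DSM+ bound closes as you intended.

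Regarding method, the paper takes a more algebraic route to the same identity: it writes the Step~1 rate as $\max_i R_i$ with $R_i=H(U_{[m]}\mid\mathbf{Y_i})+\sum_{I\subsetneq[m]}H(U_I\mid\mathbf{Y_{i_I}})$, partitions the proper subsets $I$ according to their least index (so $C_1=\{I:1\in I\}$, $C_2=\{I:2\in I,1\notin I\}$, etc.), uses the independence condition in Proposition~\ref{ach_general} to merge $\sum_{I\in C_j}H(U_I\mid\mathbf{Y_j})$ into $H(\cup_{I\in C_j}U_I\mid\mathbf{Y_j})$, and then invokes two structural facts about Step~1 (decoders in $I^\cc$ neither demand nor hold any bit of $U_I$; each $U_I$ consists exactly of bits that every decoder in $I$ either demands or holds) to rewrite everything in terms of $\mathbf{f_j}$ and $\mathbf{Y_j}$. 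Your direct bit-by-bit accounting and the combinatorial identity $\min(K\cup J^\cc)\in K\iff\{1,\ldots,\min(K)\}\subseteq J$ accomplish the same thing more explicitly; the paper's grouping is perhaps cleaner to state, while your version makes the match to the chain-rule form of $R_{\pi_k}$ completely transparent. Both hinge on the same three observations the paper lists at the outset of its proof, including the crucial one you overlooked about $U_{[m]}$.
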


\begin{proof}

First we show that the achievable rate we get by applying Step $1$ of CAPM gives the expression in (\ref{eq: opt_step1}). We begin with the following three observations. Firstly, all demands of each Decoder $i$, $\mathbf{f_i}$, are in $\cup_{i \in I} U_{I}$. Secondly, since the demands are such that there are no excess bits after Step $1$, $H(U_{I} | \mathbf{Y_i}) = H(U_{I} | \mathbf{Y_j})$, for all $i,j \in I \subset [m]$. 
Lastly, demands placed in $U_{[m]}$ at Step $1$ cannot be excess bits since $U_{[m]}$ is the highest level message. Hence $H(U_{[m]} | \mathbf{Y_i})$ does not have to be equal for all $i \in [m]$.

We can write the achievable rate $\RC$ as
\begin{align*}
\RC =\max\{ R_{1},\ldots, R_{m}\}, \textrm{where}
\end{align*}
\begin{align}
R_{i} &=H(U_{[m]} | \mathbf{Y_i} ) + \sum_{I \subset [m]}\max_{j \in I}  \{H(U_{I}| \mathbf{Y_j})\}
\notag \\
\label{eqn:ach_noexcess}
&\overset{a}{=} H(U_{[m]} | \mathbf{Y_i} ) + \sum_{I \subset [m]} H(U_{I}| \mathbf{Y_{i_{I}}}), 
\end{align}
 where $i_{I}$ is an arbitrary element of $I$ and $a$ is due to the assumption that
  $H(U_{I} | \mathbf{Y_i}) = H(U_{I} | \mathbf{Y_j})$, for all $i,j \in I \subset [m]$.

Let us focus on $R_{1}$. From (\ref{eqn:ach_noexcess}), we can write $R_{1}$ as

\begin{align}
R_{1} &= \sum_{C_1} H(U_{I}| \mathbf{Y_1}) + \sum_{C_2} H(U_{I}| \mathbf{Y_2}) + \cdots 
+ \sum_{C_m} H(U_{I}| \mathbf{Y_m})  
\notag 
\end{align}
where $C_1 = \{I \subseteq [m]|1 \in I\}$, $C_2=\{I \subseteq [m] | 2 \in I, 1 \not\in I\}$, $\ldots$,$C_m =\{ I \subseteq [m]|  m \in I, 1 \notin I,\ldots, m-1 \not\in I \}$.
\\
 Since  all $U_I$'s satisfy the condition 2) in Proposition \ref{ach_general}, $R_{1}$ equals
\begin{align}
& H( \cup_{C_1}U_{I}| \mathbf{Y_1}) + H(\cup_{C_2} U_{I}| \mathbf{Y_2}) + \cdots 
 + H(\cup_{C_m} U_{I}| \mathbf{Y_m}) 
\notag \\ 
&{=}  H( \mathbf{U_1}| \mathbf{Y_1}) + H(\mathbf{U_2} \setminus\mathbf{U_1}| \mathbf{Y_2}) + \cdots 
+ H(\mathbf{U_m}\setminus\{ \mathbf{U_1},\ldots,\mathbf{U_{m-1}}\}| \mathbf{Y_m}) \notag,
\end{align} 
where $\mathbf{U_i}$ is defined as $\cup_{I \subseteq [m]:i \in I}U_{I}$.
By Step $1$, no decoder in $I^c$ can demand any source bit in $U_{I}$ or 
have it as side information. Then we can write $R_1$ as 
\begin{align}
& R_1 =H( \mathbf{U_1}| \mathbf{Y_1}) + H(\mathbf{U_2} \setminus\mathbf{U_1}| \mathbf{Y_1}, \mathbf{Y_2}) + \cdots 
\label{ineq:example}
 + H(\mathbf{U_m}\setminus\{ \mathbf{U_1},\ldots,\mathbf{U_{m-1}}\}| \mathbf{Y_1},\ldots,\mathbf{Y_m}). 
\end{align}

Also, by Step $1$, each $U_{I}$ consists of those source bits such
that, for each decoder $i$ in $I$, Decoder $i$ either demands the
bit or has it as side information. Then (\ref{ineq:example}) becomes
\begin{align}
& H( \mathbf{f_1}| \mathbf{Y_1}) + H(\mathbf{f_2} \setminus \mathbf{f_1}| \mathbf{Y_1}, \mathbf{Y_2}) + \cdots 
+ H(\mathbf{f_m}\setminus\{\mathbf{f_1},\ldots,\mathbf{f_{m-1}}\}| \mathbf{Y_1},\ldots,\mathbf{Y_m}) 
\notag \\
\label{R_ach1}
&= H( \mathbf{f_1}| \mathbf{Y_1}) + H(\mathbf{f_2}  | \mathbf{f_1},\mathbf{Y_1}, \mathbf{Y_2}) + \cdots 
 + H(\mathbf{f_m}| \mathbf{f_1},\ldots,\mathbf{f_{m-1}},\mathbf{Y_1},\ldots,\mathbf{Y_m}).
\end{align}

Note that the expression for $R_{1}$ in (\ref{R_ach1}) is equivalent to first expression of the $R_*$. 
%%%%%%%%%%%%%%%%%%%%%%%%SINEM%%%%%%%%
Applying the procedure above to the other $R_{i}$'s similarly, we see that  $\RC$ gives  the expression in (\ref{eq: opt_step1}). Evidently this
expression cannot exceed the lower bound in Theorem \ref{thm:lower_opt},
so the proof is complete.
%%%%%%%%%%%%%%%%%%%%%%%%%%%%%%%%%%%
\end{proof}

The \emph{coded caching} problem, which was introduced by Maddah-Ali and Niesen~\cite{maddah_cache}, is closely related to the index coding problem.
The coded caching problem consists of two phases, called 
the \emph{cache allocation} phase and the \emph{delivery} phase.
During the cache allocation phase, the server can decide how to populate
the caches of the various users. Each user then selects some
content to demand, and during the delivery phase the server must broadcast
a common message to all of the clients that allows each one to
meet its demand, given its cache contents. Thus the delivery
phase of the coded caching problem can be viewed as an index coding
problem.

If we perform the cache allocation as in \cite{Maddah-Ali}  and each user demands a different file at the delivery phase, then the instance of the index coding problem that results during the delivery phase satisfies the conditions in Proposition \ref{ach_step1} in a certain asymptotic sense. Therefore, CAPM gives the optimal rate for the delivery phase in this case.

\section{S-CAPM: A Heuristic Achieving Fractional Rates}
\label{sec:second_scheme}
%%%%%%%%%%%%%%%%%%%%%%%%%%%%%%%%%%%%%
Recall that CAPM can only give integer rates. However, some instances of the index coding problem are known to have  
non-integer optimal rates. We next show how CAPM can be modified
to give nonintegral rate bounds, and this modification performs
strictly better than CAPM in some examples. The extension is
not polynomial-time computable, however.
The following multi-letter extension of Proposition~\ref{ach_general2} is necessary.
\begin{proposition} 
\label{ach_general2} 
Let $t$ be a positive integer. The optimal rate $R_{opt}$ of an index coding problem is upper bounded by
\begin{align*}
\min \frac{1}{t}\sum_{I\subseteq [m]}  \left[ \max_{
i\in I}  H(U^t_{I} |\mathbf{{Y^t_i}}) \right]
\end{align*}
where the minimization is over the set of all random variables $U^t_{I}$ jointly distributed with $\mathbf{S^t}$ such that
\\
1) There exist functions

 $g_1(\cup_{1 \in I}U^t_I,\mathbf{Y^t_1})$,$\ldots$, $g_m(\cup_{m \in I}U^t_I,\mathbf{Y^t_m})$ 
  such that
\begin{align*}
& g_i(\cup_{i \in I}U^t_I,\mathbf{Y^t_i}) = \mathbf{f^t_i}(\mathbf{S}), \mbox{ for all } i \in [m].
\end{align*}
2) The auxiliary random variables $U^t_I$, $I \subseteq [m]$ are independent,
and for all collections of subsets $J_1, \ldots J_j$, $K_1, \ldots K_k$,
$L_1, \ldots, L_l$, and all subsets $\{i_1, \ldots, i_p\} \subseteq [m]$,
we have that
$(U^t_{J_1}, \ldots, U^t_{J_j})$ and $(U^t_{K_1}, \ldots, U^t_{K_k})$ are
conditionally independent given $((U^t_{L_1}, \ldots, U^t_{L_l}), 
(\mathbf{Y^t_{i_1},\ldots, Y^t_{i_p}})$, provided that the collections
 $J_1, \ldots J_j$ and  $K_1, \ldots K_k$ are disjoint. \\
3) Each $U^t_{I}$ is a (possibly empty) vector of bits, each of which
is the mod-2 sum of a set (possibly singleton) of source components.
\end{proposition}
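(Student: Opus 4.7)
The plan is to derive Proposition~\ref{ach_general2} from Proposition~\ref{ach_general} by an $n$-th extension (or ``blocking'') argument, viewing the $t$-letter version of the original problem as a new index coding problem in its own right. Specifically, I would define a new instance whose ``super-source'' is $\mathbf{S^t}$, regarded as a single draw from an i.i.d.\ uniform binary vector of length $t|\mathbf{S}|$. The side information at Decoder $i$ in this new instance is $\mathbf{Y^t_i}$, which is still a subset of the super-source bits (namely $t$ concatenated copies of its original side-information subset), and its demand is $\mathbf{f^t_i}$, again a subset of super-source bits. Because only the subset structure of demands and side information matters, the new instance satisfies all of the hypotheses of Section~\ref{sec:problem_defn_index} and hence falls within the scope of Proposition~\ref{ach_general}.

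Next, I would apply Proposition~\ref{ach_general} to this new instance, with the auxiliary random variables $U^t_I$, the source $\mathbf{S^t}$, the side informations $\mathbf{Y^t_i}$, and the demands $\mathbf{f^t_i}$ playing the roles of $U_I$, $\mathbf{S}$, $\mathbf{Y_i}$, and $\mathbf{f_i}$, respectively. Conditions 1)--3) of Proposition~\ref{ach_general2} are precisely conditions 1)--3) of Proposition~\ref{ach_general} for the new instance (condition 3) is automatic because the bits of $\mathbf{S^t}$ are themselves a collection of source bits over which one may form mod-2 sums). This yields that the optimal rate $R'_{opt}$ of the new instance, measured in bits per super-letter, is upper bounded by
\[
\min \sum_{I\subseteq [m]} \max_{i\in I} H(U^t_I \mid \mathbf{Y^t_i}),
\]
with the minimum taken over the described class of auxiliary variables.

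Finally, I would relate $R'_{opt}$ to the original optimal rate $R_{opt}$. Any $(n,M)$ code for the new instance is an $(nt,M)$ code for the original instance with identical block-error probability, and conversely any $(nt,M)$ code for the original instance (with $n$ a positive integer) is an $(n,M)$ code for the new instance with the same error probability. By the respective definitions of achievability, this yields $R'_{opt} = t \cdot R_{opt}$, so dividing the bound from the previous paragraph by $t$ gives the claim.

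The main obstacle will simply be bookkeeping: making sure that the correspondence between codes for the two instances preserves both the block-error probability notion and the structural conditions on the auxiliaries (in particular that the independence conditions 2) of Proposition~\ref{ach_general} translate verbatim when we pass to the super-letter view). No new coding-theoretic argument is needed once the reduction to Proposition~\ref{ach_general} has been set up carefully.
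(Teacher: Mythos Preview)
Your proposal is correct and is exactly the approach the paper takes: the paper's entire justification is the one-line remark that ``Proposition~\ref{ach_general} actually implies Proposition~\ref{ach_general2}, since the latter can be obtained by applying the former to blocks of size $t$.'' You have simply spelled out the bookkeeping behind that sentence, and the only direction you actually need is $R_{opt}\le R'_{opt}/t$, which your code-correspondence argument establishes.
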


Proposition~\ref{ach_general} can evidently be recovered from 
Proposition~\ref{ach_general2}
by taking $t = 1$. But Proposition~\ref{ach_general} actually implies
Proposition~\ref{ach_general2}, since the latter can be obtained
by applying the former to blocks of size $t$.

Now we provide a heuristic, which we call \emph{Split Coded Approximate Partition Multicast (S-CAPM)}, for selecting the auxiliary random variables in 
Proposition~\ref{ach_general2}.
The steps for S-CAPM are very similar to the ones for CAPM in Section~\ref{sec:first_scheme} except for the placement of leftover bits.

\textbf{Step 1} (Initialization) : This step is exactly the same as in CAPM,
except that we shall parametrize the solution differently. For each 
$k \in \{1,\ldots,|\mathbf{S}|\}$ and each subset $I \subseteq [m]$, let
$\theta(I,k)$ denote a variable in the interval $[0,1]$. We shall
interpret $\theta(I,k)$ as the ``fraction'' of source
bit $S_k$ that is allocated
to the auxiliary random variable $U_I$.  All such variables
are initially zero. 

For each source component $k$ we set $\theta(K \cup J^c,k) = 1$, where
$K$ and $J$ are chosen so that $S_k$ is in $f_{KJ}$. This is assuming
that $J \ne [m]$. As in CAPM, if $J = [m]$ then we set $\theta([m],k) = 1$.
Note that after this has been done for each $k$, we have
$$
\sum_{I \subset [m]} \theta(I,k) = 1
$$
for each $k$. This equality will remain true after Step 2.

\textbf{Step 2} : As with CAPM, the goal of Step 2 is to promote
``excess bits'' to a higher-level message. Since each auxiliary
random variable now stores fractional bits, however, both the
notion of ``excess'' and the promotion process are more involved.

Given the variables $\{\theta(I,k)\}$, let us define the ``conditional
entropy'' of $U_I$ given $\mathbf{Y_j}$ as
\begin{equation}
\label{eq:fracentropy}
H(U_I|\mathbf{Y_j}) = \sum_{k: S_k \notin \mathbf{Y_j}} \theta(I,k).
\end{equation}

Note that if $\theta(I,k) \in \{0,1\}$ for all $I$ and $k$, then
this reduces to the conditional entropy examined in Step 2 of CAPM.
We shall be most interested in $H(U_I|\mathbf{Y_j})$ when $j \in I$,
although the definition in (\ref{eq:fracentropy}) does not require this.

We then perform the following procedure for each subset $I$. The order
in which we process the different subsets $I$ is not specified by the
heuristic, except that if $|I_1| < |I_2|$ then $I_1$ must be processed
prior to $I_2$. For a given subset $I$, we define
\begin{equation}
\label{eq:step2tmin}
i^* = \min\{i: H(U_I|\mathbf{Y_i}) = \min_{l\in I} H(U_I|\mathbf{Y_l})\}
\end{equation}
and
\begin{equation}
\label{eq:step2tmax}
j^* = \min\{j: H(U_I|\mathbf{Y_j}) = \max_{l \in I} H(U_I|\mathbf{Y_l})\}.
\end{equation}

If $H(U_I|\mathbf{Y_{i^*}}) = H(U_I|\mathbf{Y_{j^*}})$ the we are done with this 
subset and may move to the next one. If $H(U_I|\mathbf{Y_{i^*}}) < H(U_I|\mathbf{Y_{j^*}})$,
then let $E$ denote the set of source bits that are ``excess''
$$
E = \{k: \theta(I,k) > 0 \ \text{and} \ S_k \in \mathbf{Y_{i^*}} \ \text{but} \
           S_k \notin \mathbf{Y_{j^*}}\}.
$$

We then select a source bit in $E$ to promote to higher-level
messages. Consider the set
\begin{equation}
\label{eq:excess}
\{k \in E: \theta(I,k) \le H(U_I|\mathbf{Y_{j^*}}) - H(U_I|\mathbf{Y_{i^*}})\}.
\end{equation}

If this set is nonempty, then there is at least one source bit
that is ``entirely excess.'' We shall select one such bit to
promote. Choose an arbitrary
\begin{align*}
k^* & \in \arg\max \{\theta(I,k): k \in E \ \text{and }  
             \theta(I,k) \le H(U_I|\mathbf{Y_j^*}) - H(U_I|\mathbf{Y_i^*})\}.
\end{align*}

We then set $\theta(I,k^*) = 0$ and we increment $\theta(I',k^*)$
for all $I'$ such that $I \subseteq I'$ and $|I'| = |I| + 1$
by the amount
$$
\frac{\theta(I,k^*)}{|I^c|}.
$$

In words, we view $\theta(I,k^*)$ as an amount of fluid that
is removed from $U_I$ and divided equally among the $I^c$ sets $I'$.

If there are no bits that are entirely excess, i.e., the set
in (\ref{eq:excess}) is empty, then choose an arbitrary
$$
k^* \in \arg\min \{\theta(I,k): k \in E\}.
$$

We then promote only the portion of $\theta(I,k)$ that is excess.
That is, we replace $\theta(I,k)$ with
$$
H(U_I|\mathbf{Y_j^*}) - H(U_I|\mathbf{Y_i^*})
$$
and divide the remaining part,
$$
\theta(I,k) - (H(U_I|\mathbf{Y_j^*}) - H(U_I|\mathbf{Y_i^*}))
$$
equally among all of the sets $I'$ such that $I \subset I'$ and
$|I'| = |I| + 1$. Observe that since $\theta(I,k)$ must be
rational for all $I$ and $k$, the process will eventually
terminate.

\textbf{Step 3} : As in CAPM, we now look for opportunities to exclusive-OR
source bits included in the same auxiliary random variable. First we 
convert the fractional bits described by the $\theta(\cdot,\cdot)$
variables to an integral number by increasing the parameter $t$. Observe
that $\theta(I,k)$ must be rational for each $I$ and $k$; let $t$ denote
the smallest positive integer so that $\theta(I,k) \cdot t$ is an integer
for all $I$ and $k$. Next recall that for each $k$
$$
\sum_{I \subseteq [m]} \theta(I,k) \cdot t = t.
$$

We then divide the block of $t$ bits corresponding to source component
$k$ among the $U_I$ variables so that the number of bits that $U_I$
receives is $\theta(I,k) \cdot t$. One can verify that the resulting
$U_I$ variables satisfy conditions 1)-3) in Proposition~\ref{ach_general2}. 
For each $U_I$ variable, we then look for exclusive-OR opportunities
as in Step 3 of CAPM, resulting in revised $U_I$ variables that remain
feasible.

We next illustrate S-CAPM with two examples.

\label{ex1}
\begin{figure}[t]
  \begin{center}
 \includegraphics[scale=0.3]{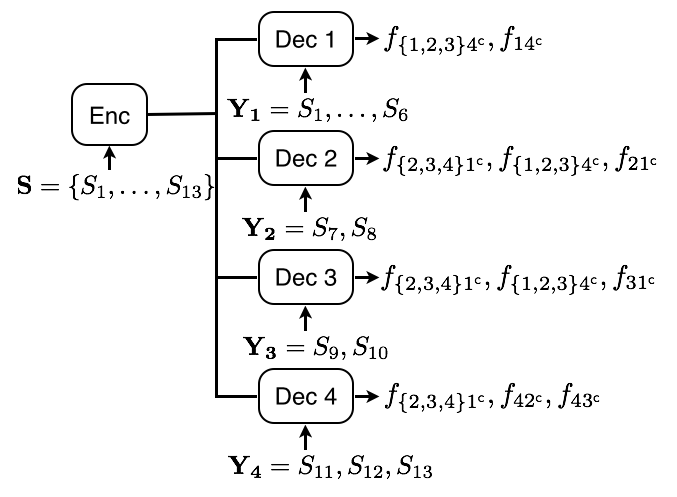}
\caption{Index coding example with $4$ users}
   \label{fig:ex1}
 \end{center}
\end{figure}

\begin{example}
In this case, there are $4$ decoders with side information and demands as shown in Fig. \ref{fig:ex1}, where 
 $f_{\{2,3,4\}1^\cc} = (S_1,S_2)$, $f_{21^\cc} = (S_3,S_4)$,
 $f_{31^\cc} =( S_5,S_6)$, $f_{42^\cc}=(S_7,S_8)$, $f_{43^\cc}=(S_9,S_{10})$, $f_{\{1,2,3\}4^\cc}=(S_{11},S_{12})$ and $f_{14^\cc}=S_{13}$. 
By using S-CAPM, we determine the messages and $t$ of the achievable scheme. 

\textit{Step 1 :} At the end of this step, all of the following $\theta(I,k)$'s are unity:
\\
 $\theta(\{1,4\},13)$, $\theta(\{1,2\},3),\theta(\{1,2\},4)$,
$\theta(\{1,3\}, 5), \theta(\{1,3\},6)$, $\theta(\{2,4\},7), \theta(\{2,4\},8)$, $\theta(\{3,4\},9), \theta(\{3,4\},10)$,
\\
$\theta([4],1), \theta([4],2),\theta([4],11),\theta([4],12)$.

\textit{Step 2}: 
We start with level-$2$ messages. Note that all demands in 
level-$2$ messages are excess bits. Since there are two possible level-$3$ messages that each demand can move, we set  all the corresponding $\theta(I,k)$s to $0.5$. At this point the nonzero $\theta(I,k)$'s  are
\begin{align*}
&\theta(\{1,2,4\},13), \theta(\{1,2,4\},7), \theta(\{1,2,4\},8), \theta(\{1,2,4\},3),
\\
& \theta(\{1,2,4\},4), 
\\
&\theta(\{1,3,4\},13), \theta(\{1,3,4\},9), \theta(\{1,3,4\},10), 
\theta(\{1,3,4\}, 5),
\\
&  \theta(\{1,3,4\},6),
\\
&\theta(\{1,2,3\}, 5), \theta(\{1,2,3\},6), \theta(\{1,2,3\},3),
\theta(\{1,2,3\},4), 
\\
&\theta(\{2,3,4\},7), \theta(\{2,3,4\},8), \theta(\{2,3,4\},9), \theta(\{2,3,4\},10), 
\\
&\theta([4],1), \theta([4],2),
 \theta([4],11),\theta([4],12),
\end{align*}
where $\theta(I,k) =0.5$ for all $|I|=3$ and $\theta(I,k) = 1$ for all $|I|=4$.
Now we move on to level-$3$ messages. 
Since there is only one level-$4$ message, $U_{1234}$, all possible excess bits at this stage will be moved to $U_{1234}$. We start with $U_{124}$.
Since $H(U_{124} | \mathbf{Y_1}) = 1.5$, 
$H(U_{124} | \mathbf{Y_2}) = 1.5$, $H(U_{124} | \mathbf{Y_4}) = 2$, we have $i^* =1$, $j^*= 4$. We declare, say, $S_3$ to be excess and we move
all of $\theta(\{1,2,4\},3)$ to $U_{1234}$.
Then we recalculate $H(U_{124} | \mathbf{Y_i})$, for $i \in \{1,2,4\}$. Now $i^* =2$, $j^*= 1$ and the fraction of $S_7$, i.e., $\theta(\{1,2,4\},7)$, becomes an excess bit.
We recalculate $H(U_{124} | \mathbf{Y_i})$, for $i \in \{1,2,4\}$ and all are equal. Hence we move on to another level-$3$ message, say  $U_{134}$. For this message fraction of $S_5$ and $S_9$ become excess bits and are moved to $U_{1234}$. Lastly,
all demands in $U_{123}$, $U_{234}$ are excess bits and moved to $U_{1234}$. This concludes Step $2$. 
\\
\textit{Step 3:} Since there is no $XOR$ opportunities as described in this step, we only require $t$ to be $2$.
Then the nonzero $\theta(I,k)$'s are
\begin{align*}
&\theta(\{1,2,4\},13), \theta(\{1,2,4\},8),\theta(\{1,2,4\},4),
\\
& \theta(\{1,3,4\},13), \theta(\{1,3,4\},10),\theta(\{1,3,4\},6),
\\
& \theta([4],1), \theta([4],2),\theta([4],11),
\theta([4],12), \theta([4],3), \theta([4],7), 
\\
&\theta([4],5), \theta([4],9),
 \theta([4],6),
 \theta([4],4), 
 \theta([4],8),  \theta([4],10),
\end{align*} 
where $\theta(I,k) =0.5$ for  $k \in \{4,6,8,10,13\} $ and the rest are $1$.
 As a result,  the rate coming from  level-$3$ and level-$4$ messages are  $2$ and  $8.5$ bits respectively and the total rate for this problem is $10.5$ bits.
 
From the linear programming lower bound\footnote{This lower bound is for the zero-error setting. However, it can be modified to handle vanishing block error probabilities. } stated in \cite{blasiak}, we get $10.5$ bits showing that S-CAPM is optimal.
\end{example}
%%%%%%%%%%%%%%%%%%%%%%%%%%%%%%%%%%%%%

\label{ex2}
\begin{figure}[t]
  \begin{center}
 \includegraphics[scale=0.35]{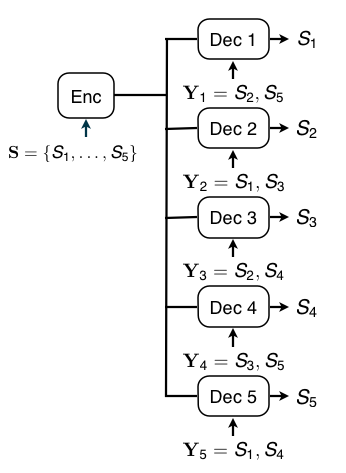}
\caption{Index coding example with $5$ users}
   \label{fig:ex2}
 \end{center}
\end{figure}
\begin{example}
We consider the ``$5$-cycle'' index coding problem shown in Fig.~\ref{fig:ex2}. Its optimal rate is found in  \cite{blasiak}.  When we apply S-CAPM, we determine the messages and $t$ as follows:

Step 1 : After this step we have the following nonzero $\theta(I,k)$'s.
\\
 $\theta(\{1,2,5\},1)$, $\theta(\{1,2,3\},2)$, $\theta(\{2,3,4\},3)$, $\theta(\{3,4,5\},4)$, $\theta(\{1,4,5\},5)$,
  where all $\theta(I,k) = 1$ for all $k \in [5]$.
 
Step $2$ : We start with level-$3$ messages. Note that all demands in level-$3$ messages are excess bits to be moved to level-$4$ messages. Since  each leftover bit has two possible level-$4$ messages to go,  $\theta(I,k) = 0.5$ for all $k \in [5]$. Then the nonzero $\theta(I,k)$'s are
\\
$\theta(\{1,2,3,5\},1), \theta(\{1,2,3,5\},2)$, 
\\
$ \theta(\{1,2,4,5\},1), \theta(\{1,2,4,5\},5)$, 
\\
$\theta(\{1,2,3,4\},2), \theta(\{1,2,3,4\},3)$, 
\\
$\theta(\{1,3,4,5\},4), \theta(\{1,3,4,5\},5)$, 
\\
$\theta(\{2,3,4,5\},3), \theta(\{2,3,4,5\},4)$.

Now, we move on to level-$4$ messages.
Since there are no leftover bits at level-$4$ messages and all nonzero $\theta(I,k) = 0.5$, we set $t=2$ concluding S-CAPM. 
Hence, the total rate becomes $2.5$ bits which is the optimal rate.
\end{example}
\section{ACKNOWLEDGMENT}
This work was supported by Intel, Cisco, and
Verizon under the Video-Aware Wireless Networks (VAWN) program.

\bibliographystyle{IEEEtran}
\bibliography{IEEEabrv,Index_Coding_Final}

% Generated by IEEEtran.bst, version: 1.13 (2008/09/30)
\begin{thebibliography}{10}
\providecommand{\url}[1]{#1}
\csname url@samestyle\endcsname
\providecommand{\newblock}{\relax}
\providecommand{\bibinfo}[2]{#2}
\providecommand{\BIBentrySTDinterwordspacing}{\spaceskip=0pt\relax}
\providecommand{\BIBentryALTinterwordstretchfactor}{4}
\providecommand{\BIBentryALTinterwordspacing}{\spaceskip=\fontdimen2\font plus
\BIBentryALTinterwordstretchfactor\fontdimen3\font minus
  \fontdimen4\font\relax}
\providecommand{\BIBforeignlanguage}[2]{{%
\expandafter\ifx\csname l@#1\endcsname\relax
\typeout{** WARNING: IEEEtran.bst: No hyphenation pattern has been}%
\typeout{** loaded for the language `#1'. Using the pattern for}%
\typeout{** the default language instead.}%
\else
\language=\csname l@#1\endcsname
\fi
#2}}
\providecommand{\BIBdecl}{\relax}
\BIBdecl

\bibitem{birk1}
Y.~Birk and T.~Kol, ``Informed-source coding-on-demand (iscod) over broadcast
  channels,'' in \emph{INFOCOM '98. Seventeenth Annual Joint Conference of the
  IEEE Computer and Communications Societies. Proceedings. IEEE}, vol.~3, 1998,
  pp. 1257--1264 vol.3.

\bibitem{birk}
Z.~Bar-Yossef, Y.~Birk, T.~S. Jayram, and T.~Kol, ``Index coding with side
  information,'' in \emph{Foundations of Computer Science, 2006. FOCS '06. 47th
  Annual IEEE Symposium on}, 2006, pp. 197--206.

\bibitem{shanmugam_local}
K.~Shanmugam, A.~Dimakis, and M.~Langberg, ``Local graph coloring and index
  coding,'' in \emph{Information Theory Proceedings (ISIT), 2013 IEEE
  International Symposium on}, July 2013, pp. 1152--1156.

\bibitem{shanmugam}
K.~Shanmugam, A.~G. Dimakis, and M.~Langberg, ``Graph theory versus minimum
  rank for index coding,'' in \emph{Information Theory (ISIT), 2014 IEEE
  International Symposium on}, June 2014, pp. 291--295.

\bibitem{blasiak}
A.~Blasiak, R.~Kleinberg, and E.~Lubetzky, ``Broadcasting with side
  information: Bounding and approximating the broadcast rate,''
  \emph{Information Theory, IEEE Transactions on}, vol.~59, no.~9, pp.
  5811--5823, 2013.

\bibitem{neely}
A.~Tehrani, A.~Dimakis, and M.~Neely, ``Bipartite index coding,'' in
  \emph{Information Theory Proceedings (ISIT), 2012 IEEE International
  Symposium on}, July 2012, pp. 2246--2250.

\bibitem{Maleki}
H.~Maleki, V.~Cadambe, and S.~Jafar, ``Index coding ;an interference alignment
  perspective,'' \emph{Information Theory, IEEE Transactions on}, vol.~60,
  no.~9, pp. 5402--5432, Sept 2014.

\bibitem{arbabjolfaei}
F.~Arbabjolfaei, B.~Bandemer, Y.-H. Kim, E.~Sasoglu, and L.~Wang, ``On the
  capacity region for index coding,'' in \emph{Information Theory Proceedings
  (ISIT), 2013 IEEE International Symposium on}, July 2013, pp. 962--966.

\bibitem{arbab_14}
F.~Arbabjolfaei and Y.-H. Kim, ``Local time sharing for index coding,'' in
  \emph{Information Theory (ISIT), 2014 IEEE International Symposium on}, June
  2014, pp. 286--290.

\bibitem{block}
N.~Alon, E.~Lubetzky, U.~Stav, A.~Weinstein, and A.~Hassidim, ``Broadcasting
  with side information,'' in \emph{Foundations of Computer Science, 2008. FOCS
  '08. IEEE 49th Annual IEEE Symposium on}, 2008, pp. 823--832.

\bibitem{nonlinear}
E.~Lubetzky and U.~Stav, ``Nonlinear index coding outperforming the linear
  optimum,'' \emph{Information Theory, IEEE Transactions on}, vol.~55, no.~8,
  pp. 3544--3551, 2009.

\bibitem{berger}
C.~Heegard and T.~Berger, ``Rate distortion when side information may be
  absent,'' \emph{Information Theory, IEEE Transactions on}, vol.~31, no.~6,
  pp. 727--734, 1985.

\bibitem{Kaspi:Absent}
A.~H. Kaspi, ``Rate-distortion function when side-information may be present at
  the decoder,'' \emph{Information Theory, IEEE Transactions on}, vol.~40,
  no.~6, pp. 2031--2034, 1994.

\bibitem{timo}
R.~Timo, T.~Chan, and A.~Grant, ``Rate distortion with side-information at many
  decoders,'' \emph{Information Theory, IEEE Transactions on}, vol.~57, no.~8,
  pp. 5240--5257, 2011.

\bibitem{ElGamal}
A.~E. Gamal and Y.-H. Kim, \emph{Network Information Theory}.\hskip 1em plus
  0.5em minus 0.4em\relax New York, NY, USA: Cambridge University Press, 2012.

\bibitem{Arikan:packet}
E.~Arikan, ``Some complexity results about packet radio networks,''
  \emph{Information Theory, IEEE Transactions on}, vol.~30, no.~4, pp.
  681--685, 1984.

\bibitem{vishwanath}
S.~Vishwanath, G.~Kramer, S.~Shamai, S.~Jafar, and A.~Goldsmith, ``Capacity
  bounds for {G}aussian vector broadcast channels,'' \emph{DIMACS SERIES IN
  DISCRETE MATHEMATICS AND THEORETICAL COMPUTER SCIENCE}, vol.~62, pp.
  107--122, 2004.

\bibitem{vishwanathP}
P.~Vishwanath and D.~Tse, ``On the capacity of the multiple antenna broadcast
  channel,'' \emph{DIMACS SERIES IN DISCRETE MATHEMATICS AND THEORETICAL
  COMPUTER SCIENCE}, vol.~62, pp. 87--106, 2004.

\bibitem{watanabe}
S.~Watanabe, ``The rate-distortion function for product of two sources with
  side-information at decoders,'' \emph{Information Theory, IEEE Transactions
  on}, vol.~59, no.~9, pp. 5678--5691, Sept 2013.

\bibitem{timo:complementary}
R.~Timo, A.~Grant, and G.~Kramer, ``Lossy broadcasting with complementary side
  information,'' \emph{Information Theory, IEEE Transactions on}, vol.~59,
  no.~1, pp. 104--131, 2013.

\bibitem{sinem_gauss}
S.~Unal and A.~B. Wagner, ``Vector {G}aussian rate-distortion with variable
  side information,'' in \emph{Information Theory Proceedings (ISIT), 2014 IEEE
  International Symposium on}, 2014.

\bibitem{sgarro}
A.~Sgarro, ``Source coding with side information at several decoders,''
  \emph{Information Theory, IEEE Transactions on}, vol.~23, no.~2, pp.
  179--182, 1977.

\bibitem{timo:gray}
R.~Timo, A.~Grant, T.~Chan, and G.~Kramer, ``Source coding for a simple network
  with receiver side information,'' in \emph{Information Theory Proceedings
  (ISIT), 2008 IEEE International Symposium on}, 2008, pp. 2307--2311.

\bibitem{timo:conditional}
R.~Timo, T.~J. Oechtering, and M.~Wigger, ``Source coding problems with
  conditionally less noisy side information,'' \emph{Information Theory, IEEE
  Transactions on}, vol.~60, no.~9, pp. 5516--5532, 2014.

\bibitem{laich}
T.~Laich and M.~Wigger, ``Utility of encoder side information for the lossless
  {K}aspi/{H}eegard-{B}erger problem,'' in \emph{Information Theory Proceedings
  (ISIT), 2013 IEEE International Symposium on}, 2013, pp. 3065--3069.

\bibitem{langberg}
M.~Langberg and M.~Effros, ``Network coding: Is zero error always possible?''
  in \emph{Communication, Control, and Computing (Allerton), 2011 49th Annual
  Allerton Conference on}, Sept 2011, pp. 1478--1485.

\bibitem{sinemITA}
S.~Unal and A.~B. Wagner, ``A rate-distortion approach to index coding,'' in
  \emph{ITA}.\hskip 1em plus 0.5em minus 0.4em\relax IEEE, 2014, pp. 1--5.

\bibitem{neely_dynamic}
M.~Neely, A.~Tehrani, and Z.~Zhang, ``Dynamic index coding for wireless
  broadcast networks,'' \emph{Information Theory, IEEE Transactions on},
  vol.~59, no.~11, pp. 7525--7540, Nov 2013.

\bibitem{maddah_cache}
M.~Maddah-Ali and U.~Niesen, ``Fundamental limits of caching,''
  \emph{Information Theory, IEEE Transactions on}, vol.~60, no.~5, pp.
  2856--2867, May 2014.

\bibitem{Maddah-Ali}
M.~A. Maddah-Ali and U.~Niesen, ``Decentralized caching attains order-optimal
  memory-rate tradeoff,'' \emph{CoRR}, vol. abs/1301.5848, 2013.

\bibitem{Rockafellar:Convex}
R.~T. Rockafellar, \emph{Convex Analysis}.\hskip 1em plus 0.5em minus
  0.4em\relax Princeton University Press, 1970.

\end{thebibliography}

\appendices
\section*{Appendix}
\begin{lemma}
	\label{lemma:cardinality}
     The minimum in~(\ref{eq:Umin}) is unaffected by the presence of 
     the cardinality bounds in condition 5).
\end{lemma}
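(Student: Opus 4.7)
The plan is to apply the Fenchel--Eggleston--Carath\'eodory support lemma iteratively to the auxiliary variables $U_{\sigma(1)}, U_{\sigma(2)}, \ldots, U_{\sigma(m)}$, in that order, a standard technique for bounding alphabet sizes in network information theory. Without loss of generality I take $\sigma$ to be the identity permutation, and I proceed by induction on $i$: assume that $U_1, \ldots, U_{i-1}$ already satisfy the bound in condition 5), and reduce $|\mathcal{U}_i|$ while keeping both the side-information kernel $P(Y_1, \ldots, Y_m \mid S)$ and the downstream conditional $P(U_{i+1}, \ldots, U_m \mid S, U_1, \ldots, U_i)$ fixed.

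With those other distributions frozen, each quantity of interest becomes a linear functional of $P(U_i \mid S, U_1, \ldots, U_{i-1})$. I need to preserve three groups of quantities: (i) the joint distribution $P(S, U_1, \ldots, U_{i-1})$, which carries $|\mathcal{S}|\prod_{j=1}^{i-1}|U_j| - 1$ independent entries and automatically locks in the earlier mutual information terms of (\ref{eq:Umin}), the Markov condition 3), and the distortion constraints (\ref{eq:gfun}) for decoders $1, \ldots, i-1$ (the latter because $P(Y \mid S)$ is fixed); (ii) the tail sum $\sum_{k=i}^{m} I(S; U_k \mid U_1, \ldots, U_{k-1}, Y_1, \ldots, Y_k)$ of the remaining terms in $\bar{R}_\sigma(D)$, contributing one further scalar constraint (it suffices to preserve this aggregate since only it enters the objective); and (iii) the $m-i+1$ remaining distortion expectations $\mathbb{E}[d(S, g_k(U_k, Y_k))]$ for $k = i, \ldots, m$, so that (\ref{eq:gfun}) continues to hold.

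Adding these yields $|\mathcal{S}|\prod_{j=1}^{i-1}|U_j| + (m-i+1)$ linear constraints on $P(U_i \mid S, U_1, \ldots, U_{i-1})$. The support lemma then furnishes a replacement mixing distribution with support size at most $|\mathcal{S}|\prod_{j=1}^{i-1}|U_j| + (m+2-i)$ matching all preserved quantities exactly; in particular the value of $\bar{R}_\sigma(D)$ is unchanged and the new tuple remains feasible, giving precisely the bound in condition 5). The Markov chain of condition 3) is preserved automatically, since only $P(U_i \mid S, U_1^{i-1})$ is modified while $P(Y_1^m \mid S)$ is left alone; the reconstructions $g_k$ in condition 4) also need no modification because the new $U_i$ takes values in a subset of the original alphabet. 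Iterating $i = 1, \ldots, m$ then completes the proof.

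The main thing to watch is the bookkeeping: the key observation is that preserving the entire joint $P(S, U_1, \ldots, U_{i-1})$, rather than only the marginal $P(S)$, is what simultaneously handles all decoders with index less than $i$ and, together with fixing the downstream kernel, makes the count collapse to exactly $|\mathcal{S}|\prod_{j=1}^{i-1}|U_j| + (m+2-i)$. Attempting instead to preserve each earlier mutual information term and each earlier distortion separately would produce a far larger tally and fail to recover the claimed cardinality bounds.
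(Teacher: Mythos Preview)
Your proposal is correct and follows essentially the same route as the paper's own proof: both arguments fix the identity permutation, apply Carath\'eodory (the support lemma) iteratively to $U_1,\ldots,U_m$, and at step $i$ preserve the joint $P(S,U_1,\ldots,U_{i-1})$ (accounting for $|\mathcal{S}|\prod_{j<i}|U_j|-1$ constraints), one scalar for the remaining portion of the objective, and the $m-i+1$ distortion expectations for decoders $i,\ldots,m$, yielding exactly the bound $|\mathcal{S}|\prod_{j<i}|U_j|+(m+2-i)$. The only cosmetic difference is that the paper parametrizes by the marginal $P(U_i)$ mixing over the conditionals $P(\,\cdot\mid U_i=u_i)$, whereas you phrase it as varying $P(U_i\mid S,U_1^{i-1})$; the substance is identical.
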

	
	\begin{proof}
	Without loss of generality let $\sigma(i) = i$ for all $i \in [m]$.
	Let $P_{U_1,\ldots, U_m,{S}}(u_1,\ldots, u_m,s)$ 
	denote the joint distribution of $(U_1,\ldots, U_m,{S})$. 
	We follow a procedure similar to that of \cite{watanabe}.
	First we find the bound on the cardinality of $U_1$, then
    $U_2$, etc. 
	
	To begin with, we consider the following  $(|\mathcal{S}| -1) + 1 + m$ functions of $P_{U_2,\ldots,U_m,{S}|U_1}(.,\ldots,.|u_1)$, denoted as
	 $g^0_{s}, s \in |\mathcal{S}|-1$ and $g^0_{lo}, g^0_{d_1},\ldots,g^0_{d_m}$.
	\begin{align}
	\label{car:pmf}
	&g^0_{s}(P_{U_2,\ldots, U_m,{S}|U_1}(.,\ldots,.|u_1)) 
= \sum_{u_2,\ldots, u_m}P_{U_2,\ldots,U_m,{S}|U_1}(u_2,\ldots, u_m,s|u_1),
	\end{align}
	for $s = 1,\ldots, |\mathcal{S}| -1$ and
	\begin{align}
	&g^0_{lo} (P_{U_2,\ldots,U_m,{S}|U_1}(.,\ldots,.|u_1)) 
	 \notag \\
	&=	H({S}|{Y_1})- H({S}| U_1=u_1, {Y_1}) 
	+ I( {S}; U_{2}|	U_{1}= u_1, {Y_{1}},{Y_{2}}) +\cdots
	 \label{eq:car_exp}
+ I({S}; U_{m}| U_{1}=u_1, \ldots, U_{(m-1)},{Y_{1}},\ldots, {Y_{m}}),
	\end{align}
	 and
	 \begin{align*}
	& g^0_{d_1} (P_{U_2,\ldots, U_m,{S}|U_1}(.,\ldots,.|u_1)) 
= \mathbb{E}	[d({S},g_1(u_1,{Y}_1))|U_1=u_1]
	\\
	&\vdots
	\\
	& g^0_{d_m} (P_{U_2,\ldots, U_m,{S}|U_1}(.,\ldots,.|u_1)) 
= \mathbb{E}	[d({S},g_m(U_m,{Y}_m))|U_1=u_1]
	 \end{align*}
	 
	 Then by Carath\'{e}odory's
theorem~\cite[Theorem~17.1]{Rockafellar:Convex}
	 we can find a random variable $U^1_1$ with $|U^1_1| \leq |\mathcal{S}| + m + 1$ and random variables $U^1_2, \ldots,U^1_m$ 	where $P_{U^1_1,\ldots,U^1_m,{S}}(u_1,\ldots,u_m,s)$ $= P_{U_1^1}(u_1)P_{U_2,\ldots,U_m,{S}|U_1}(u_2,\ldots,u_m,s|u_1)$ 	such that from (\ref{car:pmf})  $P_{{S}}$ is preserved  and from (\ref{eq:car_exp}) 
	 \begin{align*}
	&I({S};U^1_1|{Y_1}) + I( {S}; U^1_{2}|U^1_{1},{Y_{1}},{Y_{2}}) +\cdots
+ I( {S}; U^1_{m}|U^1_{1},\ldots,U^1_{(m-1)},{Y_{1}},\ldots,	{Y_{m}}) 
	\\
	&= I({S};U_1|{Y_1}) + I( {S}; U_{2}| U_{1}, {Y_{1}},{Y_{2}}) 
+ I( {S}; U_{m}| U_{1},\ldots, U_{(m-1)},{Y_{1}},\ldots,{Y_{m}}),
	\end{align*}
	and we have
	 \begin{align*}
	& \mathbb{E}[d({S},g_1(U^1_1,{Y}_1)] = \mathbb{E}	[d({S},g_1(U_1,{Y}_1))]
	\\
	&\vdots
	\\
	& \mathbb{E}[d({S},g_m(U^1_m, {Y}_1)] = \mathbb{E}	[d({S},g_m(U_m,{Y}_1))].
	 \end{align*}
	 
Now we consider the following $|U_1||\mathcal{S}| + (m-1)$  functions of
$P_{U_1^1,U^1_3,\ldots,U^1_m,{S}|U^1_2}(.,\ldots,.| u_2)$.
	\begin{align}
	&g^1_{s}(P_{U_1^1,U^1_3,\ldots,U^1_m,{S}|U^1_1}(.,\ldots,.| u_2)) 
	\label{eq:car_pmf2}
	= \sum_{u_3,\ldots,u_m}	P_{U_1^1,U^1_3,\ldots,U^1_m,{S}|U^1_2}(.,	\ldots,.| u_2),
	\end{align}
	for $(u_1,s) = 1,\ldots, |{U_1}||\mathcal{S}| -1$ and
	\begin{align}
	&g^1_{lo} (P_{U_1^1,U^1_3,\ldots,U^1_m,{S}|U^1_2}(.,\ldots,.|u_2)) 	
	\notag \\
	&=  - H( {S}| U^1_{1}, U^1_{2}=u_2,{Y_{1}},{Y_{2}}) 	
 + I( {S}; U^1_{3}| U^1_{1}, U^1_2=u_2,Y_{1}, Y_2,Y_3)
	\cdots
	 \notag \\
	 \label{eq:car_exp2}
	&\quad + I( {S}; U^1_{m}| U^1_{1}, U^1_2=u_2,U^1_3\ldots,U^1_{(m-1)},{Y_{1}},\ldots, {Y_{m}}),
	\end{align}
	 and
	 \begin{align*}
	& g^1_{d_2} (P_{U_1^1,U^1_3,\ldots,U^1_m,{S}|U^1_2}(.,\ldots,.|u_2)) 
= \mathbb{E}[d({S},g_2(u_2,{Y}_2)| U^1_2=u_2]
	\\
	\vdots
	\\
	& g^1_{d_m} (P_{U_1^1,U^1_3,\ldots,U^1_m,{S}|U^1_2}(.,\ldots,.|u_2)) 
= \mathbb{E}[d({S},g_m(U^1_m,{Y}_m)| U^1_2=u_2].
	 \end{align*}
	 
	Again by Carath\'{e}odory's
theorem, there is a random variable $U^2_2$ with 
	$|U^2_2| \leq |U_1||\mathcal{S}| + m$ and random variables $U^2_3,\ldots,U^2_m$ where 
	$P_{U_1^1, U^2_2,\ldots,U^2_m,{S}}(u_1,\ldots,u_m,s)$ is equal to 
	$P_{U^2_2}(u_2)P_{U_1^1,U^1_3,\ldots,U^1_m,{S}|U^1_2}(u_1,u_3,\ldots,u_m,s|u_2)$ 	such that $P_{U_1^1{S}}$ is preserved (from \ref{eq:car_pmf2}). 
	
	Since $P_{U_1^1{S}}$ is preserved,  $\mathbb{E}[d({S},g_2(U^1_1,{Y}_1)]$, $H( {S}| U^1_{1},{Y_{1}},{Y_{2}})$, and $I({S};U_1^1|{Y_1})$ are preserved. Also, from (\ref{eq:car_exp2}) we have
	 \begin{align*}
	&I({S};U_1^1|{Y_1}) + I( {S}; U^2_{2}|U^1_{1}, {Y_{1}},{Y_{2}}) +\cdots
+ I( {S}; U^2_{m}|U_1^1,U^2_{2},\ldots,U^2_{(m-1)},{Y_{1}},\ldots,{Y_{m}}) 
	\\
	&= I({S};U^1_1|{Y_1}) + I( {S}; U^1_{2}| U^1_{1},{Y_{1}},{Y_{2}}) 
 + I( {S}; U^1_{m}| U^1_{1},\ldots,U^1_{(m-1)},{Y_{1}},\ldots,	{Y_{m}}). 
	\end{align*}
	
	Lastly, we have the following equalities.
	 \begin{align*}
	& \mathbb{E}[d({S},g_2(U^2_2,{Y}_2)] = \mathbb{E}[d({S},g_2(U^1_2,{Y}_2))]
	\\
	&\vdots
	\\
	& \mathbb{E}[d({S},g_m(U^1_m,{Y}_m)] 
	 = \mathbb{E}[d({S},g_m(U^1_m{Y}_m))].
	 \end{align*} 
	 
By applying the above procedure to $U^2_3,\ldots,U^2_m$ consecutively and relabeling $(U_1^1,U^2_2,\ldots)$ as $(U_1,\ldots,U_m)$ we obtain the cardinality bounds as stated in the condition 5) of Theorem \ref{lower_general}.
	\end{proof}
	
\end{document}